\theoremstyle{plain}
\def\pder#1{\frac{\partial }{\partial #1} }
\def\Pf{{\bf Pf}}
\DeclareMathOperator{\sgn}{sgn}
\def\Z{\mathbb{Z}}                          
\def\R{\mathbb{R}}  
\def\G{\mathbb{G}}
\def\V{\mathcal{V}}
\def\E{\mathcal{E}}
\def\eps{\varepsilon}
\newtheorem{theorem}{Theorem}[section]
\newtheorem{conj}{Conjecture}
\newtheorem{corollary}[theorem]{Corollary}
\newtheorem{lemma}[theorem]{Lemma}
\newtheorem{proposition}[theorem]{Proposition}
\newtheorem{definition}[theorem]{Definition}
\newtheorem{conjecture}[conj]{Conjecture}
\newcommand{\be}[1]{\begin{equation}\label{#1}}
\newcommand{\ee}{\end{equation}}
\numberwithin{equation}{section}
\newcommand{\ba}[1]{\begin{align}\label{#1}}
\newcommand{\ea}{\end{align}}
\numberwithin{equation}{section}
\newcommand{\ben}{\begin{equation*}}
\newcommand{\een}{\end{equation*}}
\numberwithin{equation}{section}
\newcommand{\calB}{\mathcal{B}}
\newcommand{\calF}{\mathcal{F}}
\newcommand{\bbH}{\mathbb{H}}
\newcommand{\bbZ}{\mathbb{Z}}
\def\Z{\mathbb{Z}}                          
\def\be{\begin{equation}}
\def\ee{\end{equation}}
\newcommand{\n}{\mathbf{n}}
\newcommand{\m}{\mathbf m}
\newcommand{\rk}[1]{\bgroup\color{red}%
  \par\medskip\hrule\smallskip%
  \noindent\textbf{#1}%
  \par\smallskip\hrule\medskip\egroup}
\date{December 5, 2021}
\title[On the  scaling limits of critical Ising and $\varphi^4$ models]
{A geometric perspective on the  scaling limits \\ of critical Ising and $\varphi^4_d$ models}
\author{Michael Aizenman}
\address{Departments of Physics and Mathematics, Princeton University, Princeton NJ, USA}
\email{aizenman@princeton.com}
\begin{document}

\begin{abstract}  
The lecture delivered at the \emph{Current Developments in Mathematics} conference (Harvard-MIT, Jan.  2021) focused on the recent proof of the Gaussian structure of the scaling limits of  the critical Ising and $ \varphi^4$ fields in the marginal case of four dimensions (joint work with Hugo Duminil-Copin).  These notes expand on the background of the question addressed by this result, approaching it from  two partly overlapping perspectives: one concerning critical phenomena in statistical mechanics and the other functional integrals over Euclidean spaces which could serve as a springboard to quantum field theory.    
We start by recalling some basic results concerning the models' critical behavior in different dimensions. 
The analysis is framed in the models' stochastic geometric random current representation.  It yields  intuitive explanations as well as tools for proving a range of dimension dependent results, including: the emergence in $2D$ of Fermionic degrees of freedom, the non-gaussianity of the scaling limits in  two dimensions, and conversely the  emergence of Gaussian behavior in four and higher dimensions.   
To cover the marginal case of $4D$  the tree diagram bound which has sufficed for higher dimensions  needed to be supplemented by a singular correction.  Its presence was established through multi-scale analysis  in the recent  work with HDC.

\end{abstract}

\maketitle

\tableofcontents

\section{Introduction}

Scaling limits of critical systems have been of obvious interest for the theory of critical phenomena in statistical mechanics, and beyond that also for the constructive program of quantum field theory.   
A recent result in this area is  the long awaited proof of the ``marginal triviality'' of the scaling limits of critical $4D$ Ising and $\varphi^4_4$ models.  It was  derived jointly with Hugo Duminil-Copin, and its details are presented in ~\cite{AizDum21}.   As the manuscript  is directly available,  the goal of these notes is to  present some of the background, and place the discussion of four dimensions in the broader context of the  system's different behavior in other dimensions.  

We start with the presentation of the field theoretic perspective on the question addressed in \cite{AizDum21}, and its relation to the natural statistical mechanics question concerning the critical model's fluctuating magnetization field.

 The discussion then pivots to a class of ferromagnetic one component spin systems, which includes both the Ising model and the lattice  $\varphi^4$ fields.  Their critical behavior, which is dimension dependent, can be traced to the  stochastic geometry of  corresponding systems of \emph{random currents} whose definition is recorded below.   Their intersection and percolation properties are in direct relation with the structure of the correlation functions.

To demonstrate the dependence on dimension, and explain the transition which occurs at $4D$  we list here some of  the model's dimension dependent features.    
Specifically (postponing the references, that are provided below):  
\begin{itemize}
  \setlength\itemsep{0.5em}
\item [$\partial$(2D):] 
Along the boundaries of planar Ising models,  the multi-spin correlations are given by Pfaffians of the boundary-to-boundary two point function.  This is indicative of the model's integrability through \emph{emergent} fermionic degrees of freedom.    
\item [2D:]   In general, planar Ising models are integrable, in the sense that the model's free energy can be expressed as the a sum of the contribution of non-interacting degrees of freedom.   Unlike the boundary case, in the bulk these are related to  order-disorder fermionic variables which propagate with complex phases  (expressed in Kac-Ward path amplitudes, Kaufman spinors, Kadanoff order-disorder variables,  Smirnov parafermions).   However, the the model's original spin variables scale into an interactive field theory. 
\item [3D:]   In three dimension the correlations are expected to have as their scaling limit an interactive $\varphi^4$ field theory.  The challenge of elucidating the non-trivial critical behavior in $3D$, which is of obvious interest for statistical mechanics, continues to attract the attention of both  mathematicians and theoretical physicists.
\item [D$\geq$4:] In dimensions $d\geq 4$ the critical models' scaling limits are Gaussian, and as such they correspond to  non-interactive field theories.    
\end{itemize}
While the author's CMD 2020(/21) lecture was focused on the last result, which was derived in the recent joint work with Hugo Duminil-Copin~\cite{AizDum21}, 
the purpose of these notes is to expand on its background and natural context.  \\

The presence of stochastic geometric scaffoldings behind the spin correlations is not a new observation, but even so its range of applicability continues to spread.  Such representations yield insights into duality relations and tools for rigorous results on the critical behavior in models which are out of reach of explicit solutions.   
Given the focus of these notes we shall not delve here into the full range of such methods, and focus on the \emph{random current representation} (RCR).  It provides a useful tool for the analysis of Ising models, and other systems with similar spin-flip symmetry.  
It  allows to  cast in intuitive stochastic geometric terms the results of some of the delicate cancellations which affect the structure of the model's correlation functions. 

The study of correlation functions of critical models on $\Z^d$ and the  statistical field theories which emerge in their scaling limits, has also attracted attention as a possible springboard towards quantum field theory.   Let us start by presenting this relation, borrowing heavily from the introduction of~\cite{AizDum21}.

\section{Constructive quantum field theory  \\  and functional integration}\label{sec:1.2}

\subsection{An outline} \mbox{  } \\[-1ex]  

Quantum field theories with local interaction  play an important role in the physics discourse, where they appear in  subfields ranging from high energy to condensed matter physics.  
The mathematical challenge of proper formulation of this concept led to  constructive quantum field theory programs (CQFT).  A path  to the goal was charted through the  proposal to define quantum fields as operator valued distributions whose essential properties are  formulated as the Wightman axioms~\cite{Wig56}.  Wightman's reconstruction theorem  allows one to recover this structure from the collection of the corresponding correlation functions, defined over 
the Minkowski space-time.  An essential feature of quantum physics is the unitarity of the time evolution operator, generated by a self adjoint operator which, preferably, is bounded below.   A potential link with statistical mechanics is suggested by the  formal step of analytic continuation of the time variable, under which  unitary evolution in time, $e^{it H}$ acting in the suitable  Hilbert space of states,  is related to the semigroup $e^{-\beta H}$ which is of independent interest as the (unnormalized) density operator of the thermal state at  temperature $1/\beta$.   The relation was placed on firm footing by the Osterwalder-Schrader theorem~\cite{OS73,OS75} which lays down a program in which correlation functions 
satisfying the Wightman axioms may potentially be obtained through analytic continuation from those of random distributions defined over the corresponding Euclidean space that meet a number of conditions: suitable analyticity, permutation symmetry, Euclidean covariance, and reflection-positivity.   

More explicitly, seeking natural candidates for scalar \emph{Euclidean fields}, one ends up with the task of constructing  probability averages over random distributions $\varphi(x)$, for which the expectation value of  functionals   $F(\varphi)$ would  have properties fitting the  formal expression  
\be \label{Phi_EV}
 \langle F \rangle  = \frac1{\rm norm}   \int  F(\varphi) \exp[-\mathcal A(\varphi)]   \, d \varphi    ,
\ee 
where  $d \varphi $ is an integral over  (generalized) functions and 
 $ \mathcal A(\varphi) $ is a local `action'.  In this context, one is tempted to consider expressions of the form
  \be \label{eq:FI} 
\mathcal A(\varphi) =  (\varphi, K  \varphi) + \int_{\R^d}   P(\varphi(x)  )  \, dx
\ee
with  $(\varphi, K  \varphi)$ a positive definite and reflection-positive quadratic form, and 
$P(\varphi  )$ a polynomial (or a more general function) whose terms of  order $\varphi^{2k}$ are interpreted heuristically as representing $k$-particle interactions.   An example of a quadratic form with the above properties  (at $J, b>0$) and also rotation invariance is  
\be \label{free_field}
  (\varphi, K  \varphi)  =  \int_{\R^d}  \left( J \| \nabla \varphi(x) \|^2 + b \vert \varphi(x) \vert^2 \right)   \, dx.
\ee

The functionals $F(\varphi)$ to which \eqref{Phi_EV} is intended to apply include  the smeared averages  
\be\label{eq:functional}
\Phi_f(\varphi) :=  
\int_{\R^d} f(x) \varphi(x) dx
\ee
 associated with continuous functions of compact support $f\in C_0(\R^d)$.    
By linearity, the expectation values of products of such variables take the form
\be \label{eq:Schwinger} 
\boxed{ \langle \prod_{j=1}^{n} \Phi_{f_j} \rangle   :=   \int_{(\R^d)^n}  d x_1 \dots d x_n  \, S_n(x_1,\dots, x_n) \,\prod_{j=1}^n  f(x_j) }
\ee 
with  $S_n(x_1,\dots, x_n)$  characterizing the  probability measure on the space of distribution which corresponds to the expectation value $\langle \,\cdot\,   \rangle$.    This is summarized by saying that 
 in a distributional sense 
\be
\langle \prod_{j=1}^{n} \varphi(x_j) \rangle = S_n(x_1,\dots, x_n)   \, , 
\ee
with $S_n$ referred to as the {\em Schwinger functions} of the corresponding euclidean field theory.  

The joint distribution of the fields can also be described in terms of the Ursell functions $U_n$  
which bear the following relation to  the moment generating functional 
\be  \label{U_def} 
\langle \exp \left \{ \Phi_f \right \} \rangle = \exp\left\{\sum_{n=1}^\infty \frac{1}{n!} \int_{(\R^d)^n}  d x_1 \dots d x_n  \, U_n(x_1,\dots, x_n) \,\prod_{j=1}^n  f(x_j), 
 \right\} 
\ee 
for all test functions $f\in C_0(\R^d)$. 

A relatively simple class of Euclidean fields are the Gaussian fields, which result from $H$ with  only quadratic terms.   More generally, Gaussian fields of flip-symmetric distribution (with $S_{2n+1}(x_1,...,x_{2n+1}) \equiv 0$) are characterized by: 
\be 
 U_{2n}(x_1,..., x_{2n}) = 0 \quad \mbox{$\forall n\geq 2$\,, \quad and \quad} U_2(x_1,x_2) = S_2(x_1,x_2)  \,
\ee 
or, equivalently, the  Wick's law:
\be \label{triv}
\boxed{S_{2n}(x_1,\dots,x_{2n})  =  \sum_{\pi}  \prod_{j=1}^n S_2(x_{\pi(2j-1)},x_{\pi(2j)})\, =: \, \mathcal {G}_n[S_2](x_1,\dots,x_{2n}) 
}  
\ee 
where $\pi$ ranges over pairing permutations of $\{1,\dots,2n\}$.   

The presence  in the continuum theory of  particle interactions is expressed in the deviation from Wick's law.  In situation in which  $S_2(x,y)$ has the interpretation of a propagation amplitude,  $U_{2n}$ can be viewed as representing  effective $n$-particle interactions.   Gaussian fields, for which such multi-particle interactions are absent, have on occasions been referred to as {\em trivial}.\footnote{{\em Triviality} is a misnomer when applied to a scaling limit, since even in such cases the determination of the limit's  two point function $S_2$ would require some non-trivial analysis.} \\

However, even in the simple looking case of the Gaussian \emph{free field},  with   $H$ consisting of just the quadratic term \eqref{free_field},  one finds that \eqref{Phi_EV} requires some care.   It cannot be interpreted literally 
since ipso-facto the resulting measure is supported on non-differentiable functions for which the integral in the exponential, evaluated through the function's Fourier transform, is almost surely divergent.  Furthermore, in dimensions $d>1$, the function $\varphi$  itself  fails to be locally square integrable.   Generally  known methods allow to tackle  these difficulties.  However, beyond one dimension (in which case one gets the Wiener measure)  the result is not  a random function but a random distribution.  This in turn complicates the interpretation of the next polynomial term in \eqref{eq:FI}.

The heuristic ``renormalization group'' analysis by Wilson~\cite{Wil71} indicates that in low enough dimensions, specifically $d<4$ for $\varphi^4$,  the difficulties can be tackled through cutoff-dependent counter-terms.  
In particular, to produce what would be regarded as a $\varphi^4$ field theory the local interaction term in  \eqref{eq:FI}  is taken to be of the form 
\be \label{eq:P4}
 P(\varphi) =\lambda \varphi^4\,-b\varphi^2 \,, 
\ee
allowing for $\lambda$ and $b$ to be adjusted as the \emph{ultraviolet} (finite volume) and \emph{infrared} (short distance) cutoffs  cutoffs are removed (the first $R\to \infty$ and the second $a\to 0$).   
The goal of this process is to produce convergence of the Schwinger functions $S_n(x_1,.., x_n)$ to a non-trivial limit.   

\subsection{Lattice regularization} \mbox{  } \\[-1ex]  

Partially successful attempts to carry rigorously a constructive project along the above lines  have been the focus of a substantial body of works.    The means employed have included various versions of rigorous renormalization group analysis, with counter-terms, which are allowed to depend on regularizing cutoffs, scale decomposition, renormalization group flows, and/or regularity structures.   

In our discussion we focus on the attempts to construct a $P(\varphi)$  expectation value functional  \eqref{Phi_EV} 
as the  limit of its discretized  versions
\be \label{Phi_SM}
\boxed{\langle F(\varphi) \rangle  = \frac 1 {\rm norm}   \int  F(\varphi) \exp{[-H(\varphi)]}  \prod_{x\in V_{a,R}}
e^{- \lambda \varphi_x^4 + b \varphi_x^2}  d \varphi_x 
 .}  
\ee 
Here $\{\varphi_x\}_{x\in V_{a,R}}$ is a collection of variables associated with the sites of the finite graph
\be \label{eq:a_R} 
\mathcal V_{a,R} = (a \Z)^d \cap [-\mathcal L,\mathcal L]^d \,  
\ee 
with an  a-priori distribution of the form $\rho(d \varphi_x) = e^{- \lambda \varphi_x^4 + b \varphi_x^2}  d \varphi_x $ 
and $H$ is  the short range interaction
\be \label{H1}
H(\varphi) = -\frac{1}{2} \sum_{\{x,y\}\subset \mathcal V_{a,\mathcal L}}J\,(\varphi_x-\varphi_y )^2    \, . 
\ee 

Quantities whose joint distribution we track  in the scaling limit are based on the collections of random variables of the form
\be \label{def_Tf_scaled_FT}
\boxed{T_f(\varphi) :=  \frac{1}{\sqrt{\langle \big[\sum_{x\in a\Z^d \cap [-1,1]^d}  \varphi_x \big]^2\rangle}}\sum_{x\in\mathcal V_{a,\mathcal L} } f(x ) \, \varphi_x  }  
\ee
where $f$ ranges over $C_0(\R^d)$ - the collection of compactly supported continuous functions.

\begin{definition} Given a sequence of joint values of $(a, \mathcal L,  J,\lambda, b)$ for which $a \to 0$ and $ \mathcal L \to \infty$, we say that model  scaling limit exists  if for any finite collection of test functions $f\in C_0(\R^d)$ the joint distributions of the random variables $\{T_{f}(\varphi)\}$ has a proper limit (a probability distribution with respect to which the variables are almost surely finite).  
\end{definition}

Through a standard probabilistic construction,  the limit can be presented as a random field $\varphi$.
\footnote{By the Kolmogorov extension theorem, one may start by selecting  sequences of the parameter values so as to establish convergence in distribution for a countable collection of test functions $f$, which is dense in $C_0(\R^d)$, and then use the uniform local integrability of the rescaled correlation function and of the limiting Schwinger functions, to extend the statement by continuity arguments to all $f\in C_0(\R^d)$.  One may then  recast the limiting variables as associated with a single random $\varphi$, as in \eqref{eq:functional}.}\\ 

A  regularity condition which it is natural to add here is that  $S_2(x,y) < \infty $ for non-coincidental pairs ($x\neq y$) and 
\be 
\lim_{\|x-y\|\to \infty } S_2(x,y) = 0
\ee

A point of fundamental importance here is that the correlations of the resulting field are observed on a scale ($L$) with is much larger than $a$ -- the range of the interactions due to which they emerge.    For that to occur, the 
system's parameters $(J, \lambda, b)$ need to be very close to the critical  manifold along which the lattice system's correlation length $\xi$ diverges (on the scale set by $a$.)   

Thus, the scaling limits discussed here are constructed the continuum limits of the  distribution of the variables $T_{f,L}$ induced by \eqref{Phi_EV} in different limits, for which 
the order of the different scales is 
\be \label{scales} 
a  \ll L \ll \xi(\beta) \leq \mathcal L \, .     
\ee 
where, to summarize: $a$ is the lattice scale,  $L$ the continuum limit scale, i.e. the scale which is prodded through the variables $T_{f,L}(\sigma)$,  $\xi$ is the correlation length,  $\mathcal L$ is the finite systems cutoff,   and 
$A\ll B$ means $A/B \to 0$.   \\

The constructive field theory   
program  has yielded non-trivial scalar field theories $\varphi^4_2$ and $\varphi^4_2$ (where the subscript is the space dimension and the upper script the degree of the polynomial $P(\varphi)$)~\cite{BryFroSpe82,GliJaf73,GRS75,OS75}.   
The  progression of constructive results was halted when it was proved  that for $d>4$ 
the scaling limits of the field's lattice versions at $\beta \leq \beta_c$   
yield only  Gaussian fields~\cite{Aiz82,Fro82}.  (Gauge field theories are not discussed here, cf.~\cite{JW}).
\\

In the $80$'s various partial results have indicated that the same may hold true for the critical dimension $d=4$  
(cf.~\cite{AizGra83, ACF83, BauBrySla14,GawKup85, HarTas87}), however a  proof which covers also the strong coupled $\varphi^4$ (of which Ising is an example) was reached only in the recent work \cite{AizDum21}.   
Its  main result is the completion of the following statement  which now covers also the marginal (and thus the hardest) case $d=4$.

 \begin{theorem}[Gaussianity of $\varphi^4_d$, $d\geq 4$]\label{thm:gaussian phi4}
For  any dimension $d\geq 4$,  any random field reachable by the above construction as a scaling limit of measures of the form \eqref{Phi_EV} in limits meeting the condition \eqref{scales}  is a generalized Gaussian process. \end{theorem}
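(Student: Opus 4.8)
The plan is to establish Gaussianity by showing that, in the scaling limit, every truncated correlation (Ursell) function $U_{2n}$ with $n\geq 2$ vanishes, while the odd ones vanish by the $\varphi\to-\varphi$ symmetry of the a-priori measure $\rho$. By the characterization recalled in \eqref{triv}, this is equivalent to the limiting field obeying Wick's law, hence being a generalized Gaussian process. The first reduction is to the four-point function: for the ferromagnetic class at hand the Ursell functions carry the alternating signs $(-1)^{n-1}U_{2n}\geq 0$ (Lebowitz--Shlosman type inequalities), and they obey tree-graph bounds expressing $|U_{2n}|$ through the two-point function $S_2$ together with a single ``defect'' factor controlled by $U_4$. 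Consequently it suffices to prove that the properly rescaled four-point Ursell function tends to zero; the vanishing then propagates to all $n\geq 2$.

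Next I would pass to the \emph{random current representation}. The switching lemma expresses $-U_4(x_1,x_2,x_3,x_4)$ as a positive multiple of the expectation, in a doubled system of currents with sources at the four marked points, of the indicator that the two current clusters carrying the respective source pairs intersect. This geometric identity immediately yields the tree-diagram bound
\be \label{eq:treebound}
0\le -U_4(x_1,x_2,x_3,x_4)\le 2\sum_{z} S_2(x_1,z)\,S_2(x_2,z)\,S_2(x_3,z)\,S_2(x_4,z),
\ee
the four two-point functions meeting at the common intersection site $z$. After inserting this into the rescaled observables $T_f$ and using the normalization built into \eqref{def_Tf_scaled_FT}, the smallness of $U_4$ is governed by the \emph{bubble diagram} $B(\beta)=\sum_z S_2(0,z)^2$ evaluated near criticality, together with the infrared (reflection-positivity) bound $S_2(0,z)\lesssim \|z\|^{-(d-2)}$ that controls the two-point function.

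For $d>4$ this already closes the argument: the infrared bound makes $B(\beta)$ uniformly bounded up to $\beta_c$, and a power count shows that the rescaled $U_4$ carries a positive power of $a/L$, hence vanishes in the limit \eqref{scales}. The genuinely hard, \emph{marginal} case is $d=4$, where $B(\beta)$ diverges only logarithmically in the correlation length and the tree bound \eqref{eq:treebound} is, by itself, just barely insufficient: it controls the rescaled $U_4$ by a quantity of order one rather than $o(1)$.

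The crux is therefore to supplement \eqref{eq:treebound} with a singular correction extracting an extra logarithmic smallness, and this is where I would invest the main effort. The idea is that the intersection event underlying $-U_4$ is strongly suppressed: conditioned on two independent current clusters having met, the configuration self-consistently factorizes across scales, so that summing the intersection weight over a dyadic decomposition $a\ll\dots\ll L$ produces a geometric series in an effective coupling that itself runs to zero like $1/\log(L/a)$. Making this rigorous requires a multi-scale analysis of the current intersection probabilities --- a rigorous incarnation of the marginal irrelevance of the $\varphi^4$ interaction in four dimensions --- controlling the conditional intersection probabilities scale by scale and showing that the resulting bound on the rescaled $U_4$ is $O(1/\log(L/a))\to 0$. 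This improved bound is the substance of the joint work with Duminil-Copin, and its multi-scale derivation is the step I expect to be the principal obstacle; the reduction to $U_4$ and the tree bound are, by comparison, structural consequences of the random current formalism and the correlation inequalities.
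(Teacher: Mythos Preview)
Your proposal is correct and follows essentially the same architecture as the paper: reduction to $U_4$, the random-current tree diagram bound combined with the infrared bound for $d>4$, and a multi-scale improvement extracting a logarithmic gain at $d=4$. Two points of difference are worth flagging.

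First, your reduction to $U_4$ invokes alternating signs of $U_{2n}$ and generic tree-graph bounds. The paper instead uses a single direct inequality (Proposition~\ref{prop:U4_suffices}, from \cite{Aiz82}) bounding $\mathcal{G}_{2n}[S_2]-S_{2n}$ by a sum of $U_4$ factors times $\mathcal{G}_{2n-4}[S_2]$; this feeds straight into an estimate on the moment generating function of $T_{f,L}$ (Corollary~\ref{Corro_U4}), which is cleaner than assembling control over each $U_{2n}$ separately.

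Second, you apply the random current representation and the tree bound \eqref{eq:treebound} directly to $\varphi^4$, but RCR is native to Ising spins. The paper handles the extension via the Griffiths--Simon decomposition of $\varphi^4$ variables into block averages of ferromagnetically coupled Ising constituents (Section~\ref{sec_GS}), and this forces a \emph{dimensionally balanced} version of the tree bound (Proposition~\ref{prop:tree_phi}) in which two of the legs carry an extra factor $\beta J$ --- the naive bound \eqref{eq:treebound} is degree-$8$ in $\sigma$ on the right versus degree-$4$ on the left and does not survive the passage to unbounded spins. This point-splitting step is not merely cosmetic; without it the argument for $\varphi^4$ does not close.

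Finally, your description of the $d=4$ mechanism as a geometric series in a running coupling is heuristically apt, but the paper's actual device is the exact identity ${\bf P}[Q\neq\emptyset]={\bf E}[|Q|]/{\bf E}[|Q|\mid |Q|\geq 1]$ together with a dichotomy: either $S_2(0,x)\|x\|^{d-2}\to 0$ and the original tree bound already wins, or it stays bounded below and then the bubble diagram diverges, forcing the conditional intersection size to diverge and supplying the missing factor. The multi-scale work goes into proving approximate independence of contributions from different regular scales to that conditional expectation.
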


The previous results concerning $4D$   include proofs of convergence to Gaussian field if one starts from small enough $\lambda$, i.e. sufficiently small perturbations of the Gaussian free field\cite{BauBrySla14,FMRS87,GawKup85,HarTas87}.  The   
rigorous renormalization techniques presented in these works  yield also more detailed asymptotic, for that case.    In comparison, the result proven in \cite{AizDum21}  covers  arbitrarily ``hard'' $\varphi^4$ fields (i.e. large $\lambda$).  However, it does not currently provide comparable information on of the exact scale of the logarithmic corrections, and the exact expression for the covariance of the limiting Gaussian field.

To keep the presentation simple, we first discuss the standard Ising model, and then present the way the analysis extends to ferromagnetic system of lattice $\varphi^4$ variables. \\

\section{The statistical mechanical perspective} 

What from the perspective of constructive field theory may be taken as a disappointment is a positive and constructive result  from the perspective of statistical mechanics.   The theoreticians' goal there is to understand the critical behavior in models which lie beyond the reach of exact solutions.   The proven gaussianity of the critical model's scaling limits in $d\geq 4$ dimensions has therefore also a positive aspect.

Among the simply stated examples of the mean value functional \eqref{Phi_SM} are the scaled versions of the Gibbs equilibrium measures of the Ising model on $\Z^d$.

A configuration of Ising spin variables on a finite graph $\G=(\mathcal V, \E)$ (written in terms of its vertex set and  edge set)  is a collection of binary variables $\{\sigma_x\}_{x\in\mathcal V} $ which define a  function $\sigma : \V \mapsto \{-1,1\}$.   The standard presentation of its Hamiltonian is 
\be \label{H_I}
\boxed{{H}_\G(\sigma) := - \sum_{\{x,y\} \in \E} J_{x,y} \sigma_x \sigma_y - h \sum_{x \in \V} \sigma_x }  
\ee
with $J:=\{J_{x,y}\}_{\{x,y\}\in \mathcal E}$ the {\em coupling} coefficients, and $h$ is the {\em magnetic field parameter}.   

The system's Gibbs equilibrium states at inverse temperature $\beta  $ correspond to expectation value averages with respect to  probability measures on $\{-1,1\}^{\V}$ for which 
\be 
\boxed{ \mu_{\V,\beta}(\{\sigma\}) = e^{-\beta H(\sigma)} / Z_{\mathcal V}(\beta) }\,.    
\ee
The normalizing factor is  the {\em partition function} 
\be \label{Z}
 Z_{\mathcal V}(\beta) \  := \  \sum_{\sigma \in \{-1,+1\}^{\V}} \exp(-\beta {H}_G(\sigma))
\ee 
(An attentive reader may note the small discrepancy between \eqref{H_I}  and \eqref{H1}.  The  difference amounts to purely quadratic terms which even when not constant can be absorbed in the definition of the a-priori measure.) \\ 
The Gibbs measure for the infinite graph $\Z^d$ is defined as the $\mathcal L \to \infty$ limit of the above in 
$\mathcal V = [-\mathcal L,\mathcal L]^d$.   

For generic values of the temperature, and/or the interaction parameters   $J_{x,y}$ the spin variables' covariance decays exponentially in the distance, with a finite correlation length $\xi(\beta)$.    However, at $h=0$  the model undergoes a phase transition which is {\it continuous} in the sense that   $\lim_{\beta \searrow \beta_c} \xi(\beta)=\infty $.   
At the critical point the system's spin fluctuations are meaningfully correlated at all scales.     
One way to described the correlation's structure is through the field theory which emerges as a scaling limit of the variables 
\be \label{def_Tf_scaled}
\boxed{T_{f,L}(\sigma) :=  \frac{1}{\sqrt{\langle \big[\sum_{x\in \Z^d \cap [-L,L]^d}  \sigma_x \big]^2\rangle}}\sum_{x\in \Z^d } f(x/L ) \, \sigma_x   }
\ee
where $f$ ranges over $C_0(\R^d)$ - the collection of compactly supported continuous functions.  Looking up from the perspective of the lattice the relation of the different scales is as presented in \eqref{scales}, with $a=1$. 

For the Ising model at $h=0$, in which case the only adjustable thermodynamic  parameter is $\beta$, it is expected that the scaling limits  will be the same for all choices of $L, \mathcal L \to \infty$ and $\beta \searrow \beta_c$ meeting the condition \eqref{scales}. 

Ising and $\varphi^4$ lattice systems can each be presented as the limit of the other.  The direction ``$\varphi \Rightarrow$  Ising'' is rather obvious.  The somewhat more surprising converse relation is through the Griffiths-Simon representation of real variables with the  $\varphi^4$  distribution as a limit (in distribution) of the empirical average of ferromagnetically coupled elemental Ising spins~\cite{Gri69,SimGri73}.   The collection of such variables forms what we call the Griffiths-Simon class.  

The more explicit definition of the Griffiths-Simon class of variables is postponed here to section~\ref{sec_GS}, even though references will be made to it earlier in the text.  The most relevant point about it are summarized above.   \\ 
 
The analysis of Ising models is facilitated by the random current representation  (RCR).  
Its power derives from a combinatorial identity (the ``switching lemma'') which made its appearance in the Griffiths-Hurst-Sherman work on  their eponymous correlation inequality~\cite{GHS70}. 
It was further  developed in \cite{Aiz82} as a tool for expressing the  spin covariances in stochastic geometric terms.
   
Combined with simple geometric considerations, RCR allows to present  the Ursell function $U_4$ of \eqref{U_def} as  a multi-site connection amplitude for random currents linking prescribed sources.   
From this follow (omitting here the credits, which are given in the statements' more detailed presentations in the corresponding sections, below): 

\begin{itemize} 
  \setlength\itemsep{0.5em}
\item A simple explanation of the  Pfaffian  and fermionic relations which appear in the planar case, and a positive elementary lower bound on $|U_4|/S_4$ in $d=2$ dimensions.
\item An intuitive explanation, alas not yet a proof, of the non gaussian nature  of the scaling limits in $d=3$ dimension.
\item A relation between truncated correlations which quantifies the statement that the model's scaling limit is Gaussian if and only if Wick's law holds that the level of the $4$-point function.  
\item  A diagrammatic upper bound on $U_4$ in terms of the model's two point function.  When combined with  the ``infrared bound''  on the latter, it yields a simple proof of the triviality of the scaling limits in $d>4$ dimensions.
\item  A stepping stone for the extension of the above to  $d=4$.
\end{itemize} 
For the latter, an essential use was made of the fact that the random current representation produces not only a diagrammatic upper bound on $U_4$ but actually an identity from which one may extract also a lower bound.  Both play a role in the multi scale analysis which was needed for the marginal dimension.   

The random current representation was also instrumental in establishing what may be viewed as a more preliminary statement about the model's phase structure, which are recalled below. 
  
  \section{Preliminaries: the ferromagnetic models' phase transition} \label{sec:preliminaries}

Statistical Mechanics, has had as its initial success the explanation of the roots of the observed thermodynamic behavior: the meaning of entropy (Boltzmann), the roots of irreversibility and the phenomena of phase transition.   Its impact has however extended quite further: it played an essential role in the emergence of quantum physics (Plank), information theory (Shannon), financial market's analysis (spin glass), virology (contact process), complexity in classical and quantum computations, and current attempts to conceptualize the fabric of space-time (SYK).  

Ising spin systems, through various generalizations of the initial model, play a similarly fecund role within statistical mechanics.  They offer an arena for the formulation and testing of a range of concepts and analytical tools, only few of which will be mentioned here.

The microscopic structure of real matter is obviously more complex than that of the Ising model.  And so is the phase structure of water and of materials exhibiting  ferromagnetic transition.  Yet it is an observed fact that at continuous phase transitions the critical behavior in  many different systems agree to an astounding degree (upon smooth adjustments of the thermodynamic parameters).   
A common explanation is that since the scale on which the systems are observed  is vastly larger than  that of its microscopic constituents, the scale invariant critical correlation functions may be governed by continuum  theories which emerge from local interactions, and these come with a  limited collection of relevant parameters.   

This observation gives impetus to the study the phase transitions in models of simplified microscopic structure, such as the model which Ising was invited to study by his research advisor Lenz. 
Its basic setup is described in \eqref{H_I}.  

We shall focus here on the models formulated on the $d$-dimensional graphs $\Z^d$, with the translation invariant nearest neighbor interaction.   However it may be of interest to note that 
some of the basic results on the Ising models' phase structure are valid in the greater generality of transitive  (i.e. invariant under a transitive group of automorphism) and ameanable graphs,  
with homogeneous interaction (by which is meant here invariance under a transitive group of  graph homomorphisms).  
 \\ 
 
 \bigskip    

\subsection{Sharpness of the phase transition and general critical exponent bounds} 
\mbox{} \\

A natural starting point is the following  pair of general statements on the phase structure of spin models in the class described above.

\begin{theorem} 
For any Ising model on a transitive and amenable graph, with a homogeneous  interaction \eqref{H_I}  
with  $\|J\|_1:= \sum_x J_{x,x_0} <\infty$,  
\begin{enumerate}
\item [1)] (Existence of the limit~\cite{Gri67,FKG})\\
  The limit defining the Gibbs state (formulated here with the free boundary conditions) exists for all $(\beta,h)\in [0,\infty)\times\R$
\item  [2)] (Lee-Yang analyticity~\cite{LeeYang52}) \\ 
For any $\beta <\infty$ the state is real analytic in $h$ over $\R\setminus \{0\}$
\item [3)]  (Analyticity at high temperatures~\cite{DobShl87})   \\ 
 For $\beta < 1/  \left[\| J_{0,x}\|_1 \right] $  the state is a real analytic in $(\beta,h)$, in the sense that so are all the expectation values $\langle F \rangle_{\beta,h}$ of  local measurable functions of the spins. 
\end{enumerate} 
\end{theorem}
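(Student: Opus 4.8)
The plan is to treat the three assertions separately, since each rests on a distinct mechanism; in every case the strategy is to establish the statement at finite volume and then transfer it to the thermodynamic limit $\mathcal L\to\infty$. For part (1) I would exploit the ferromagnetic structure. Encoding the magnetic field through a single ``ghost'' spin $\sigma_g$ coupled to every site with strength $h$, the Hamiltonian \eqref{H_I} becomes that of a purely ferromagnetic Ising model (all couplings $\ge 0$) whenever $h\ge 0$; the range $h<0$ then follows from the global flip symmetry $\sigma\mapsto-\sigma$. Griffiths' second inequality gives $\partial_{J_{x,y}}\langle\sigma_A\rangle_{\mathcal V}=\beta\bigl(\langle\sigma_A\,\sigma_x\sigma_y\rangle_{\mathcal V}-\langle\sigma_A\rangle_{\mathcal V}\langle\sigma_x\sigma_y\rangle_{\mathcal V}\bigr)\ge 0$, so that enlarging the volume --- which amounts to switching on additional nonnegative couplings (and ghost bonds) from zero --- can only increase each correlation $\langle\sigma_A\rangle_{\mathcal V}$, where $\sigma_A:=\prod_{x\in A}\sigma_x$. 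Since $|\langle\sigma_A\rangle_{\mathcal V}|\le 1$, every such sequence is monotone and bounded, hence convergent, and $\|J\|_1<\infty$ guarantees the limits are finite and consistent. Convergence of all moments, together with the compactness of $\{-1,1\}^{\Z^d}$, upgrades this to weak convergence of the finite-volume measures to a limiting Gibbs state.

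For part (2) the crucial input is the Lee--Yang circle theorem: in the fugacity variable $z=e^{2\beta h}$ the finite-volume partition function $Z_{\mathcal V}$ is a polynomial whose zeros all lie on $|z|=1$, i.e.\ on the imaginary $h$-axis. Hence for any real $h_0\ne 0$ there is a complex neighborhood disjoint from that axis on which $Z_{\mathcal V}\ne 0$, so the finite-volume expectations $\langle F\rangle_{\mathcal V}$, being rational in $z$ with denominator $Z_{\mathcal V}$, are analytic there. To pass to the limit I would invoke Vitali's theorem: the free energies $\tfrac{1}{|\mathcal V|}\log Z_{\mathcal V}$ are locally uniformly bounded on this zero-free neighborhood and converge pointwise on the real axis by part (1), and a locally bounded, pointwise convergent family of analytic functions converges locally uniformly to an analytic limit. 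This yields real analyticity of the infinite-volume state in $h$ on $\R\setminus\{0\}$.

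For part (3) I would set up a convergent high-temperature (polymer) expansion. Writing $e^{\beta J_{x,y}\sigma_x\sigma_y}=\cosh(\beta J_{x,y})\bigl(1+\tanh(\beta J_{x,y})\,\sigma_x\sigma_y\bigr)$ and summing over the activated subgraphs reorganizes $Z_{\mathcal V}$ and the correlation functions as sums over polymers (connected clusters of activated bonds). The elementary bound $\tanh(\beta J_{0,y})\le\beta J_{0,y}$ gives $\sum_y\tanh(\beta J_{0,y})\le\beta\|J\|_1<1$ exactly in the stated regime $\beta<1/\|J\|_1$, which is the Koteck\'y--Preiss smallness condition ensuring absolute convergence of the cluster expansion; crucially this convergence persists, uniformly in $\mathcal V$, for complex $(\beta,h)$ in a neighborhood of the real regime. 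Uniform-in-volume convergence then renders $\tfrac{1}{|\mathcal V|}\log Z_{\mathcal V}$ and the truncated correlations analytic with volume-independent bounds, and the limit inherits joint analyticity in $(\beta,h)$.

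The delicate point in all three parts is not the finite-volume statement but its transfer to the thermodynamic limit. The deepest single input, the Lee--Yang theorem, I am treating as a black box; granting it, the real work in part (2) is the uniform local bound on $\tfrac{1}{|\mathcal V|}\log Z_{\mathcal V}$ throughout the zero-free region that feeds Vitali's theorem, and in part (3) it is the volume-uniform convergence of the expansion with complex couplings. Securing these uniform estimates, rather than the finite-volume analyticity, is where I expect the main obstacle to lie.
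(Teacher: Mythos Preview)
The paper does not actually supply a proof of this theorem: it is stated as background with attributions to \cite{Gri67,FKG}, \cite{LeeYang52}, and \cite{DobShl87}, and the text moves on immediately to its implications. So there is no ``paper's own proof'' to compare against. Your outline is a reasonable and essentially standard reconstruction of the cited arguments. For part~(1) the monotonicity-via-GKS~II route is exactly the mechanism behind the Griffiths and FKG references; for part~(2) the Lee--Yang plus Vitali/normal-families argument is the textbook approach; for part~(3) your polymer/Koteck\'y--Preiss scheme is one correct route, though note that the cited \cite{DobShl87} (Dobrushin--Shlosman complete analyticity) is a somewhat different framework, based on contraction of conditional distributions rather than cluster expansion per~se, even if the two yield the same qualitative conclusion in this regime. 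Your identification of the volume-uniform bounds as the non-trivial step in parts~(2) and~(3) is accurate.
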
 

 A relevant implication of (2) is that phase transitions in these models can occur only along the line $h=0$.  
In dimensions $d>1$ at low enough temperatures the state is in fact discontinuous across that line, i.e the system exhibits a \emph{first order} phase transition.  The proof of this fact is a celebrated results of Peierls~\cite{Pei36}, in which he has elevated the model's status and confirmed the hopes of its creator.   

The discontinuity is manifested in a jump in the magnetization $M(\beta,h):= \langle \varphi_0 \rangle_{\beta,h}$, or equivalently in the strict positivity of the \emph{residual} (or \emph{spontaneous}) magnetization  
\be 
M^*(\beta) = \lim_{h\searrow 0}  M(\beta,h)  \,.
\ee   
Griffiths' correlation inequalities imply that  $M^*(\beta) $ is monotone increasing in $\beta$ and thus the first order phase transition occurs along the line segment   $\beta\in (\beta_c, \infty]$.\footnote{As to the situation at $\beta_c$: with one notable exception~\cite{ACCN,Tho69,AndYuvHam70}  it is expected that $M(\beta_c,0)=0$.  Proven cases include the nearest neighbor models on $\Z^d$ in all  dimensions~\cite{AizFer86,AizDumSid15} ($3D$ being the last to be proven).}   

While the model's high and low temperature regimes can be prodded by convergent expansions,  its  behavior close to $(\beta_c,0)$ can be accessed only through non-perturbative means.  
No exact solution was found  beyond the two-dimensional nearest neightbor case.  Nevertheless, the following could be established rigorously (joint works with Barsky and Fernandez~\cite{AizBarFer87}).   

\begin{theorem}[\cite{AizBarFer87}] \label{thm:sharpness} 
For any ferromagnetic Ising model on $\Z^d$ with a shift invariant pair interaction, of exponential decay in $|x-y|$,
there exists $\beta_c \leq \infty$ such that: 
\begin{eqnarray} 
 \forall \beta < \beta_c:  &   \quad  \,\,
\langle \varphi_x \varphi_y \rangle_{\beta,0}   \leq A(\beta) \, e^{-\|x-y\| /\xi(\beta)} 
 & \quad \mbox{with $A(\beta), \xi(\beta) <\infty$}    \notag \\   
\label{eq:offcriticality} \\ 
  \forall \beta > \beta_c:  &    \! \! \! \! \! \! 
 \langle \varphi_x \varphi_y \rangle_{\beta,0}  \geq M^*(\beta)^2   
& \quad \mbox{with $M^*(\beta) >0$}    \notag  \,.   
\end{eqnarray} 
And,  approaching the critical point $(\beta_c,0)$ from three different directions:
 \be \label{power_laws}
\boxed{ M(\beta,0)   \geq     \rm{C} |\beta -\beta_c|_+^{1/2} \, , \quad   
M(\beta_c,h)  \geq   \rm{C} \, h^{1/3} \,, \quad
\frac{\partial M}{\partial h}(\beta,0)  \geq   \frac{\rm{C}}{|\beta_c-\beta|_+^{1}}  }
\ee
with constants $\rm{C}$ which may vary from one equation to another.
\end{theorem}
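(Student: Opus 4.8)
The plan is to reduce the entire statement to the study of a small closed system of partial differential inequalities for the magnetization $M(\beta,h):=\langle\sigma_0\rangle_{\beta,h}$, each of which I would derive from the random current representation and the switching lemma recalled above. The soft inputs are the Griffiths monotonicities $\partial M/\partial\beta\ge0$, $\chi:=\partial M/\partial h\ge0$ and the GHS inequality $\partial^2 M/\partial h^2\le0$, so that for $h>0$ the map $h\mapsto M(\beta,h)$ is nonnegative, increasing and concave. The engine of the proof is the Aizenman--Barsky--Fernandez inequality
\[
M\ \le\ h\,\frac{\partial M}{\partial h}\ +\ M^3\ +\ \beta\,M^2\,\frac{\partial M}{\partial\beta},
\]
to be supplemented by a companion inequality, again a consequence of the switching lemma, that controls $\partial M/\partial\beta$ well enough to absorb the last term into the cubic one near criticality. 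Heuristically these relations confine the equation of state to one side of the mean-field Landau form $h\asymp a(\beta-\beta_c)M+bM^3$, and the three power laws are then just the three ways of approaching $(\beta_c,0)$ along this surface.

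Granting the inequalities, I would first establish sharpness as the coincidence of two a priori distinct thresholds, $\beta_c^{\chi}:=\sup\{\beta:\chi(\beta,0)<\infty\}$ and $\beta_c^{M}:=\inf\{\beta:M^*(\beta)>0\}$. Sent to $h\downarrow0$, the displayed inequality ties the vanishing of $M^*$ to the finiteness of $\chi$ and conversely, forcing $\beta_c^{\chi}=\beta_c^{M}=:\beta_c$. To upgrade finiteness of the susceptibility in the range $\beta<\beta_c$ to the genuine exponential decay asserted in the first line of \eqref{eq:offcriticality}, I would feed $\chi(\beta,0)<\infty$ into a Simon--Lieb-type sub-multiplicative inequality for $\langle\sigma_x\sigma_y\rangle$ (itself a product of the random current / random walk structure) and iterate it; here the hypothesis that $J$ decays exponentially is what prevents a fat tail and yields a finite correlation length $\xi(\beta)$. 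The complementary lower bound $\langle\sigma_x\sigma_y\rangle_{\beta,0}\ge M^*(\beta)^2$ for $\beta>\beta_c$ is then immediate: in the pure phase the FKG inequality gives $\langle\sigma_x\sigma_y\rangle_+\ge\langle\sigma_x\rangle_+\langle\sigma_y\rangle_+=M^*(\beta)^2$, and by spin-flip symmetry the symmetric state has the same two-point function.

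The three bounds in \eqref{power_laws} then follow by one-dimensional comparison of the differential inequalities near $(\beta_c,0)$. The cleanest is the spontaneous magnetization: at $h=0$ and $\beta>\beta_c$ the field term drops, and dividing the displayed inequality by $M^*>0$ gives, with $u:=(M^*)^2$,
\[
1\ \le\ u\ +\ \tfrac{\beta}{2}\,\frac{\partial u}{\partial\beta},
\]
so that $\partial u/\partial\beta\ge 2(1-u)/\beta$; integrating upward from $\beta_c$, where $u=0$, yields $u\ge c\,(\beta-\beta_c)$ and hence $M(\beta,0)\ge\mathrm{C}\,(\beta-\beta_c)_+^{1/2}$. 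The bound $M(\beta_c,h)\ge\mathrm{C}\,h^{1/3}$ is obtained by the same division-by-$M$ device applied at $\beta=\beta_c$ as $h\downarrow0$, now balancing the field term against the cubic term (after using the companion inequality to tame $\partial_\beta M$) and integrating in $h$; the susceptibility divergence $\partial M/\partial h(\beta,0)\ge\mathrm{C}/(\beta_c-\beta)_+$ comes from the same inequality in the subcritical regime, where the GHS concavity is what lets one extract the second-order behavior of $M$ in $h$. In each case the constant $\mathrm{C}$ is harmless and may change from line to line.

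I expect the real difficulty to sit entirely in the first step, namely producing the Aizenman--Barsky--Fernandez inequality and its companion from the random current representation. This is where the switching lemma does the essential work, re-expressing an algebraic relation among $M$, $\chi$ and $\partial_\beta M$ as a statement about sourced current clusters and the geometry of their intersections. Two points demand care. First, it is precisely the $\mathbb{Z}_2$ spin-flip symmetry of the Ising measure --- the oddness of $M$ in $h$ --- that promotes the quadratic term one would expect for a generic order parameter to the cubic $M^3$, and so produces the exponent $3$ rather than $2$ in the $h^{1/3}$ law; mishandling this symmetry gives the wrong mean-field bound. Second, the exponential decay of $J$ enters only at the very end, in the Simon--Lieb iteration, to exclude an algebraic tail in the subcritical two-point function, whereas the differential inequalities themselves require no more than $\|J\|_1<\infty$.
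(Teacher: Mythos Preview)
Your proposal is correct and follows the same route the paper itself indicates: the paper does not prove Theorem~\ref{thm:sharpness} in place but cites \cite{AizBarFer87} and records precisely the partial differential inequalities you invoke (Section~\ref{sec_PDI}), namely $M\le \hat h\,\partial_{\hat h}M + M^{2}\beta\,\partial_\beta M + M^{3}$ together with the companion $\beta\,\partial_\beta M\le \beta\|J\|\,M\chi$ and $d\chi/d\beta\le \|J\|\,\chi^{2}$, with the Simon--Lieb iteration supplying the passage from $\chi<\infty$ to genuine exponential decay. Your integration of the PDIs for the three exponent bounds and your localization of the exponential-decay hypothesis on $J$ to the Simon--Lieb step are both on target; the one place where you underestimate the work is the sharpness step $\beta_c^{\chi}=\beta_c^{M}$, which in \cite{AizBarFer87} is not a direct $h\downarrow 0$ limit of the displayed inequality but a more delicate finite-volume extrapolation argument.
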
 
  
Though stated initially for $\Z^d$, the given proofs readily generalize to   Ising spin systems on amenable transitive  graphs with homogenous ferromagnetic interaction of exponential decay.  Furthermore, for the above results - with the exception of  the exponential decay at $\beta <\beta_c$, the assumption  of exponential decay of $J_{x,y}$  can be relaxed to just  $\|J\|_1<\infty$. \\ 

In essence, this general result can be seen as conveying two statements.    The first is that  in the above described Ising models the phase transition occurs directly from the regime of exponential decay of correlations to that of long range order.  That has been referred to as  {\it sharpness  of the phase transition}.  

Its second part implies  that the model's  
{\it critical exponents}\footnote{Allowing for the presence of logarithmic corrections we define the  critical exponents  cautiously  as the limits of the corresponding log ratios, e.g. 
$
\delta^{-1} = \lim_{h\searrow 0} \,  \, \log M(\beta_c,h)/ \log h   $.
} 
 are bounded by what turns to be their values in the 
model's mean-field version  (for which they are easily calculable)  
\be \label{MF_exp}
\hat{\beta}  \leq 1/2\, , \qquad\delta \geq 3\, , \qquad   \gamma\geq 1 \, \,.   \\ 
\ee

The proof of Theorem~\ref{thm:sharpness} proceeds through partial differential inequalities (PDI), listed in section~\ref{sec_PDI}, which were derived for the order parameter $M(\beta,h)$.     These build on relations which are derived through the model's stochastic geometric representation.   Partly overlapping results were developed for the independent percolation models, and the comparison of the two has been quite instructive.   

Let us  add that another path to the sharpness of the phase transition was more recently presented in \cite{DRT19_sharp}.  It applies more generally to $Q$-state FK random cluster models, a class which includes both Ising ($Q=2$) and independent percolation ($Q=1$).  \\

As is also explained below, for the Ising model with the nearest neighbor interaction  the  {\it critical exponent bounds}   \eqref{MF_exp} are  
proven to be {\bf saturated} in dimensions $d\geq 4$~\cite{AizFer86}.   
The existence of an {\bf upper critical dimension},  above which the critical behavior simplifies is a phenomenon shared also by some other statistical mechanical models.  In particular,  for the independent percolation the corresponding  critical exponent bounds are {saturated} in dimensions $d\geq 6$ (through the combinations of the results of \cite{AizNew,BarAiz91,Har_vdHT_Sla03}).  
 \\ 

\subsection{Bounds on the two point function at $\beta_c$}  \mbox{} \\

In the analysis of the critical behavior in high dimensions an important role is played by the following bounds on the decay rate of the spin-spin correlations at the critical point.   
\begin{theorem}\label{thm_S2bounds}
For the  Ising model on $\Z^d$, $d>2$, with the nearest neighbor interaction  ($J_{u,v} = J \delta_{\|u-v\|, 1}$), at the critical point
\be \label{crit_decay}
\boxed{ \frac{\rm{C_1}}{\|x\|^{d-1}}  \leq \beta J \, \langle \sigma_x \sigma_y \rangle_{\beta_c,0}   \leq \frac{\rm{C_2}}{\|x-y\|^{d-2}} } 
\ee
\end{theorem}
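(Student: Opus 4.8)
The plan is to treat the two inequalities separately, since they rest on different structural inputs: reflection positivity for the right-hand (upper) bound, and the sharpness of the transition together with directional comparison inequalities for the left-hand (lower) bound.

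For the upper bound I would invoke reflection positivity. For the nearest-neighbor ferromagnet, Gaussian domination yields the \emph{infrared bound} on the Fourier transform $\hat G(p) = \sum_{x} e^{i p\cdot x}\langle\sigma_0\sigma_x\rangle_{\beta,0}$, valid at every $\beta$ and in particular at $\beta_c$:
\[
\hat G(p) \;\le\; \frac{1}{2\,\beta J\, E(p)}\,, \qquad E(p) = \sum_{j=1}^d \big(1-\cos p_j\big)\,.
\]
Since $E(p)$ is comparable to $\|p\|^2$ near the origin, Fourier inversion represents $\langle\sigma_0\sigma_x\rangle_{\beta_c}$ as an integral dominated by that of the massless (lattice Laplacian) Green function. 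Standard estimates on the latter — convergent precisely because $d>2$ — give $\langle\sigma_0\sigma_x\rangle_{\beta_c}\le C/(\beta J\,\|x\|^{d-2})$, the asserted right-hand inequality. The only care required is the uniform (in the finite-volume cutoff) validity of the infrared bound and the passage to the infinite-volume critical state, both standard.

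For the lower bound I would start from sharpness (Theorem~\ref{thm:sharpness}) in its Simon--Lieb form. The defining property of $\beta_c$ is that the Simon--Lieb criterion just fails at $\beta_c$ at every scale; concretely, for the box $\Lambda_L=[-L,L]^d\cap\Z^d$,
\[
\beta_c J \sum_{\substack{u\in\Lambda_L,\ v\notin\Lambda_L\\ u\sim v}} \langle\sigma_0\sigma_u\rangle_{\beta_c}\;\ge\;1 \qquad\text{for every } L\,,
\]
since a strict inequality $<1$ at some scale would propagate, via the Simon--Lieb inequality, to exponential decay and force $\beta<\beta_c$. The boundary layer of $\Lambda_L$ carries of order $L^{d-1}$ terms, each a two-point function between $0$ and a site at distance of order $L$. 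Hence the \emph{largest} boundary term is at least $c\,L^{-(d-1)}$, and by the lattice symmetries it is attained, up to a constant, in a fixed (on-axis) direction.

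The main obstacle is to upgrade this bound on the boundary maximum (equivalently, on the boundary \emph{average}) to a pointwise lower bound $\langle\sigma_0\sigma_x\rangle_{\beta_c}\ge C_1/(\beta J\|x\|^{d-1})$ valid in \emph{every} direction and uniformly in the scale: the crude infrared upper bound alone permits the correlation to concentrate on a handful of boundary sites, and this must be excluded. The tool I would use is the Messager--Miracle-Sol\'e monotonicity of $\langle\sigma_0\sigma_x\rangle$ under coordinate rearrangements, supplemented by the reflection-positivity (transfer-matrix) spectral representation of the on-axis function $n\mapsto\langle\sigma_0\sigma_{ne_1}\rangle$, which makes it completely monotone and hence comparable across scales. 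Showing by these comparisons that the two-point function is, up to a dimensional constant, isotropic at each radius is the delicate step; granting it, the boundary-sum estimate descends to the stated single-site bound, and the factor $\|x\|$ separating the exponents $d-1$ and $d-2$ is exactly the $L^{d-1}$ volume of the boundary layer against which the order-one boundary sum is spread.
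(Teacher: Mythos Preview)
Your lower-bound argument via the Simon--Lieb inequality is the one the paper cites (Simon~\cite{Sim80}): at $\beta_c$ the boundary sum is $\ge 1$, hence the maximal boundary term is $\gtrsim L^{-(d-1)}$, and Messager--Miracle-Sol\'e monotonicity spreads this to all directions. That part is fine.

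The gap is in your upper bound. The infrared bound $\hat G(p)\le 1/(2\beta J\,E(p))$ is a \emph{pointwise} inequality in momentum space, but Fourier inversion does not turn it into a pointwise inequality in real space: the kernel $e^{ip\cdot x}$ is oscillatory, so from $0\le \hat G(p)\le \hat G_{\mathrm{free}}(p)$ you cannot conclude $G(x)\le G_{\mathrm{free}}(x)$. What the infrared bound gives directly is an \emph{integrated} control, e.g.\ on $\sum_{x\in\Lambda_L}\langle\sigma_0\sigma_x\rangle$ (pair against a nonnegative bump $\hat\phi_L\ge 0$) or on $\sum_x\langle\sigma_0\sigma_x\rangle^2$. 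Converting that averaged bound to the pointwise estimate $\langle\sigma_0\sigma_x\rangle\le C/\|x\|^{d-2}$ requires exactly the Messager--Miracle-Sol\'e monotonicity you invoked only for the lower bound; this is the step the paper explicitly attributes to Sokal~\cite{Sok79}. So MMS is not an optional refinement for the upper bound but the missing ingredient that makes your ``Fourier inversion'' step valid.
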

The very important upper bound is based on the model's reflection positivity, through either the transfer matrix analysis of Glimm and Jaffe~\cite{GliJaf73} or (with better constant)  the \emph{Gaussian domination}  bound of Fr\"ohlich, Simon and Spencer~\cite{FroSimSpe76}.   For the spacial form of the upper bound in \eqref{crit_decay}  the original's Frourier-transform is  combined with the Messager Miracle-Sole monotonicity~\cite{MesMir77}, as was done by Sokal~\cite{Sok79}.

The lower bound in \eqref{crit_decay} is due to Simon~\cite{Sim80}, through a  generally applicable strategy with roots  in the work of Dobrushin~\cite{DobShl87} and, going further back,  Hammersley~\cite{Ham57} (in the context of percolation models).

Combined with \eqref{eq:offcriticality} these bounds  demonstrate the point made above: power law behavior of $\langle \varphi_0 \varphi_x \rangle_{\beta,0}$, which is essential for scaling limits, can be found in the vicinity the model's critical point and only there. The extension of this statement to the truncated correlations at the high $\beta$ phase as was recently presented in~\cite{DumGosRao18}.   

In the analysis of the marginal case of  $4D$ we added to \eqref{crit_decay} the observation that 
$ \langle \varphi_0 \varphi_x \rangle_{\beta_c,0}   \cdot \|x\|^{d-2}$ is {\bf monotone decreasing} in the scales of $\|x\|$~\cite{AizDum21}.  
(This played a useful role in combining the contributions of different scales to the effective coupling in $4D$, resulting in a logarithmic-type correction to the previous upper bound by $C/L^{d-4}$.)\\

Let us add that for $d>4$ the lace expansion method is getting ever closer to a proof that the upper bound in \eqref{crit_decay} is assymptotically  saturated~(cf. \cite{Sak21} and references therein).   \\ 

\bigskip 

\section{Ising model's random current representation}\label{sec:RCR}

\subsection{Definition and the switching lemma}\mbox{  }\\

In the random current representation (RCR), the Ising model's defining degrees of freedom $\{\sigma_x\}_{x\in \V}$ are summed over and replaced by integer valued variables $\{n(x,y)\}_{(x,y)\in \E}$ associated with the graphs edges.    The transition casts the partition function as that of a system of loops. The spin variables reappear as source insertion operators in a systems or random currents with source constraints.  

Due to  the positivity of the RCR measure, and its convenient combinatorial properties,  this representation  uncovers a stochastic geometric  underpinning of the spin-spin correlation functions, and sheds lights on the structure of the model's correlation functions, and by implication of the related field of spin fluctuations. 

In formulating the RCR we focus on the spin-flip  symmetric Ising Hamiltonian \eqref{H_no_h} with $h=0$, i.e. at zero magnetic field.  (The extension to $h\geq 0$ is not difficult, and is also useful~\cite{AizBarFer87}) \footnote{A natural extension of RCR to $h\geq 0$ is   obtained through the addition of a ``ghost site'' -- a device which was first employed  by Griffiths in his derivation of correlation inequalities for odd numbers of spins~\cite{Gri67}.}

\begin{definition} For an Ising model on a graph $\G=\{\mathcal V,   \mathcal E \}$, with the Hamiltonian 
\be \label{H_no_h}
{H}_\G(\sigma) := - \sum_{\{x,y\} \in \E(\G)} J_{x,y} \sigma_x \sigma_y 
\,. \ee
a {\em current} configuration $\n$ is an integer-valued function $\n: \mathcal E  \to \Z_+$.  
Referring to the edges as $(x,y)$, the current's set of {\em sources} is the set of sites
\be 
\boxed{ \partial\n \, := \, \{ x\in \V :\, (-1)^{\sum_{(x,y) \in \E} \n(x,y)} = -1 \} }
\ee  
At given $\beta \geq 0$,  we associate to the current configurations  the {\em weight} function
\be 
\boxed{ w(\n)  :=\prod_{\{x,y\}\subset \E}\frac{\displaystyle(\beta J_{x,y})^{\n(x,y)}}{\n(x,y)!} } 
\ee
\end{definition}

\begin{lemma} 
For any $\beta \geq 0$ the   Ising model's partition function (at $h=0$) can be expressed as the following sum over {\bf sourceless} random currents
\begin{equation}\label{eq:8}
Z (\Lambda,\beta) :=\sum_{\sigma: \V\to \{-1,1\}} \prod_{(x,y)\in \E} \exp(\beta J_{x,y}\sigma_x\sigma_y) 
=
 \boxed{ 2^{|\Lambda|} \sum_{\substack{\n:\,\E\to \Z_+ \\ \partial\n=\emptyset}} w(\n)} \,.
\end{equation}

Furthermore, the  Gibbs state expectation value of even  products of spins can be presented as the effect on the sum of the insertion of sources at the corresponding sites:   
\begin{equation}\label{eq:erg}
\boxed{ \langle\prod_{x\in A}\sigma_x\rangle_{\Lambda,\beta}=\frac{\displaystyle\sum_{\n:\, \partial\n=A}w(\n)}{\displaystyle\sum_{\n:\partial\n=\emptyset}w(\n)}\,  }
\end{equation}
(the odd correlations vanish by the model's symmetry).  \\ 
\end{lemma}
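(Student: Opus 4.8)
The plan is to establish both identities at once through the elementary high-temperature expansion of the Boltzmann weights, treating \eqref{eq:erg} as the case of \eqref{eq:8} decorated with source insertions. First I would expand each edge factor as its (absolutely convergent) Taylor series,
\be
\exp(\beta J_{x,y}\sigma_x\sigma_y)\;=\;\sum_{\n(x,y)=0}^{\infty}\frac{(\beta J_{x,y})^{\n(x,y)}}{\n(x,y)!}\,(\sigma_x\sigma_y)^{\n(x,y)},
\ee
and multiply over all edges. Interchanging the finite product over $\E$ with the sums turns the Boltzmann weight into a single sum over current configurations $\n:\E\to\Z_+$, in which the purely numerical factors assemble exactly into $w(\n)$, leaving the spin-dependent monomial $\prod_{(x,y)\in\E}(\sigma_x\sigma_y)^{\n(x,y)}=\prod_{x\in\V}\sigma_x^{\,d_\n(x)}$, where $d_\n(x):=\sum_{(x,y)\in\E}\n(x,y)$.

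The crux is the parity reduction permitted by $\sigma_x\in\{-1,1\}$, so that $\sigma_x^{\,d_\n(x)}$ depends only on the parity of $d_\n(x)$. Summing over each spin independently gives
\be
\sum_{\sigma_x\in\{-1,1\}}\sigma_x^{\,d_\n(x)}\;=\;\begin{cases}2 & d_\n(x)\ \text{even},\\[2pt]0 & d_\n(x)\ \text{odd},\end{cases}
\ee
so that $\sum_\sigma\prod_{x}\sigma_x^{\,d_\n(x)}$ equals $2^{|\V|}$ precisely when every $d_\n(x)$ is even, i.e.\ when $\partial\n=\emptyset$, and vanishes otherwise. Feeding this back into $\sum_\sigma\sum_\n(\dots)$ and using $|\Lambda|=|\V|$ yields \eqref{eq:8}.

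For \eqref{eq:erg} I would rerun the same computation on the unnormalized correlation $\sum_\sigma\big(\prod_{x\in A}\sigma_x\big)\prod_{(x,y)}\exp(\beta J_{x,y}\sigma_x\sigma_y)$. The only change is that the inserted factor raises the exponent of $\sigma_x$ by one at each $x\in A$, so the spin sum now returns $2^{|\V|}$ exactly when $d_\n(x)$ is odd on $A$ and even off $A$, that is when $\partial\n=A$. Dividing by $Z=2^{|\V|}\sum_{\partial\n=\emptyset}w(\n)$ gives the stated ratio. The vanishing of the odd correlations is then automatic from the handshake identity $\sum_{x}d_\n(x)=2\sum_{(x,y)}\n(x,y)$, which forces $|\partial\n|$ to be even, so no current can realize an odd source set.

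I do not anticipate a genuine obstacle: once the Taylor expansion and the $\sigma_x^2=1$ parity mechanism are in place, the argument is pure bookkeeping. The only points needing care are the identification of the parity condition ``$d_\n(x)$ even'' with the definition of $\partial\n$, and the justification of interchanging sum and product (immediate from absolute convergence of the exponential series together with finiteness of $\E$). The substantive combinatorial content of the representation, namely the switching lemma alluded to in the definition, lies beyond this lemma and is not required for it.
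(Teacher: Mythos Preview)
Your proposal is correct and matches the paper's own approach: the paper simply notes that the proof is elementary, starting from the Taylor expansion \eqref{RCR_spin_product}, and your write-up fills in exactly the standard bookkeeping (parity of the spin monomials forcing $\partial\n=\emptyset$, respectively $\partial\n=A$) that this one-line hint leaves implicit.
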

 
The proof is elementary, starting from Taylor's expansion  
\be \label{RCR_spin_product}
\exp(\beta J_{x,y}\sigma_x\sigma_y)=\sum_{\n(x,y)\ge0} \frac{(\beta J_{x,y}\sigma_x\sigma_y)^{\n(x,y)}}{\n(x,y)!} \, .
\ee

It should be noted that any sourceless configuration, i.e. one with $\partial\n=\emptyset$, can be viewed as the edge count of a multigraph which is decomposable into a union of loops.  Configuration with $\partial\n=A$, such as the one appearing in the numerator of \eqref{eq:erg}, can be viewed as describing the edge count of a multigraph which is decomposable into a collection of loops and of paths connecting pairwise the sources, i.e. the sites of $A$.  

For any $\beta$  the insertion of  sources at $A=\{x,y\}$ produces a locally detectable change in the constraints on $\n$.   In addition, it requires the admissible multigraphs to link the two sites.  Since at small $\beta$ typical graphs are sparse,  the latter condition should be expected to produce an exponentially decaying factor.  In contrast, at high $\beta$, where one may expect robust percolation, the existence of a long connection would not in itself cause a detectable effect.   

More explicit statements can be reached through  combinatorial identities, such as the one presented by Griffiths-Hurst-Sherman~\cite{GHS70} in their derivation of the GHS inequality.   The following formulation is particularly suited for our purpose~\footnote{A simple way to process the combinatorics underlying Lemma~\ref{lem:switching}  is through the multigraph representation of the combination of the two random currents, cf. \cite{GHS70, Aiz82}.  The  switching argument can, and has been, deployed in a number of slightly varying setups,.} 

\begin{lemma}[Switching lemma]\label{lem:switching}  Let  $\G_1 \supset \G_2$ be a pair of graphs, one included in the other, and $J: \mathcal E_1 \to \Z_+$ a set of coupling constants which for the two graphs coincide on shared edges.   Then for any pair of vertex sets $A\subset \V_1$ and $B\subset \V_2$  
\begin{multline} \label{eq:switching}
\sum_{\substack{
\n_1:\partial\n_1=A\\
\n_2:\partial\n_2=B}}F(\n_1+\n_2)w(\n_1)w(\n_2) =  \\ 
\sum_{\substack{
\n_1:\partial\n_1=A\Delta B\\
\n_2:\partial\n_2=\emptyset}}F(\n_1+\n_2)w(\n_1)w(\n_2)\bm 1_{{\n_1+\n_2}\in \calF_B}  .
\end{multline}
where $\n_j$ are summed over the random current configurations on the corresponding graphs $\G_j$, 
$\calF_B$  is the set of currents $\m$ on $\G_1$ for which there exists a sub-current  $\n' \leq \m$ supported on  $\E_2$ with $\partial\n'=B$, and   $A\Delta B$  denotes the symmetric difference of sets ($:= (A\setminus B)\cup(B\setminus A)$). 
\end{lemma}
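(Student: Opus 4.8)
The plan is to reduce the claimed identity to a purely combinatorial statement about a single ``total'' current $\m$, and then settle that statement by a symmetric-difference bijection on an associated multigraph. First I would record the elementary weight factorization: for currents $\n_1$ on $\G_1$ and $\n_2$ on $\G_2$, set $\m := \n_1+\n_2$ (extending $\n_2$ by zero on $\E_1\setminus\E_2$). Since $J$ agrees on shared edges, a direct manipulation of the definition of $w$ together with $\binom{m}{k}=\tfrac{m!}{k!\,(m-k)!}$ gives
\[
w(\n_1)\,w(\n_2) \;=\; w(\m)\prod_{e\in\E_1}\binom{\m(e)}{\n_1(e)}\,,
\]
where the factor is $1$ on every edge of $\E_1\setminus\E_2$ (there $\n_1(e)=\m(e)$ is forced). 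The second ingredient is that source sets add edge-by-edge modulo $2$, so $\partial(\n_1+\n_2)=\partial\n_1\,\Delta\,\partial\n_2$; hence every configuration contributing to either side of \eqref{eq:switching} has $\partial\m=A\,\Delta\,B$.

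Next I would reorganize both sides as a single sum over total currents $\m$ with $\partial\m=A\,\Delta\,B$, pulling out the common factor $w(\m)\,F(\m)$. Writing $\n_2=\m-\n_1$ and recalling that on the left $\n_2$ must be supported on $\E_2$, the left-hand multiplicity at a fixed $\m$ is $\calN(\m;B)$ and the right-hand multiplicity is $\calN(\m;\emptyset)\,\mathbf 1_{\m\in\calF_B}$, where
\[
\calN(\m;X)\;:=\;\sum_{\substack{\n\le\m,\ \supp\n\subseteq\E_2\\ \partial\n=X}}\ \prod_{e\in\E_2}\binom{\m(e)}{\n(e)}\,.
\]
The constraint $\partial\n_1=A$ on the left (respectively $\partial\n_1=A\,\Delta\,B$ on the right) is automatic once $\partial\m=A\,\Delta\,B$ and the source of the $\E_2$-part is prescribed, again by additivity of sources. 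Thus it suffices to prove, for each fixed $\m$ with $\partial\m=A\,\Delta\,B$, the identity $\calN(\m;B)=\calN(\m;\emptyset)\,\mathbf 1_{\m\in\calF_B}$.

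To prove this, I would pass to the multigraph model: replace each $e\in\E_1$ by $\m(e)$ parallel copies to form a multigraph $M$, and let $M_2$ be its restriction to copies lying over $\E_2$. Since choosing $\n(e)$ of the $\m(e)$ copies on each edge can be done in $\binom{\m(e)}{\n(e)}$ ways, $\calN(\m;X)$ equals the number of edge subsets $S\subseteq M_2$ whose odd-degree vertex set $\partial S$ equals $X$. The map $S\mapsto\partial S$ is $\bbF_2$-linear from $(\{S\subseteq M_2\},\Delta)$ to $(\{\text{subsets of }\V_1\},\Delta)$. The set $\{S:\partial S=B\}$ is nonempty precisely when $\m\in\calF_B$ (equivalently $\calN(\m;B)>0$); and whenever it is nonempty, fixing any reference $S_0$ with $\partial S_0=B$ makes $S\mapsto S\,\Delta\,S_0$ a bijection onto $\{S:\partial S=\emptyset\}$, whence $\calN(\m;B)=\calN(\m;\emptyset)$. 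If $\m\notin\calF_B$, both sides vanish. This gives the displayed identity for every $\m$, and summing over $\m$ yields \eqref{eq:switching}.

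The hard part is not the bijection itself, which is short, but the careful bookkeeping in the two-graph setting $\G_1\supset\G_2$: one must keep the $\E_1\setminus\E_2$-part of $\m$ forced entirely into $\n_1$ while only the $\E_2$-part is free to be ``recolored,'' and verify that prescribing the sources on one summand forces those on the other. The decisive conceptual step is the symmetric-difference bijection, which converts the ``odd-boundary $B$'' count into the ``even-boundary'' count; its validity rests exactly on the nonemptiness hypothesis packaged into $\calF_B$, which is why that indicator is precisely what must appear on the right-hand side.
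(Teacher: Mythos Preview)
Your proof is correct and follows precisely the multigraph approach that the paper alludes to (the paper does not spell out a proof but refers, in a footnote, to ``the multigraph representation of the combination of the two random currents, cf.\ [GHS70, Aiz82]''). Your weight factorization, the reduction to the per-$\m$ identity $\calN(\m;B)=\calN(\m;\emptyset)\,\mathbf 1_{\m\in\calF_B}$, and the symmetric-difference bijection on the edge subsets of the associated multigraph are exactly the standard argument, and you have handled the two-graph bookkeeping (forcing the $\E_1\setminus\E_2$ part of $\m$ into $\n_1$, restricting the recoloring to $\E_2$) correctly.
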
  

The representation \eqref{eq:erg} and  related combinatorial identities were employed in Griffiths-Hurst-Sherman's proof of the concavity  of the magnetization $M(\beta,h)$ as a function of $h$ over $h\geq 0$.
The term 
\emph{random currents} was coined in \cite{Aiz82}  in the process of developing a stochastic geometric perspective on the structure of the spin-spin correlations between distant sites.

\subsection{Spin correlations as connectivity amplitudes} \mbox{ }\\[-1ex] 

The switching lemma allows to establish some exact cancellations though which emerges  a picture of the stochastic  geometry underlying the  structure of the model's  spin-spin correlations.     Following are three such relations, each derived by an elementary application of Lemma~\ref{lem:switching}.  \\  

For this purpose we denote :  \\ 

\noindent \mbox{ } \quad ${\bf P}^{A_1,...,A_k}_{\Lambda,\beta}[\dots] $ --  the probability average over independent $k$-tuples\\  
\mbox{ } \hspace{3cm}  of currents  $\{\n_1,..., \n_k\}$ with  source sets $\partial \n_j = A_j$\\[1ex] 
\noindent \mbox{ } \quad ${\bf E}^{A_1,...,A_k}_{\Lambda,\beta}[\dots] $ --  the corresponding expectation value \\[1ex]  
\mbox{ } \quad $\bm{1}[x\stackrel{\m}{\leftrightarrow} y]$\quad -- the indicator function of the event that  $x$ and  $y$  \\ 
\mbox{ } \hspace{2.5cm} are connected  by a path along  edges  of non-vanishing $\m$  \\[2ex]

\noindent  {\bf i.  The onset of long range order as a percolation transition}
\begin{equation}\label{eq:xy}
\boxed{ 
\left[ \langle\sigma_{x} \sigma_y \rangle_{\Lambda,\beta}\right]^2=
\frac
{\displaystyle\sum_{\substack{\n_1:\partial\n_1=\emptyset \\ \n_2:\partial\n_2=\emptyset}} w(\n_1) w(\n_2) \bm{1}[x\stackrel{\n_1+\n_2}{\longleftrightarrow} y] }
{\displaystyle\sum_{\substack{\n_1:\partial\n_1=\emptyset \\ \n_2:\partial\n_2=\emptyset}} w(\n)}\, 
={\bf P}^{\emptyset ,\emptyset}_{\Lambda,\beta}[x\stackrel{\n_1+\n_2}{\longleftrightarrow} y]} 
\end{equation}
By implication, the presence of long range order  corresponds to percolation, i.e. the appearance of an infinite cluster,  in a duplicated system of sourceless random currents~\cite{Aiz82}.   To be precise, the natural definition of these terms refers to  infinite system, while \eqref{eq:xy} refer to finite ones, albeit arbitrarily large.  However, with some experience in the theory of percolation, the correspondence is easy to establish.  \\ 

A more explicit discussion of the extension of the RCR formalism to infinite systems can be found in \cite{AizDumSid15}.  There, RCR percolation arguments (uniqueness of the infinite cluster) are combined with bounds based on reflection positivity for a proof of  the continuity of the spontaneous magnetization in the n.n. models.   (Unlike the previous proofs of such statements, that argument covers also the $3D$ case).  \\[2ex]

\noindent  {\bf ii.  Cluster based hitting probability}\\ 

The following can be viewed as an RCR random cluster analog of the random walk's \emph{hitting probability} (for paths conditioned at two ends).  

  \begin{eqnarray}\label{xyz}
\frac{\langle\sigma_x \sigma_y\rangle_{\Lambda,\beta}\langle \sigma_y \sigma_z\rangle_{\Lambda,\beta}}{\langle \sigma_x \sigma_z\rangle_{\Lambda,\beta}}& =&  
\frac
{\displaystyle\sum_{\substack{\n_1:\partial\n_1=\{x,z\} \\ \n_2:\partial\n_2=\emptyset \quad }} w(\n_1) w(\n_2) \bm{1}[x\stackrel{\n_1+\n_2}{\longleftrightarrow} y] }
{\displaystyle\sum_{\substack{\n_1:\partial\n_1=\{x,z\} \\ \n_2:\partial\n_2=\emptyset \quad }} w(\n)}\, 
\\[2ex]   \notag  
&:=& \boxed{{\bf P}^{\{x,z\},\emptyset}_{\Lambda,\beta} [y \in  {\mathbf C}_{\n_1+\n_2}(x)]}\,. 
\end{eqnarray}
In the last expression the probability  refers to a pair of random currents with  $\partial (\n_1+\n_2) = \{x,z\}$.  
The corresponding multigraph includes a path linking the two points decorated by a collection of loops, which in the absence of percolation are all finite.  

The equality expresses the probability that the random cluster ${\mathbf C}_{\n_1+\n_2}$, which by the source constraints  includes a path linking $x$ with $z$, includes also the point $y$.   
While clusters are of a  more complicated geometry than  simple random paths, it is natural to expect the path analogy to be useful, and in particular in high dimensions.  \\ 
 
\noindent  {\bf iii. Covariance and the probability of avoided intersections}\\[2ex]  
Partial derivatives of correlations are expressed in truncated correlation functions,~e.g.
\begin{eqnarray} \mbox{ } \quad  \qquad 
\frac{1}{\beta} \frac{\partial}{\partial J_{u,v}} \langle\sigma_x \sigma_y \rangle  =  
\langle\sigma_x \sigma_y \sigma_u \sigma_v \rangle -
\langle\sigma_x \sigma_y \rangle \, \langle \sigma_u \sigma_v \rangle  
=:  
\langle\sigma_x \sigma_y \,;\, \sigma_u \sigma_v \rangle
\end{eqnarray}  
(omitting the common subscript $(\Lambda,\beta)$).   Such expressions appear in the 
derivative in $\beta$ of the energy density $\langle\sigma_x \sigma_y \rangle $ with $\{x,y\}$ a pair of nearest neighbors, and of  the magnetic susceptibility 
$
\chi(\beta) = \sum_{y} \langle\sigma_0 \sigma_y \rangle\,. 
$

The random current expression for such truncated correlations is
 \begin{eqnarray}\label{eq:trunc}
\langle\sigma_x \sigma_y \,;\, \sigma_u \sigma_v \rangle  &= 
\langle\sigma_x \sigma_u \rangle \, \langle \sigma_y \sigma_v \rangle 
{\bf P}^{\{x,u\},\{y, v\}}[u\stackrel{\n_1+\n_2}{\not\longleftrightarrow} v]  \notag \\  
& \, \,  +\, \, \langle\sigma_x \sigma_v \rangle \, \langle \sigma_y \sigma_u \rangle 
{\bf P}^{\{x,v\},\{y, u\}}[u\stackrel{\n_1+\n_2}{\not\longleftrightarrow} v] \,  .  
\end{eqnarray}
In  case $u$ and $v$ are neighbors and $x$ and $y$ are far away, the probabilities of no intersection 
refer to a pair of random clusters in which is imbedded  a pair of long paths  that start next to each other.

For a hint of what one may expect, let us note that for the simple random walks such probabilities are uniformly bounded away from $0$ only in  dimensions $d>4$.   This simple process can be viewed as describing critically damped random paths.  It is then natural to ask whether the same, or a close analog of that,  is true for the connected clusters associated with the  random current system at $\beta_c$ (which as we saw is its percolation threshold).  

As it turns out, a version of the $d>4$ part of the last assertion is provable, and it  plays a role in the proofs that the critical exponent bounds \eqref{MF_exp} are saturated in those dimensions.~\cite{Aiz82,AizFer86}.  \\

\subsection{The $4$-point truncated correlation and clusters' intersection probability} \mbox{  } \\ 

Since the main objective of these notes is the structure of the multi-spin correlations we now fast forward to the Ursell function $U_4$, which plays a key role in our discussion. \\  

Upon power series expansion, or equivalently differentiation,  of \eqref{U_def}  one finds that for sign-flip symmetric fields 
\begin{multline}  \label{U4_Ising}
 U_4(x_1,...,x_4) := \langle\sigma_{x_1} \sigma_{x_2} \sigma_ {x_3} \sigma_{x_4} \rangle - \\ 
- \left[ \langle\sigma_{x_1} \sigma_{x_2} \rangle \langle \sigma_ {x_3} \sigma_{x_4} \rangle  
+  \langle\sigma_{x_1} \sigma_ {x_3} \rangle \langle \sigma_{x_2} \sigma_{x_4} \rangle
+ \langle\sigma_{x_1} \sigma_{x_4} \rangle \langle \sigma_{x_2} \sigma_ {x_3} \rangle
\right]  
\end{multline}  
As a direct expression of the deviation of the $4$-point function's  from Gaussian behavior,  $|U_4(x_1,...,x_4)|$ should be measured against  $\langle\sigma_{x_1} \sigma_{x_2} \sigma_ {x_3} \sigma_{x_4} \rangle $, or  equivalently  the largest of the product of the two pair correlation functions. 
For that RCR yields the following pair of expressions. 
\begin{theorem}[\cite{Aiz82}] \label{thm:U4}For any ferromagnetic system of Ising spins, at $h=0$, 
\begin{multline}  \label{U4_RCR}
U_4(x_1,...,x_4) = \\ 
=  \ -2 \langle\sigma_{x_1} \sigma_{x_2} \sigma_ {x_3} \sigma_{x_4} \rangle   \,\, 
{\bf P}^{\{x_1,x_2,x_3,x_4\},\emptyset} 
[\mbox{all $x_j$ are $(\n_1+\n_2)$ interconnected}]\,  \\     
= \boxed{ -2 \langle\sigma_{x_1} \sigma_{x_2} \rangle \, \langle \sigma_ {x_3} \sigma_{x_4} \rangle   \,\, 
{\bf P}^{\{x_1,x_2\}, \{x_3,x_4\} } 
[\{x_1,x_2\}\stackrel{\n_1+\n_2}{\longleftrightarrow} \{x_3,x_4\}] }
\hspace{1.7cm}  
\end{multline}
\end{theorem}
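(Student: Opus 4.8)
The plan is to derive both boxed identities by inserting the switching lemma (Lemma~\ref{lem:switching}) into the random current formula \eqref{eq:erg} for each of the four correlators making up $U_4$ in \eqref{U4_Ising}, and then reading off a purely combinatorial cancellation. Throughout I abbreviate $Z := \sum_{\partial\n=\emptyset} w(\n)$ (the factors $2^{|\Lambda|}$ cancel in every ratio), assume $x_1,\dots,x_4$ distinct, and write all connectivities relative to $\m := \n_1+\n_2$.

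First I would put all four terms of \eqref{U4_Ising} over the common denominator $Z^2$ with a single current-pair support. Using \eqref{eq:erg} twice,
\[
\langle\sigma_{x_1}\sigma_{x_2}\sigma_{x_3}\sigma_{x_4}\rangle\, Z^2 = \sum_{\substack{\partial\n_1=\{x_1,x_2,x_3,x_4\}\\\partial\n_2=\emptyset}} w(\n_1)w(\n_2),
\]
while each pair product, e.g. $\langle\sigma_{x_1}\sigma_{x_2}\rangle\langle\sigma_{x_3}\sigma_{x_4}\rangle Z^2 = \sum_{\partial\n_1=\{x_1,x_2\},\,\partial\n_2=\{x_3,x_4\}} w(\n_1)w(\n_2)$, is rewritten by Lemma~\ref{lem:switching} (with $A=\{x_1,x_2\}$, $B=\{x_3,x_4\}$, $F\equiv1$, so $A\Delta B=\{x_1,x_2,x_3,x_4\}$) as the same pair-support sum carrying the extra factor $\bm{1}[\m\in\calF_{\{x_3,x_4\}}]=\bm{1}[x_3\stackrel{\m}{\leftrightarrow}x_4]$. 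Choosing $B=\{x_3,x_4\}$, $\{x_2,x_4\}$, $\{x_2,x_3\}$ respectively for the three products yields the indicators $\bm{1}[x_3\leftrightarrow x_4]$, $\bm{1}[x_2\leftrightarrow x_4]$, $\bm{1}[x_2\leftrightarrow x_3]$, and hence
\[
U_4\, Z^2 = \sum_{\substack{\partial\n_1=\{x_1,x_2,x_3,x_4\}\\\partial\n_2=\emptyset}} w(\n_1)w(\n_2)\Big(1-\bm{1}[x_3\leftrightarrow x_4]-\bm{1}[x_2\leftrightarrow x_4]-\bm{1}[x_2\leftrightarrow x_3]\Big).
\]

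The heart of the argument is then a parity observation. Since $\partial\m=\{x_1,\dots,x_4\}$ and every connected component of a multigraph carries an even number of odd-degree vertices, the four sources are distributed among the components of $\m$ in even blocks; the only possibilities are that all four lie in one cluster, or that they split into two pairs (in one of the three pairings). Evaluating the bracket case by case, it equals $-2$ exactly when all four are interconnected, and $0$ in each of the three two-pair cases, because each two-pair partition activates precisely one of the three subtracted indicators. This collapses the sum to the first boxed expression, $U_4=-2\langle\sigma_{x_1}\sigma_{x_2}\sigma_{x_3}\sigma_{x_4}\rangle\,{\bf P}^{\{x_1,x_2,x_3,x_4\},\emptyset}[\text{all }x_j\text{ interconnected}]$, after dividing by $Z^2$ and recognizing the normalized measure.

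For the second boxed form I would apply Lemma~\ref{lem:switching} once more, now with $A=\{x_1,x_2\}$, $B=\{x_3,x_4\}$ and $F(\m)=\bm{1}[\text{all }x_j\text{ interconnected}]$. Its left-hand (pair-source) and right-hand (single-source) sums agree, and on the right the factor $\bm{1}[\m\in\calF_{\{x_3,x_4\}}]$ is implied by $F$ and so disappears; thus the ``all interconnected'' sum over $(\{x_1,x_2,x_3,x_4\},\emptyset)$ equals the one over $(\{x_1,x_2\},\{x_3,x_4\})$. Finally, when $\partial\n_1=\{x_1,x_2\}$ and $\partial\n_2=\{x_3,x_4\}$ the pairs $x_1,x_2$ and $x_3,x_4$ are already internally connected in $\m$ (two odd-degree vertices must share a component), so ``all four interconnected'' is identical to ``$\{x_1,x_2\}\stackrel{\m}{\leftrightarrow}\{x_3,x_4\}$'', which yields the second boxed identity. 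I expect the only genuinely delicate step to be the bookkeeping in the parity/case analysis, in particular verifying that the three pairings are matched bijectively to the three subtracted indicators so that the bracket vanishes off the fully connected event; everything else is a direct substitution.
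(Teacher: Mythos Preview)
Your proof is correct and follows essentially the same approach as the paper's: apply the switching lemma to recast each of the three pair products over the common support $(\partial\n_1,\partial\n_2)=(\{x_1,x_2,x_3,x_4\},\emptyset)$, invoke the parity constraint to see that the four sources split as either one cluster or two pairs, and read off the combinatorial factor $1-3=-2$ on the fully connected event and $0$ otherwise; then switch once more for the second identity. Your write-up is simply a more explicit version of the paper's terse sketch, including the normalization bookkeeping and the observation that under $(\{x_1,x_2\},\{x_3,x_4\})$ sources the event ``all interconnected'' reduces to the cross-cluster connection.
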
  
This important pair of identities are a simple consequence of Lemma~\ref{lem:switching}:  switching all  sources to $\n_1$, one finds 
that the resulting terms produce perfect cancellation if the $(\n_1+\n_2)$-connected cluster of $x_j$ form two distinct clusters.  The alternative is that all the sources are connected, and in that case the overall factors is $1-3=-2$.    From this observation follows the first line in \eqref{U4_RCR}. 
 The transition from the first expression to the second is by yet another application of the switching lemma. 
   
Let us add that the equality of the two expressions in \eqref{U4_RCR} is somewhat remarkable since the symmetry under permutations is obvious in one case but not the other.   In this respect,  the second relation is somewhat reminiscent of the (assumed) \emph{crossing symmetry}  of the conformal field theory, though here the relation is not limited to critical models.

We next turn to the implication of this relation for the model in different dimensions: $d=2$,  $d>4$, and then the marginal case of  $d=4$.   (One may note that we omit the case $d=3$, and not by chance.) \\ 

\bigskip 

\section{Emergent fermionic structures in $d=2$ dimensions} 

\subsection{Pfaffian correlations for boundary spins in planar models} \mbox{  }\\ 
\indent 
Since the breakthrough results of Onsager~\cite{Ons44}, and numerous works which followed, among the Ising models the planar ones form a special class, with strong links to other integrable systems.   Their  partition functions at $h=0$ admits a  determinant expression, in terms of $\{\beta J_{x,y}\}$-dependent matrices of size $2|\V|\times 2|\V|$.  For  translation invariant interactions on homogeneous graphs of degree $\nu$ the matrices can be presented as block diagonal with $4\times4$ blocks, indexed by the momentum (the Fourier transform variable), and through that a calculable expression follows. 

Furthermore, a rich collection of methods exist for  the lattice correlation functions~(cf. \cite{SchMatLie64,Kad69,KadCev71,CheSmir12,AizWar18} and references therein).  There is also a (partly separate) set 
of tools for the description of the conformally invariant field theory which is expected to emerge in the models' scaling limit~(cf. \cite{RivCar06,CamNew09,CheSmir12}).  

We shall not discuss here the special methods which have been devised for the analysis of the exactly solvable case of $2D$.  However, since in our discussion of the higher dimensions we shall invoke the random current representation,  let us note here some of its implications for the planar case.   
A striking example is the effectiveness of \eqref{U4_RCR} in explaining  the following curiosity, which 
was noted and derived by other means by Groeneveld-Boel-Kasteleyn~\cite{GBK78}.    

\begin{theorem}[Pfaffian structure of boundary spin correlations~\cite{GBK78}]\label{thm:pf_boundary}
For any Ising model on a planar graph $\G$, with only nearest-neighbor couplings $\{J_{x,y}\}$ (which need not be uniform), at $h=0$ and any $\beta$ the boundary spins' correlation functions  have the Pfaffian structure. 
\end{theorem}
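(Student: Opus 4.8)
The plan is to prove the Pfaffian identity
\[
\langle \sigma_{x_1}\cdots\sigma_{x_{2n}}\rangle = \Pf\big[(\langle\sigma_{x_i}\sigma_{x_j}\rangle)_{1\le i<j\le 2n}\big]
\]
for boundary sites listed in their cyclic order along $\partial\G$, by induction on $n$ using the random current representation and the switching lemma (Lemma~\ref{lem:switching}). The base case $n=1$ is vacuous, and the engine of the induction is the linear (Laplace-type) Pfaffian recursion
\[
\langle \sigma_{x_1}\cdots\sigma_{x_{2n}}\rangle = \sum_{j=2}^{2n}(-1)^{j}\,\langle\sigma_{x_1}\sigma_{x_j}\rangle\,\big\langle\textstyle\prod_{k\neq 1,j}\sigma_{x_k}\big\rangle .
\]
Granting it, the induction hypothesis that the $(2n-2)$-point functions are Pfaffians turns the right-hand side into the cofactor expansion of $\Pf$ along its first row, which is exactly $\Pf$ of the full two-point matrix.

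To establish the recursion I would fix $j$ and apply the switching lemma to the product $\langle\sigma_{x_1}\sigma_{x_j}\rangle\langle\prod_{k\neq1,j}\sigma_{x_k}\rangle$, taking the first current with source $\{x_1,x_j\}$, the second with source $A\setminus\{x_1,x_j\}$ (where $A=\{x_1,\dots,x_{2n}\}$), and $F\equiv 1$. Since $\{x_1,x_j\}\,\Delta\,(A\setminus\{x_1,x_j\})=A$, switching rewrites this product as $\langle\sigma_{x_1}\cdots\sigma_{x_{2n}}\rangle$ times the probability, under ${\bf P}^{A,\emptyset}_{\Lambda,\beta}$, of the event $\n_1+\n_2\in\calF_{A\setminus\{x_1,x_j\}}$. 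Using the elementary criterion that a current admits a subcurrent with a prescribed source set $B$ precisely when every cluster meets $B$ in an even number of sites, together with the fact that $\partial(\n_1+\n_2)=A$ forces each cluster to contain an even number of the $x_k$, this event is exactly $\{x_1\stackrel{\n_1+\n_2}{\leftrightarrow}x_j\}$. Hence the recursion reduces to the identity
\[
\sum_{j=2}^{2n}(-1)^{j}\,{\bf P}^{A,\emptyset}_{\Lambda,\beta}\big[x_1\stackrel{\n_1+\n_2}{\leftrightarrow}x_j\big] = 1 .
\]

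The heart of the argument, and where planarity enters, is to verify this identity configuration by configuration. Let $S\subseteq\{1,\dots,2n\}$ index the sources lying in the cluster of $x_1$; then the summand equals $\sum_{j\in S,\,j\ge2}(-1)^j$, so it suffices to show $\sum_{j\in S}(-1)^j=0$. Here I would invoke the topological fact that, because the $x_k$ sit on $\partial\G$ in cyclic order and clusters are connected subgraphs of a planar graph, two distinct clusters cannot have interleaving boundary sites (interleaving connected sets would have to share a vertex, hence coincide); thus cluster membership induces a \emph{non-crossing} partition of the cyclically ordered sources into even blocks. For such a block $S=\{j_1<\dots<j_{2m}\}$, non-crossingness forces the number of sites strictly between consecutive block elements to be even (that interval is a union of whole even blocks), so consecutive $j_i$ have opposite parity; the parities therefore alternate and $\sum_{i}(-1)^{j_i}=0$. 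Taking expectations yields the displayed identity and closes the induction.

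The main obstacle is precisely this planarity step: making rigorous that the source clusters of the doubled current induce a non-crossing partition of the boundary, and marrying that topological input to the alternating signs of the Pfaffian expansion. The special case $n=2$ is exactly the statement, read off from \eqref{U4_RCR}, that the crossing connection ${\bf P}^{\{x_1,x_3\},\{x_2,x_4\}}[\{x_1,x_3\}\leftrightarrow\{x_2,x_4\}]$ equals $1$, which is the Jordan-curve observation that the diagonals of a boundary quadrilateral must meet; the general $n$ is the systematic version of the same geometric dichotomy.
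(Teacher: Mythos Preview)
Your proof is correct and follows the same essential strategy as the paper---random currents, the switching lemma, and the planarity/Jordan-curve input---though you carry out the full inductive argument whereas the paper treats only $n=2$ explicitly (via \eqref{U4_RCR} and the forced intersection of intertwined paths) and defers the general case to \cite{ADW}.  Your organization through the Laplace expansion of the Pfaffian, reducing to the pointwise sign identity $\sum_{j\in S}(-1)^j=0$ on the non-crossing even partition of the boundary sources, is exactly the natural extension and matches the argument given there.
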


\begin{figure}
\begin{center}
\includegraphics[width = 0.80\textwidth]{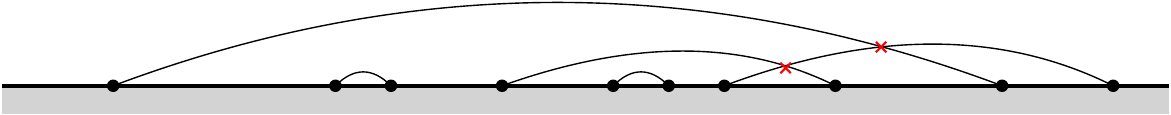}
\caption{A configuration of sites along the boundary of a planar domain.  
The  signature of each pairings' contributions to the Pfaffian equals the parity  of the number of crossings of the pairing paths ($(-1)^2 $ in the depicted example).}
\label{fig_pairing}
\end{center}
\end{figure}

 More explicitly:   
for any $2n$-tuple $(x_1,\dots,x_{2n})$ of different sites located along the outer boundary, or the boundary  of a fixed face of $\G$,  
\be \label{Pf_boundary}
\boxed{ \langle \sigma_{x_1}\cdots\sigma_{x_{2n}}\rangle_{\G,\beta} \ = \
\sum_{\pi \in \Pi_{2n}}  \sgn(\pi)  \prod_{j=1}^n \langle\sigma_{x_j} \sigma_{x_{\pi(j)}}\rangle_{\G,\beta}
\, }
\ee
where the sum is over pairing $\{j \leftrightarrow \pi(j)\}$
\footnote{Pairings satisfy  $\pi(j)\neq j$ and $\pi(\pi(j))=j$.  The signature is easily computed as the parity of the number of crossings in a generic collection of paths linking the corresponding pairs, if drawn on the same side of the graph's boundary, as depicted in Figure~\ref{fig_pairing}.  } 
 and $\sgn(\pi)$ is the parity of the number of pair crossings in any continuous deformation, along the boundary line, of the   labeled configuration under which  $\{x_j\}$  emerge relabeled as $\{x_{\pi(j)}\}$.

Equivalently, \eqref{Pf_boundary} can be expressed by saying that for the sites labeled in a cyclic order the correlation is given by a Pfaffian of the upper triangular array $[A_{i,j}]$ whose entries are given by the two point correlation function:\footnote{Pfaffians appeared very early in the exact solution of the free energy of two-dimensional Ising models \cite{Fis61,HurGre60,Kas63}, and are linked to determinants through the   relation $\Pf_n(A)^2  = \text{det}( A)$,
where $ A $ is the antisymmetric matrix whose entries   for $j<k$ are given by   $  A_{j,k} $.}
\be 
\langle \sigma_{x_1}\cdots\sigma_{x_{2n}}\rangle_{\G,\beta} = \Pf_n\left(\left[ \langle \sigma_{x_i} \sigma_{x_j}\rangle   \right]_{1\leq i<j\leq 2n}\right)\,.
\ee

Pfaffian structure of correlation functions is a fermionic counterpart to the Wick rule of the  (bosonic)  correlations of Gaussian fields.   In both cases all the $n$-point functions are simply expressible in terms of the two-point function.  Each evokes integrability in the sense of the existence of a non-interactive structure.  However, in contrast to the Gaussian correlations, Pfaffians indicate the presence of fermonic degrees of freedom.  
It is of interest to note that our RCR expression for $U_4$ sheds light on  {\bf  both cases}. \\     

Let $\{x_1,..., x_4\}$ be a cyclicly labeled configuration of points along the boundary of a finite planar graph $\G$.   In that situation any pair of  paths within   the   domain with intertwined end points ($x_1 \leftrightarrow x_3$ and $x_2 \leftrightarrow  x_4$), on its boundary have to intersect, and thus  
\be
{\bf P}^{\{x_1,x_3\}, \{x_2,x_4\} } 
[\{x_1,x_3\} \stackrel{\n_1+\n_2}{\longleftrightarrow} \{x_2,x_4\}] = 1 \,. 
\ee
Combined  with the last expression in \eqref{U4_RCR}  this implies 
\begin{align}  \label{U4_planar}
U_4(x_1,..., x_4)  = -2\langle\sigma_{x_1} \sigma_{x_3} \rangle \, \langle \sigma_ {x_2} \sigma_{x_4} \rangle\,, 
\end{align}  
which by the definition of $U_4$ means
\be  \label{Pf_4}
\langle\sigma_{x_1} \sigma_ {x_2}  \sigma_ {x_3} \sigma_{x_4} \rangle  = 
 \langle \sigma_ {x_1} \sigma_{x_2} \rangle - \langle\sigma_{x_1} \sigma_{x_3} \rangle \, \langle \sigma_ {x_2} 
\sigma_{x_4} \rangle +
\langle\sigma_{x_1} \sigma_{x_4} \rangle \, \langle \sigma_ {x_2} \sigma_{x_3} \rangle\,.
\ee
This proves \eqref{Pf_boundary} for the  $4$-point function.  The argument's extension to  higher correlations can be found in \cite{ADW}. \\  

\subsection{Emergent planarity in $2D$ finite range models} \mbox{  } \\

The above strict relations do not carry over into the extensions of the ferromagnetic model to planar models with more general variables in the Griffiths-Simon class.  The reason for that is also clear from the above argument:    Pfaffian correlations for systems built from Ising spins require probability $1$ for the actual intersection of paths whose planar projections cross.  This feature is lost in the construction depicted in Figure~\ref{fig_blocks}.  

Nevertheless, taking into account the expected fractal nature of the relevant RCR clusters at $\beta_c$, it is natural to conjecture that the Pfaffian structure of correlations emerges at the models' critical points.  
A related  situation shows up in $2D$  models on $\Z^2$ with finite range interactions which are not planar (and thus also no longer solvable).  For this case the following result  was proven  using the RCR and percolation type techniques.

\begin{theorem}\label{thm:Pf_finite_range}[\cite{AizDumTasWar18}]
Let $J$ be a set of coupling constants for an Ising model over the restriction of $\Z^2$ to the upper half-plane   $\bbH$, 
 which are:  
\begin{itemize}[noitemsep]
\item[(i)] ferromagnetic, $J_{x,y}\ge0$ for every $ x,y\in\mathbb \Z^2\cap \bbH$, 

\item[(ii)]  finite-range  and such that the induced graph is connected, 

\item[(iii)]  translation invariant, i.e.~that $J_{x,y}=J(x-y)$, and 
 
\item[(iv)] reflection invariant: $J_{0,(a,b)}=J_{0,(-a,b)}=J_{0,(a,-b)}=J_{0,(b,a)}$ for every $a,b\in \bbZ$.   \end{itemize}
  Then, for any collection of boundary points  $x_1=(k_1,0),\dots,x_{2n}=(k_{2n},0)$  satisfying $k_1<k_2<\dots<k_{2n}$ 
 \be \label{Pf_cor}
 \boxed{ \langle \sigma_{x_1}\cdots\sigma_{x_{2n}}\rangle_{\beta_c} \ = \
\Pf_n \big( \big[\langle \sigma_{x_i}\sigma_{x_j}\rangle_{\beta_c}\big]_{1\le i<j\le 2n} \big)\big[1 +o(1)\big]
\, } 
\ee
where $\beta_c$ is the critical inverse-temperature of the model, and $o(1)$ is a function of the points $x_1,\dots,x_{2n}$   which tends to zero for configuration sequences with  $\min_{1\le i<j\le 2n} \{|x_i-x_j|\} \to \infty$. 
\end{theorem}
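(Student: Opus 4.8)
The plan is to turn the boundary $2n$-point function into a sum over pairings of the sources weighted by random-current connection probabilities, and then to show that the half-plane topology forces these weights onto the Pfaffian signs in the limit of large separation. I would start exactly as in the derivation of Theorem~\ref{thm:U4}: represent $\langle\sigma_{x_1}\cdots\sigma_{x_{2n}}\rangle_{\beta_c}$ by a current $\n_1$ with source set $A=\{x_1,\dots,x_{2n}\}$ superimposed on an independent sourceless current $\n_2$, and apply the switching lemma (Lemma~\ref{lem:switching}) repeatedly to re-source the pair $\n_1+\n_2$. Each pairing $\pi$ of the $2n$ labels into chords $\{x_j,x_{\pi(j)}\}$ yields a term carrying the weight $\prod_{j}\langle\sigma_{x_j}\sigma_{x_{\pi(j)}}\rangle$ multiplied by a probability that the doubled current realises precisely the connection pattern of $\pi$; this is the $2n$-point analogue of the two expressions in~\eqref{U4_RCR}, and the $n=2$ case reproduces~\eqref{U4_planar}.

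The geometric input is the dichotomy already exploited in~\eqref{U4_planar}--\eqref{Pf_4}. With the points on the boundary in the cyclic order $k_1<\dots<k_{2n}$, call a chord pair \emph{interlaced} if their endpoints alternate along the boundary. In a genuinely planar model two interlaced chords must share a vertex, so their current paths are forced to intersect; this is the source of the factor $-2$ in~\eqref{U4_planar} and, after the switching bookkeeping, of the sign $\sgn(\pi)$ counting the parity of interlacements. I would show that this forcing persists asymptotically for the finite-range model: each interlaced crossing contributes a factor $1-o(1)$ (the forced intersection) and hence the sign $-1$, while a non-interlaced chord contributes an avoidance factor $1-o(1)$. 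Summing the resulting signed products over all pairings then assembles $\Pf_n\big([\langle\sigma_{x_i}\sigma_{x_j}\rangle_{\beta_c}]\big)\,[1+o(1)]$, which is the claim.

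The main obstacle is precisely this \emph{emergent planarity}: in the nearest-neighbour planar case the Jordan curve theorem makes interlaced paths intersect with probability one, but a finite-range bond lets a path hop over another without touching it, so the intersection is no longer automatic. The plan is to recover it asymptotically from the fractal density of the critical doubled-current clusters---by~\eqref{eq:xy} the duplicated sourceless system is at its percolation threshold at $\beta_c$, so a single long connection, conditioned on its endpoints, is thick on every scale and an independent transversal connection cannot thread past it with non-negligible probability. I would make this quantitative through percolation-type crossing estimates for the doubled current model, using the translation and reflection invariance of hypotheses~(iii)--(iv) to reduce boundary-to-boundary connection events to a bounded number of reflected copies, and the finite-range hypothesis~(ii) to guarantee that the relevant dual crossings are realised by short bonds. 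The most delicate point is the uniformity of the error: one must show that $o(1)$ depends only on $\min_{i<j}|x_i-x_j|$ and not on $n$, which is where criticality and the finite range of the interaction enter essentially.
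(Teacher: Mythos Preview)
Your outline is broadly aligned with the paper's own treatment, which should be said up front: the paper does \emph{not} give a proof of this theorem but only an intuitive sketch and a pointer to~\cite{AizDumTasWar18}.  The sketch it gives is essentially the one you describe---represent the correlation through duplicated random currents, note that in the strictly planar case interlaced source pairs are topologically forced to connect (yielding the Pfaffian signs as in~\eqref{U4_planar}--\eqref{Pf_4} and its extension to $2n$ points in~\cite{ADW}), and then argue that this forcing is recovered asymptotically at $\beta_c$ by percolation-type estimates on the doubled-current clusters.

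Two small corrections are worth making.  First, the paper's heuristic for \emph{why} crossing clusters must intersect is slightly sharper than ``thick on every scale'': the point is that two long critical clusters whose planar projections cross will experience many switchbacks, hence a number of near-encounters that grows at least logarithmically in the separation, and at each such encounter the conditional probability of actually connecting is bounded away from zero.  Non-intersection therefore requires avoiding all of them, which is exponentially unlikely in their number.  This multi-scale ``many independent chances'' mechanism is what the percolation arguments of~\cite{AizDumTasWar18} make rigorous, and is worth naming explicitly in your plan.

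Second, your final sentence misreads the statement: the theorem does \emph{not} assert that the $o(1)$ is uniform in $n$.  It says that for each fixed $2n$-tuple the error tends to zero as $\min_{i<j}|x_i-x_j|\to\infty$; the implicit constants may, and in the cited proof do, depend on $n$.  So what you flag as ``the most delicate point'' is not actually part of the claim.
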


An intuitive explanation of the mechanism at work here is the 
fractality of the connecting paths, and of the connected clusters, at the critical point.   
Even if, due to the finite range of hopping the connections over a two dimensional graph are not strictly planar, when the projections of two fractal paths cross, there would typically  be multiple switchbacks, in which case non-intersection would require a large number of avoided linkages. The probability that all such opportunities to connect would be missed may be expected to vanish exponentially in the number of such encounters.  By scaling, that number should typically grow at least logarithmicaly in the distance  (with a related estimate for the probability of  exceptional behavior). 

As this outline suggest, the proof of Theorem~\ref{thm:Pf_finite_range}  is seeped in percolation type arguments, spelled in~\cite{AizDumTasWar18}.  
The stochastic geometric picture suggests that also the following may be true.  

\begin{conjecture} 
The conclusion of Theorem~\ref{thm:Pf_finite_range}  holds also for similar systems with the Ising spins replaced by spin variables in the Simon-Griffiths class (at the correspondingly adjusted $\beta_c$).  
\end{conjecture}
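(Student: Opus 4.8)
The plan is to reduce the statement for Griffiths--Simon variables to the elemental Ising case by means of the very representation that defines the class. Recall that each Griffiths--Simon variable $\varphi_x$ is, in distribution, a limit of normalized sums $\varphi_x=c_N\sum_{i\in B_x}\sigma_i$ of $N$ elemental Ising spins ferromagnetically coupled within a block $B_x$ placed at the site $x$~\cite{Gri69,SimGri73}. A system of such variables on $\Z^2\cap\bbH$ is therefore the $N\to\infty$ limit of an \emph{augmented} Ising model, in which every site carries a block $B_x$ with the internal (mean-field) ferromagnetic couplings that generate the $\varphi^4$ marginal, while the inter-block couplings are inherited from $J$. The augmented couplings remain ferromagnetic and of finite range, and---since each block is spatially \emph{localized} at a single coarse site---the large-scale geometry of the augmented graph is still that of $\Z^2\cap\bbH$. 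As already noted in the text preceding the conjecture, the blocks destroy exact planarity (a block is an internally complete graph, not a point), so the exact Pfaffian identity \eqref{Pf_4} cannot survive; the most one can hope for is the asymptotic relation of Theorem~\ref{thm:Pf_finite_range}, which is exactly what the conjecture asserts.

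The key reduction rests on the multilinearity of joint cumulants. Both the Ursell function $U_4$ and the two-point function are block-averages of their elemental-spin counterparts:
\be
U_4(\varphi_{x_1},\dots,\varphi_{x_4}) \;=\; c_N^4\sum_{i_1\in B_{x_1}}\!\!\cdots\!\!\sum_{i_4\in B_{x_4}} U_4(\sigma_{i_1},\dots,\sigma_{i_4}),
\ee
and likewise $\langle\varphi_{x_i}\varphi_{x_j}\rangle=c_N^2\sum_{i\in B_{x_i},\,j\in B_{x_j}}\langle\sigma_i\sigma_j\rangle$, so that the normalizations $c_N$ cancel between the two sides of the sought relation. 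It therefore suffices to establish, uniformly over elemental spins $i_k\in B_{x_k}$ with the blocks at boundary sites in the order $x_1<x_2<x_3<x_4$, the asymptotic elemental identity $U_4(\sigma_{i_1},\dots,\sigma_{i_4})=-2\,\langle\sigma_{i_1}\sigma_{i_3}\rangle\langle\sigma_{i_2}\sigma_{i_4}\rangle\,[1-o(1)]$, the analog of the exact planar identity \eqref{U4_planar}. For this I would apply Theorem~\ref{thm:U4} with the intertwined pairing, writing $U_4(\sigma_{i_1},\dots,\sigma_{i_4})$ as $-2\langle\sigma_{i_1}\sigma_{i_3}\rangle\langle\sigma_{i_2}\sigma_{i_4}\rangle$ times the probability that the two source currents with sources $\{i_1,i_3\}$ and $\{i_2,i_4\}$ become linked; summing over the blocks then yields $U_4(\varphi_{x_1},\dots,\varphi_{x_4})=-2\langle\varphi_{x_1}\varphi_{x_3}\rangle\langle\varphi_{x_2}\varphi_{x_4}\rangle\,[1-o(1)]$, and hence, through the definition \eqref{U4_Ising} of $U_4$, the asymptotic Pfaffian relation \eqref{Pf_4} at the four-point level.

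Everything thus comes down to showing that this connection probability tends to $1$ as $\min_{i<j}|x_i-x_j|\to\infty$, and the main obstacle is to do so on the augmented graph \emph{uniformly in the block size} $N$. The mechanism is precisely the one sketched after Theorem~\ref{thm:Pf_finite_range}: two critical clusters whose defining paths join intertwined boundary points must cross in the bulk, and at such a crossing the fractal, switchback-rich geometry of critical two-dimensional paths forces the clusters to actually intersect except on an event of probability decaying exponentially in the (logarithmically growing) number of encounters. Since a block occupies a single coarse site, this crossing is forced at scales far larger than the intra-block scale, so the internal block structure enters only as a bounded-range modification of the bulk geometry. The technical heart---and the step I expect to be hardest---is to establish the requisite critical percolation and fractality estimates for the \emph{augmented} model's random-current clusters uniformly as $N\to\infty$ (equivalently, across the full range of $\varphi^4$ couplings), controlling the possibly large within-block currents in that limit and pinning down the correspondingly adjusted $\beta_c$. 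A natural route is a stochastic-geometric comparison that contracts each block to a point and dominates the contracted clusters by Ising-type clusters for which the fractality inputs of \cite{AizDumTasWar18} are already available.

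Finally, the passage from the four-point relation to the full $2n$-point Pfaffian \eqref{Pf_cor} should follow the inductive percolation scheme of \cite{AizDumTasWar18} together with the higher-order switching arguments of \cite{ADW}, now carried out on the augmented graph. Here one must additionally justify interchanging the two limits $N\to\infty$ and $\min_{i<j}|x_i-x_j|\to\infty$, which again reduces to the uniformity in $N$ of the $o(1)$ error established above.
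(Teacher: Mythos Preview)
The statement you are attempting to prove is labeled a \emph{Conjecture} in the paper, and the paper offers no proof of it --- only the heuristic paragraph immediately preceding it. So there is no ``paper's proof'' against which to compare your proposal; the relevant question is whether your outline actually closes the gap that makes this a conjecture rather than a theorem.

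It does not. Your strategy is the natural one, and it tracks the paper's own heuristic: pass to the augmented Ising model via the Griffiths--Simon construction, invoke the random-current identity \eqref{U4_RCR} at the elemental level, and argue that the intertwined clusters must intersect because critical two-dimensional paths are fractal. But the step you yourself flag as ``hardest'' --- establishing the crossing and fractality estimates of \cite{AizDumTasWar18} for the augmented model's random currents \emph{uniformly in the block size $N$} --- is precisely the open problem. The inputs that drive the proof of Theorem~\ref{thm:Pf_finite_range} (RSW-type box-crossing estimates, mixing bounds, quantitative control on the probability that two crossing clusters fail to merge) are proved in \cite{AizDumTasWar18} for genuine finite-range Ising models on $\Z^2$; none of them is currently known to hold uniformly as $N\to\infty$, where the intra-block couplings diverge and the critical point $\beta_c(N)$ drifts. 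Your proposed ``contract each block to a point and dominate by Ising-type clusters'' is not a proof either: contraction is not monotone for random-current connectivity in any direction that would let you import the needed estimates, and you give no argument for why such a domination should hold. Until those uniform-in-$N$ percolation inputs are established, the outline remains a program, which is exactly why the paper records the statement as a conjecture.
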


With only some coincidental exceptions~\cite{AizWar18},  Ising spin systems which are not strictly planar are beyond the reach of exact solutions.   However there has been recent progress on rigorous perturbative methods which allow to prove a degree of universality in certain aspects of the critical behavior, of models close enough to the solvable case~\cite{GiuMas13,Giu21}.   In comparison, Theorem~\ref{thm:Pf_finite_range}  provides less detailed information, but it is is not limited to weak couplings.    

\subsection{Non-gaussian structure of the scaling limit of $2D$ bulk correlations}  \mbox{  }  \\  

The RCR explanation of the Pfaffian structure of the boundary-to-boundary spin correlations makes it also clear that this behavior does not extend to the multi-spin correlations in the bulk.   

It was also noted in 
\cite{Aiz82} that by elementary geometric considerations in the symmetric case of the vertices of a square allied with the symmetry axes,   for two random currents whose  sources at the pairs of opposite vertices,   the probability of paths intersection is at least $1/2$, and thus for any $\beta$: 
\be 
\boxed{ |U_4(x_1,...,x_4)| \ \geq \ \langle \sigma_{x_1} \sigma_{x_3}\rangle \, \langle \sigma_{x_2} \sigma_{x_4}\rangle } 
\ee 
 Related considerations allow to deduce that the $2D$ model's scaling limit is definitely not gaussian.  However, since this model is solvable by other means, we should not delve into this point here.  \\

\section{The $U_4$ gaussianity criterion} \label{sec:U4criterion} 

The non-vanishing in the scaling limit of the properly rescaled $4$-point function $U_4$ clearly implies that the limit is not Gaussian.  The converse statement is not obvious, but  for the models considered here is also true.  This fact was first noted and proven by C. Newman~\cite{New75}(Theorem 8) through the  Lee-Yang property.

\begin{proposition}[\cite{New75}]
If, for an Ising model,  $U_{2m}(x_1,..., x_{2n}) \equiv 0$ for some $m\in \mathbb N$,   then the model is Gaussian. 
\end{proposition}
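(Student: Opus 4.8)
The plan is to reduce the statement about the multi-point functions $U_{2n}$ to a one-variable statement to which the model's Lee--Yang property applies directly. Fix a nonnegative profile $f\ge 0$ and consider the smeared variable $\Phi_f$ of \eqref{eq:functional}; its cumulants are the smeared Ursell functions
\[
  u_n[f]\;=\;\int U_n(x_1,\dots,x_n)\,\prod_{j=1}^n f(x_j)\,dx_1\cdots dx_n ,
\]
so the hypothesis $U_{2m}\equiv 0$ is exactly $u_{2m}[f]=0$ for every such $f$. Since the model lies in the Lee--Yang class (the Ising case being classical, the Griffiths--Simon class following since its variables are limits of ferromagnetically coupled Ising spins, and the property surviving the limit in which the $U_n$ are understood), for each $f\ge 0$ the generating function
\[
  G(z)\;:=\;\big\langle e^{\,z\,\Phi_f}\big\rangle
\]
is entire, even (by the spin-flip symmetry), and has all of its zeros on the imaginary axis. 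The requisite finite order of $G$ is itself a consequence of the Lee--Yang property: the alternating-sign cumulant inequalities (already visible at $n=4$ in the bound $U_4\le 0$ coming from \eqref{U4_RCR}) bound the even moments of $\Phi_f$ by the corresponding Gaussian moments, so $G$ is of order at most $2$.

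With these inputs I would invoke the Hadamard factorization. Writing the zeros as $\{\pm i\gamma_k\}$ with $\gamma_k\in\R$ and using $G(0)=1$,
\[
  G(z)\;=\;e^{\,c z^2}\prod_k\Big(1+\frac{z^2}{\gamma_k^2}\Big),\qquad c\ge 0 .
\]
Expanding $\log(1+w)=\sum_{j\ge1}\frac{(-1)^{j-1}}{j}w^j$ and reading off the Taylor coefficients of $\log G$ identifies the cumulants: for every $m\ge 2$,
\[
  (-1)^{m-1}\,u_{2m}[f]\;=\;\frac{(2m)!}{m}\sum_k \gamma_k^{-2m}\;\ge\;0 .
\]
This is Newman's cumulant inequality, and the decisive feature is its equality case. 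If $u_{2m}[f]=0$ for one $m\ge 2$, then $\sum_k\gamma_k^{-2m}=0$, forcing the product to be empty; hence $G(z)=e^{cz^2}$, the variable $\Phi_f$ is Gaussian, and in particular $u_{2n}[f]=0$ for all $n\ge 2$.

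It remains to pass from the smeared identities back to the functions $U_{2n}$. The conclusion $u_{2n}[f]=0$ holds for every $n\ge 2$ and every $f\ge 0$. Since $f\mapsto u_{2n}[f]$ is the diagonal restriction of a symmetric $2n$-linear form (a homogeneous polynomial in $f$), and the positive cone has nonempty interior, polarization forces the underlying form to vanish, i.e. $U_{2n}\equiv 0$ for all $n\ge 2$. By the characterization \eqref{triv} this is precisely Wick's law, so the field is Gaussian. The main obstacle, and the place where the hypotheses really enter, is the analytic input for the factorization: one must secure that $G$ is entire of finite order and that the Lee--Yang zero constraint is genuinely inherited in the limiting sense in which the $U_n$ are taken. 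For bounded Ising spins $G$ has exponential type and $c=0$ is forced, so the genuinely Gaussian case $c>0$ can arise only for the unbounded limiting field, where the order bound must be drawn from the correlation inequalities rather than from boundedness of the spins.
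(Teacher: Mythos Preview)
Your argument is correct and is precisely Newman's Lee--Yang approach that the paper cites as the proof: Hadamard factorization of the moment generating function of a nonnegatively smeared variable, the resulting expression $(-1)^{m-1}u_{2m}[f]=\tfrac{(2m)!}{m}\sum_k\gamma_k^{-2m}$ for the cumulants in terms of the Lee--Yang zeros, and the polarization step back to $U_{2n}\equiv 0$. The paper does not spell this out but simply refers to \cite{New75}; your caveats about the order bound are apt for the GS class but, as you note, unnecessary for the bounded Ising spins to which the proposition is stated.
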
 
More quantitative estimates can be derived from the expressions provided in \cite{New75} for the Ursel functions in terms of the Lee-Yang zeros.  Alternatively, the statement can also be deduced from the following bound on the $N$-point functions' deviation from the corresponding Week rule, which was derived in \cite{Aiz82}  through the random currents' switching lemma.   

\begin{proposition}[\cite{Aiz82}(Theorem 12.1)] \label{prop:U4_suffices}
In ferromagnetic Ising models,  
for any $n\in \mathbb {N}$
\begin{multline}
 \label{U4_suffices}
0  \leq \ \mathcal {G}_{2n}[S_2](x_1,\dots,x_4) - S_{2n}(x_1,...,x_{2n})  \\ 
\leq  \ \boxed{\,  \frac{-3}{2} \sum_{1\leq j<k<l<m \leq 2n}   U_4(x_j,x_k,x_l,x_m) \cdot 
\mathcal {G}_{2n-4}[S_2](...,\cancel{x_j},...,\cancel{x_k},...,\cancel{x_l},...,\cancel{x_m},...  )}
\end{multline} 
where $\mathcal {G}_{2n}$ is the Wick functional defined in \eqref{triv} 
\end{proposition}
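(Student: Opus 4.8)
The plan is to prove both inequalities simultaneously by induction on $n$, peeling off a single source with the switching lemma (Lemma~\ref{lem:switching}) and charging every deviation from Wick's law \eqref{triv} to the four–point identity of Theorem~\ref{thm:U4}. The base cases are immediate: for $n=1$ both sides vanish, and for $n=2$ one has $\mathcal{G}_4[S_2]-S_4=-U_4$, so the claim reads $0\le -U_4\le -\tfrac32 U_4$, which holds because $U_4\le 0$ by \eqref{U4_RCR} (and already exhibits the slack in the constant $\tfrac32$).

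\emph{Engine and lower bound.} Fix $x_1$ and apply Lemma~\ref{lem:switching} with $A=\{x_1,x_k\}$ and $B=\{x_1,\dots,x_{2n}\}\setminus\{x_1,x_k\}$, so that $A\Delta B$ is the full source set. The event $\mathcal{F}_B$ appearing in \eqref{eq:switching} — existence of a subcurrent whose sources are all points except $x_1,x_k$ — is exactly the event that $x_1$ and $x_k$ lie in a common cluster of $\n_1+\n_2$, which yields the exact identity
\[
S_2(x_1,x_k)\,S_{2n-2}(x_2,\dots,\cancel{x_k},\dots,x_{2n})=S_{2n}(x_1,\dots,x_{2n})\,\mathbf{P}^{\{x_1,\dots,x_{2n}\},\emptyset}\big[x_1\stackrel{\n_1+\n_2}{\longleftrightarrow}x_k\big].
\]
Writing $N_1$ for the number of sources joined to $x_1$ and using that every cluster carries an even number of sources (so $N_1$ is odd, hence $\ge 1$), summation over $k$ gives $\sum_{k=2}^{2n}S_2(x_1,x_k)\,S_{2n-2}(\dots\cancel{x_k}\dots)=\mathbf{E}[N_1]\,S_{2n}$ with $\mathbf{E}[N_1]\ge 1$. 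Comparing with the Wick recursion $\mathcal{G}_{2n}=\sum_k S_2(x_1,x_k)\,\mathcal{G}_{2n-2}(\dots\cancel{x_k}\dots)$ and inserting the inductive bound $S_{2n-2}\le\mathcal{G}_{2n-2}$, this gives $S_{2n}\le\mathbf{E}[N_1]\,S_{2n}=\sum_k S_2(x_1,x_k)S_{2n-2}\le\mathcal{G}_{2n}$, the left inequality.

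\emph{Upper bound and the (I)/(II) split.} Write $\mathcal{G}_{2n}-S_{2n}=(\mathrm{I})+(\mathrm{II})$ with
\[
(\mathrm{I})=\sum_{k=2}^{2n} S_2(x_1,x_k)\,\big[\mathcal{G}_{2n-2}-S_{2n-2}\big](x_2,\dots,\cancel{x_k},\dots,x_{2n}),\qquad (\mathrm{II})=(\mathbf{E}[N_1]-1)\,S_{2n}.
\]
For $(\mathrm{I})$ the inductive upper bound produces, for each $4$–tuple $T\subset\{x_2,\dots,x_{2n}\}$ avoiding $x_1$, the factor $-\tfrac32 U_4(T)$ times $\sum_{k\notin T}S_2(x_1,x_k)\,\mathcal{G}_{2n-6}(\text{rest})$; the inner sum is itself the Wick recursion, so it reconstitutes $\mathcal{G}_{2n-4}$ on the complement of $T$. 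Thus $(\mathrm{I})$ accounts for exactly the $4$–tuples not containing $x_1$, already with the constant $\tfrac32$.

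\emph{Crux and the main obstacle.} It remains to bound $(\mathrm{II})=(\mathbf{E}[N_1]-1)S_{2n}$ by $-\tfrac32\sum_{T\ni x_1}U_4(T)\,\mathcal{G}_{2n-4}(\text{rest})$, i.e. by the $4$–tuples through $x_1$; together with $(\mathrm{I})$ this closes the induction at the common constant $\tfrac32$. Here $N_1$ being odd makes the excess $N_1-1$ even, matching the pairing of excess connections into $U_4$ factors. The idea is to re-run the switching lemma on the event that $x_1$'s cluster carries at least four sources, extracting for each such configuration a connection event of the shape $\mathbf{P}^{\{x_1,x_b\},\{x_c,x_d\}}[\{x_1,x_b\}\stackrel{\n_1+\n_2}{\longleftrightarrow}\{x_c,x_d\}]=|U_4(x_1,x_b,x_c,x_d)|/\big(2\,S_2(x_1,x_b)S_2(x_c,x_d)\big)$ from \eqref{U4_RCR}, while bounding the true correlation carried by the remaining $2n-4$ sources by its Wick value $\mathcal{G}_{2n-4}$ (again through the already–proven lower bound). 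The hard part is precisely this step: organising the switching so that every excess connection in $x_1$'s cluster is charged to one $4$–tuple through $x_1$ without double counting, and checking that the resulting combinatorial multiplicity never exceeds $\tfrac32$ — the slack already seen at $n=2$, where $(\mathrm{II})$ equals $|U_4|$ yet is allotted $\tfrac32|U_4|$.
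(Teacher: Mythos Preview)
The paper does not supply its own proof of this proposition; it only cites \cite{Aiz82}, Theorem~12.1, remarking that it ``was derived \ldots\ through the random currents' switching lemma.''  So there is no in-paper argument to match against, and I evaluate your outline on its own merits.

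Your framework is sound and the easy parts are correct.  The switching identity
\[
S_2(x_1,x_k)\,S_{2n-2}(\ldots,\cancel{x_k},\ldots)
= S_{2n}\cdot\mathbf{P}^{\{x_1,\dots,x_{2n}\},\emptyset}\big[x_1\stackrel{\n_1+\n_2}{\longleftrightarrow}x_k\big]
\]
is right, the parity argument forcing $N_1\ge1$ is right, and the inductive lower bound $S_{2n}\le\mathcal G_{2n}$ follows as you say.  Part~(I) of the upper bound is also clean: the inner sum over $k\notin T$ does reconstitute $\mathcal G_{2n-4}$ on the complement of $T$ via the Wick recursion, so (I) exactly accounts for the $4$-tuples avoiding $x_1$, already with the constant $\tfrac32$.

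The gap is part~(II), and you acknowledge it yourself (``The hard part is precisely this step'').  Bounding $(\mathbf E[N_1]-1)\,S_{2n}$ by $\tfrac32\sum_{T\ni x_1}|U_4(T)|\,\mathcal G_{2n-4}$ is where the entire content of the upper bound lives, and your sketch does not close it.  The obstacle is concrete: the connection event you want to extract, $\mathbf P^{\{x_1,x_b\},\{x_c,x_d\}}[\cdot]$, lives on a \emph{two-current system with four sources}, whereas $(\mathbf E[N_1]-1)S_{2n}$ is an expectation under $\mathbf P^{\{x_1,\dots,x_{2n}\},\emptyset}$ with all $2n$ sources present.  Passing from the latter to the former requires a further switching that decouples the chosen $4$-tuple from the remaining $2n-4$ sources, and that step does not come for free: one must control how the residual sources interact with the $x_1$-cluster, and then check that the resulting combinatorial multiplicity stays below $\tfrac32$.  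Your remark that the $n=2$ slack ``$|U_4|\le\tfrac32|U_4|$'' is available is suggestive but does not by itself propagate through the induction.  Until this decoupling-and-counting is actually carried out, the argument is incomplete.
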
  

One should note that \eqref{U4_suffices} is homogeneous of degree $1$ in each of the spin variables, and hence the relation passes onto the scaling limit.  It follows that the limit is Gaussian if and only if the ratio of $|U_4(x_1,...,x_4)|$ to $  S_4(x_1,...,x_4)$ tends to zero, i.e. the Wick law holds at the level of the $4$-point function.

It follows that a quantifier of the deviation from gaussianity can be found in the dimensionless ratio 
\be \label{U4ratio} \notag
\frac{U_4(x_1,...,x_4)  }{ \langle\sigma_{x_1} \sigma_{x_2} \sigma_{x_3} \sigma_{x_4} \rangle}
\ee
for quadruples of sites at pairwise distances of order $L$. 
For concreteness sake, one may represent this through the scale dependent ratio
\be
\boxed{R_L(\beta) := \frac{\displaystyle \sum_{x_1,...,x_4 \in \Lambda(L)}|U_4(x_1,...,x_4) |} 
                 { \Big[ \displaystyle \sum_{x_1,x_2 \in \Lambda(L)}S_2(x_1,x_2) \Big]^2}  }
                \left.\begin{cases} \leq \\ \geq \frac{1}{3} \end{cases} \!\!\!\!\!\! \right\}
           \frac{\displaystyle\sum_{x_1,...,x_4 \in \Lambda(L)}|U_4(x_1,...,x_4) |} 
                 { \displaystyle\sum_{x_1,...,x_4 \in \Lambda(L)}S_4(x_1,...,x_4) }  
                 \ee 
where the inequalities are stated just to convey the essential equivalence of the two slightly different ratios. 

With the above choice, $R_L$ can be presented as minus the Ursel function  
\be
R_L(\beta) = -\left[ \langle T_L^4 \rangle_{\beta} - 3 \langle T_L^2 \rangle_{\beta} ^2 \right] \,. 
\ee 
associated with the block-average of $\sigma_x$:  
\be \label{def_Tf_scaled_sigma}
T_{L}(\sigma) :=  \frac{1}{\sqrt{\Sigma_L}}\sum_{x\in \Lambda(L)}  \, \sigma_x   \, 
\ee
which by its normalization is of unit variance:  $\langle T_L^2 \rangle_{\beta}  =1$. 

Proposition~\ref{prop:U4_suffices} has the following  notable implication.
\begin{corollary}  \label{Corro_U4}
In any sequential limit in which $\beta \nearrow \beta_c$, $L\to \infty$ 
and
\be \label{R_condition} 
 R_{rL}(\beta) \to 0\,  \quad \mbox{for  each $r<\infty$} \, 
\ee  
the collection of variables  $\{T_{f,L} :  f\in C_0(\R^d)\}$  
 of \eqref{def_Tf_scaled}  converges in distribution, and their joint limit describes a Gaussian field. 
 \end{corollary}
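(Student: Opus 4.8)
The plan is to prove the statement by the method of moments, with Proposition~\ref{prop:U4_suffices} playing a dual role: its positivity half supplies the a~priori moment bounds needed for tightness and uniqueness, while its upper bound converts the hypothesis $R_{rL}(\beta)\to0$ into convergence of every moment to its Wick value. Since each $T_{f,L}$ is linear in the spins and the odd Schwinger functions vanish by spin-flip symmetry, it suffices to analyze the even joint moments $\langle\prod_{j=1}^{2n}T_{f_j,L}\rangle$, which expand as the normalized lattice sum $\Sigma_L^{-n}\sum_{x_1,\dots,x_{2n}}\prod_j f_j(x_j/L)\,S_{2n}(x_1,\dots,x_{2n})$. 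The goal is to show that each such moment converges to the pairing sum built from a limiting covariance, and that the limiting moment sequence determines a unique, hence Gaussian, law.

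First I would extract tightness and the Carleman input from the positivity half of Proposition~\ref{prop:U4_suffices}, namely $S_{2n}\le\mathcal G_{2n}[S_2]$, together with Griffiths' inequality $S_{2n}\ge0$. For a single (possibly signed) test function these give $\langle T_{f,L}^{2n}\rangle\le\Sigma_L^{-n}\sum_{x_1,\dots,x_{2n}}\prod_j|f(x_j/L)|\,\mathcal G_{2n}[S_2]=(2n-1)!!\,\langle T_{|f|,L}^2\rangle^{\,n}$, so the even moments of $T_{f,L}$ are dominated by Gaussian moments whose variance is that of $T_{|f|,L}$. Since $\langle T_{|f|,L}^2\rangle\le\|f\|_\infty^2\,\Sigma_{rL}/\Sigma_L\le C_r\|f\|_\infty^2$ for $f$ supported in $[-r,r]^d$ — the ratio $\Sigma_{rL}/\Sigma_L$ being bounded for fixed $r$ by the reflection-positivity and monotonicity bounds of Theorem~\ref{thm_S2bounds} — these bounds are uniform along the sequence. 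Via generalized Hölder they control all mixed moments, furnishing tightness of the finite families $\{T_{f_j,L}\}$ and the multivariate Carleman condition, so that every subsequential limit is uniquely determined by its moments.

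Next I would establish convergence of the moments to Wick values using the upper half of Proposition~\ref{prop:U4_suffices}. Fixing $r$ so that all $f_j$ are supported in $[-r,r]^d$, and bounding the signed weight by $\prod_j|f_j(x_j/L)|$ before invoking $0\le\mathcal G_{2n}[S_2]-S_{2n}$, the deviation $|\langle\prod_j T_{f_j,L}\rangle-\text{(Wick value)}|$ is at most a finite sum over quadruples $\{j,k,l,m\}$ of a factor $\Sigma_L^{-2}\sum_{\Lambda(rL)}|f_jf_kf_lf_m|\,|U_4|$ times a lower-order term $\Sigma_L^{-(n-2)}\sum\prod_i|f_i|\,\mathcal G_{2n-4}[S_2]$. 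Using $S_2\ge0$ the lower-order term is bounded by a product of second moments, hence by $C_r$, while the $U_4$ factor is at most $(\prod_i\|f_i\|_\infty)\,(\Sigma_{rL}/\Sigma_L)^2\,R_{rL}\le C_r\,R_{rL}$. As there are only finitely many quadruples and $R_{rL}\to0$ by hypothesis, the total deviation tends to zero; thus every moment converges to its Wick value once the two-point quantities $\langle T_{f_a,L}T_{f_b,L}\rangle$ converge.

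The remaining point, and the main obstacle, is that $R_{rL}\to0$ constrains only the fourth Ursell function and says nothing directly about the two-point function, which the limiting Gaussian covariance requires. I would close this gap through the tightness already obtained: pass to a subsequence along which all covariances $\langle T_{f_a,L}T_{f_b,L}\rangle$ converge to a bilinear form $S_2^\infty$; by the moment convergence above, that subsequential limit is the centered Gaussian field with covariance $S_2^\infty$, and since its law is determined by $S_2^\infty$ alone, every subsequential limit is Gaussian. Finally, the passage from finitely many test functions to all $f\in C_0(\R^d)$ is handled by the standard Kolmogorov-extension and density argument already outlined in the footnote to the scaling-limit definition, using the uniform local integrability supplied by the moment bounds. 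The genuinely delicate part throughout is the cross-scale bookkeeping that converts the normalization $\Sigma_L$ attached to $T_{f,L}$ into the on-scale normalization $\Sigma_{rL}$ appearing in $R_{rL}$; this is precisely where the monotonicity and power-law control of $S_2$ from Theorem~\ref{thm_S2bounds} enters.
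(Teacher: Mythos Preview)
Your argument is correct and is essentially the same as the paper's: both rest on Proposition~\ref{prop:U4_suffices} together with the bound $\Sigma_{rL}/\Sigma_L\le C_r$ (which the paper obtains from Messager--Miracle-Sol\'e monotonicity rather than Theorem~\ref{thm_S2bounds}, yielding the explicit constant $r^d$). The only difference is packaging: the paper sums the moment inequalities into a single estimate on the moment generating function, arriving at \eqref{moment_Ising}, whereas you work moment-by-moment and invoke Carleman; the two are equivalent here. You are in fact more careful than the paper on one point: you note that the hypothesis $R_{rL}\to0$ does not by itself force convergence of the covariances $\langle T_{f,L}T_{g,L}\rangle$, and so pass to subsequences, whereas the paper's proof writes $\lim_{L\to\infty}\langle T_{f,L}^2\rangle$ as if the full limit exists.
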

 \begin{proof} By a standard general argument, to establish the joint convergence of a collection of variables to a Gaussian limit it sufficed to prove convergence to a Gaussian limit for each of the variables' linear combinations.   We establish that by focusing on the  moment generating functions of $T_{f,L}$, with in the limit $L\to \infty$ carried at fixed $f$.   
 
 For  convenience, we present the argument first for $T_L$ of  \eqref{def_Tf_scaled_sigma}.  (The corresponding $f$, being an indicator function is not in $C_0((\R^4)$, but continuity is not essential for the argument.) \\    
 
 i)  
Applying \eqref{U4_suffices}  to $\langle T_L^{2n} \rangle $  we learn, through elementary combinatorics, that for any $n\geq 2$
\be
0 \  \leq \  \frac{(2n)!}{2^n n!} \langle T_L^2 \rangle^n - \langle T_L^{2n} \rangle \ \leq \ 
\frac{3}{2}  { 2n \choose 4}\  R_L  \   \langle T^{2n-4} \rangle 
\ee
Upon summation over $n\geq 2$  with weights $z^{2n}/(2n)!$ (and noting that also the middle term vanishes for $n=0,1$) this yields  
\be
0 \ \leq \   \exp  \big\{\frac{z^2}{2} \langle T_L^2 \rangle  \big \} - \langle \exp\big\{  z T_L \big \}   \rangle  \  
\leq  \   \frac{z^4}{2^4}   \  R_L  \  \langle \exp\big\{  z T_L \big \}   \rangle  \,,  
\ee
Recalling that $\langle T_L^2 \rangle=1$ one gets, under the stated assumption,
\be
\boxed{ \Big|  \langle \exp\big\{  z T_L \big \}   \rangle - \exp \{  z^2/2   \}   \Big |\  
\leq  \      2^{-4} \ z^4   \exp\{  z^2/2  \}   \cdot  \  R_L  \   \to 0  }
\ee
from which the first claim follow.  \\ 

\noindent  ii) For the more general variables $T_{f,L}$, of \eqref{def_Tf_scaled}, 
the  moment analysis  yields
\be \label{moment_Ising}
\boxed{ \Big|  \langle \exp\big\{  z T_{f,L} \big \}   \rangle - \exp  \big\{\frac{z^2}{2} \langle T_{f,L}^2 \rangle  \big \}   \Big |\  
\leq  \      2^{-4} \ z^4   \exp  \big\{\frac{z^2}{2} \langle T_{|f|,L}^2 \rangle  \big \}    \cdot  \  \widetilde R_{|f|,L}  \  } 
\ee
with 
\be 
\widetilde  R_{f,L}(\beta) := \frac{\sum_{x_1,...,x_4 \in \Lambda(L)}|U_4^{(\beta)}(x_1,...,x_4) |  \prod_{j=1}^4  f(x_j/L)}
                 { \left[ \sum_{x_1,x_2 \in \Lambda(L)}S_2^{(\beta)}(x_1,x_2) \right]^2}
                 \leq r^d \|f\|_\infty^4 R_{rL}(\beta)
 \ee 
and 
\be 
\langle T_{f,L}^2 \rangle = \frac{\sum_{x,y\in \Lambda(rL)}  \langle \sigma_x   \sigma_y \rangle   \, f(x/L) \,  f(y/L) }
 {\sum_{x,y\in \Lambda(L)}   \langle \sigma_x   \sigma_y \rangle  } \leq  r^d  \|f\|_\infty^2 \,. 
\ee 
The inequalities in the last two expressions are based on the  Messager  Miracle-Sol\'e monotonicity of the two point function~\cite{MesMir77}, as explained in \cite{AizDum21}.     
Taken together they imply convergence of the moment distribution function to a finite limit  which coincides with that of a Gaussian random variable of variance 
\be 
\lim_{L\to \infty}  \langle T_{f,L}^2 \rangle = \frac{\int_{\Lambda(rL)^2}  S_2(x, y)     \, f(x) \,  f(y) \, dx\, dy}
 {\int_{\Lambda(rL)^2}  S_2(x, y)  \, dx\, dy} \, .
 \ee 

The stated distributional convergence for $T_{f,L}$ follows, as it did for $T_{L}$. 
\end{proof} 

\section{Gaussianity of the scaling limits in $d>4$ dimensions}

\subsection{Sketch of the proof of Theorem~\ref{thm:gaussian phi4} for the Ising case}\mbox{  } \\ 
   
The  exact expression in \eqref{U4_RCR} leads to the following pair of upper bounds~\cite{Aiz82} . 
\begin{lemma}  For the Ising model on any finite graph, at $h=0$ and $\beta\geq 0$ (which are omitted below) 
 \begin{multline}  \label{eq:intera} |U_4(x_1,...,x_4)|      
 \leq   2\ \langle \sigma_{x_1}\sigma_{x_2}\rangle \langle \sigma_{x_3}\sigma_{x_4}\rangle  \  \times \\  
\times  {\bf P}^{\{x_1, x_2\}, \{x_3, x_4\},\emptyset,\emptyset} [ {\mathbf C}_{\n_1+\n_3}(x_1)\cap{\mathbf C}_{\n_2+\n_4}(x_3)\ne \emptyset] 
  \end{multline} 
where the probability refers to four  {\bf independently distributed} random currents with the prescribed sources: 
$$\partial (\n_1) =\{x_1,x_2\}\quad,  \partial (\n_2) =\{x_3,x_4\}, \quad \partial (\n_3)=\partial(\n_4) =\emptyset \,.$$   
And [consequently] the following diagrammatic bound holds
 \be  \label{eq:interb}
\boxed{   |U_4(x_1,...,x_4)| \,  \le \,    2 \,   \sum_{u} \langle \sigma_u\sigma_{x_1}\rangle  
 \langle \sigma_u\sigma_{x_2}\rangle  
  \langle \sigma_u\sigma_{x_3}\rangle      
   \langle \sigma_u\sigma_{x_4}\rangle 
   }
\ee   
\end{lemma}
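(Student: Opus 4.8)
The plan is to start from the exact random-current identity for $U_4$ supplied by Theorem~\ref{thm:U4}. Since the connection probability there is non-negative, that identity already gives $U_4\le 0$, so $|U_4|=-U_4$, and in the second (permutation-symmetric) form it reads
\[
|U_4(x_1,\dots,x_4)| = 2\,\langle\sigma_{x_1}\sigma_{x_2}\rangle\,\langle\sigma_{x_3}\sigma_{x_4}\rangle\;
{\bf P}^{\{x_1,x_2\},\{x_3,x_4\}}_{\Lambda,\beta}\big[\{x_1,x_2\}\stackrel{\n_1+\n_2}{\longleftrightarrow}\{x_3,x_4\}\big].
\]
Because $\partial\n_1=\{x_1,x_2\}$ forces $x_1$ and $x_2$ into a common $\n_1$-cluster (and likewise $x_3,x_4$ in $\n_2$), this event coincides with $\{x_1\stackrel{\n_1+\n_2}{\longleftrightarrow}x_3\}$. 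Everything then reduces to comparing this two-current connection probability with the four-current intersection probability on the right of \eqref{eq:intera}.

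For the first inequality \eqref{eq:intera} I would enlarge the probability space by two further independent sourceless currents $\n_3,\n_4$ (which leaves the two-current probability unchanged, since the event does not involve them) and then redistribute edges using the switching lemma, Lemma~\ref{lem:switching}, applied to the sourceless pairs $(\n_1,\n_3)$ and $(\n_2,\n_4)$. The idea is that any realized $x_1$--$x_3$ connection is a path made of $\n_1$- and $\n_2$-steps; the switching move lets the $\n_2$-portions adjacent to the $x_1$-side be absorbed into the cluster ${\mathbf C}_{\n_1+\n_3}(x_1)$, and the $\n_1$-portions on the $x_3$-side into ${\mathbf C}_{\n_2+\n_4}(x_3)$, at the cost of dropping a connectivity constraint (whence an inequality rather than an identity). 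The output is exactly the \emph{decoupled} event $\{{\mathbf C}_{\n_1+\n_3}(x_1)\cap{\mathbf C}_{\n_2+\n_4}(x_3)\neq\emptyset\}$, in which the cluster of $x_1$ is built from $\n_1,\n_3$ alone and that of $x_3$ from $\n_2,\n_4$ alone.

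The diagrammatic bound \eqref{eq:interb} then follows cleanly, and this is the part I would write out in full. If the two clusters meet, they share a vertex $u$, so a union bound gives
\[
{\bf P}^{\{x_1,x_2\},\{x_3,x_4\},\emptyset,\emptyset}\big[{\mathbf C}_{\n_1+\n_3}(x_1)\cap{\mathbf C}_{\n_2+\n_4}(x_3)\neq\emptyset\big]
\le \sum_u {\bf P}\big[u\in{\mathbf C}_{\n_1+\n_3}(x_1)\big]\,{\bf P}\big[u\in{\mathbf C}_{\n_2+\n_4}(x_3)\big],
\]
where the factorization uses that the two cluster events depend on the disjoint, independent current pairs $(\n_1,\n_3)$ and $(\n_2,\n_4)$. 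Each single-cluster probability is evaluated by the hitting-probability identity \eqref{xyz}, namely ${\bf P}^{\{x_1,x_2\},\emptyset}[u\in{\mathbf C}_{\n_1+\n_3}(x_1)]=\langle\sigma_{x_1}\sigma_u\rangle\langle\sigma_u\sigma_{x_2}\rangle/\langle\sigma_{x_1}\sigma_{x_2}\rangle$, and symmetrically for $x_3$. Substituting these into \eqref{eq:intera}, the prefactors $\langle\sigma_{x_1}\sigma_{x_2}\rangle$ and $\langle\sigma_{x_3}\sigma_{x_4}\rangle$ cancel exactly against the denominators, leaving $2\sum_u\prod_{j=1}^4\langle\sigma_u\sigma_{x_j}\rangle$, as claimed.

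The main obstacle is the switching step underlying \eqref{eq:intera}: one must verify carefully that recoupling $(\n_1,\n_3)$ and $(\n_2,\n_4)$ genuinely converts the coupled two-current connection event into the decoupled four-current intersection event, and that the constraints discarded in the process only \emph{weaken} the event, so that the comparison is an honest upper bound. By contrast, the passage from \eqref{eq:intera} to \eqref{eq:interb} is a routine union bound together with the already-established identity \eqref{xyz}, sweetened by the exact cancellation of the two-point prefactors.
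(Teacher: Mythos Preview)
Your approach matches the paper's exactly on the part that is actually proved there: the passage from \eqref{eq:intera} to \eqref{eq:interb} via the union bound ${\bf P}[Q\neq\emptyset]\le {\bf E}[|Q|]=\sum_u{\bf P}[u\in Q]$, the factorization by independence of $(\n_1,\n_3)$ and $(\n_2,\n_4)$, and the hitting identity \eqref{xyz} is precisely the paper's argument. For \eqref{eq:intera} itself the paper gives no proof either, simply citing \cite{Aiz82}; your sketch correctly names the ingredients (augment by sourceless $\n_3,\n_4$ and invoke switching) and honestly flags it as the hard step, though the phrase ``absorb the $\n_2$-portions into $\mathbf C_{\n_1+\n_3}(x_1)$'' is not how the switching lemma operates --- the genuine argument in \cite{Aiz82} proceeds by an exploration/conditioning on one cluster followed by switching, rather than a direct redistribution along the connecting path.
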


Inequality \eqref{eq:intera}  is established through analysis which we omit here (cf. \cite{Aiz82}).   However the transition to 
\eqref{eq:interb}  is then elementary.   It amounts to a bound on the probability of intersection by the expectation value of its size (number of points),
\begin{multline}
{\bf P}^{\{x_1, x_2\}, \{x_3, x_4\},\emptyset,\emptyset} [ {\mathbf C}_{\n_1+\n_3}(x_1)\cap{\mathbf C}_{\n_2+\n_4}(x_3)\ne \emptyset]  \ \leq \ \\ 
\leq \,  {\bf E}^{\{x_1, x_2\}, \{x_3, x_4\},\emptyset,\emptyset} [ \, \, |{\mathbf C}_{\n_1+\n_3}(x_1)\cap{\mathbf C}_{\n_2+\n_4}(x_3)| \, \, ]  \,  \ = \ \\ 
= \sum_u   {\bf P}^{\{x_1, x_2\}, \{x_3, x_4\},\emptyset,\emptyset} [ u\in {\mathbf C}_{\n_1+\n_3}(x_1)\cap{\mathbf C}_{\n_2+\n_4}(x_3)]  
\hspace{1.2cm}  \mbox{  } 
\end{multline} 
Combined with  \eqref{xyz}, which provides an exact expression for the last term, one obtains the \emph{digrammatic} upper bound stated in ~\eqref{eq:interb}.

The gaussianity of the Ising model's scaling limits for $d>4$ was  established through the combination of the {\em tree diagram bound} \eqref{eq:interb} of~\cite{Aiz82}
and the \emph{infrared} upper bound (for the nearest neighbor  interaction of constant strength $J$)
\be \label{infrared}
\boxed{ \forall  \beta \leq \beta_c: \quad \beta J \langle \sigma_x \sigma_y\rangle_{\beta}  \leq  \frac{C}{|x-y|^{d-2}}\,}
\ee  
which was presented above in \eqref{crit_decay}.  Omitting some details, which are covered in the original references, the combination of the two bounds yields for the nearest neighbor Ising model the estimate 
\be \label{IR_bound}
\boxed{ R_L(\beta) \leq  \ C \frac{L^d (\Sigma_L)^4} {(\Sigma_L)^2}  \leq  \frac{C}{L^{d-4}} } 
\ee  
which holds with a uniform constant for all $\beta<\beta_c$ and $L<\infty$.  
The claimed results then follows by a direct application of the criterion stated in Corollarly~\ref{Corro_U4}.

\subsection{Sketch of the proof of Theorem~\ref{thm:gaussian phi4} for $\varphi^4$ variables} \label{sec_extension to GS}\mbox{  } \\ 
 
The proof of the $\varphi^4$ version of Theorem~\ref{thm:gaussian phi4} in \cite{Aiz82} proceeded through an extension of the above analysis to a broader class of single spin distributions.   This is enabled by the observation, of Griffiths and Simon~\cite{SimGri73}) that the $\varphi^4$ single spin distribution can be presented as the limiting distribution of weighted averages of ferromagnetically coupled Ising spins. 
The  explicit definition of the GS class of variables is given next, but is not essential for the following summary.

By the linearity of the relation between the Ising spins and their aggregates $\varphi$, statements that were derived for the Ising spin model through relations which are homogeneous of degree $1$  in the spin variables apply also to systems based on $\varphi^4$ variables (all that is needed is the substitution  $\sigma_x \Rightarrow \varphi_x$).  That applies in particular to \eqref{U4_suffices} of Proposition~\ref{prop:U4_suffices}   Hence  the entire discussion of Section~\ref{sec:U4criterion} applies mutatis mutants also to systems based on $\varphi^4$ variables.  

At first sight, the above rule does not apply to the tree diagram bound \eqref{eq:interb}, where the left side is homogeneous of degree $4$ in $\sigma$, while the right is of degree $6$.  This was not an issue for Ising spins since there $\sigma^2\equiv 1$.   However one may proceed by  first transforming the relation into a dimensionally balanced one.   From this perspective, $\sigma^2$ is not dimensionless but    $\beta J_{u,v} \sigma_u \sigma_v$ is.

Dimensional deficiency in Ising relations can in many cases be corrected through ``point splitting''.   A case in point is the following dimensionally balanced version of the diagrammatic inequality \eqref{eq:interb}.

\begin{proposition} [\cite{AizGra83}] \label{prop:tree_phi} For any ferromagnetic system of Ising spins with the Hamiltonian \eqref{H_phi} at $h=0$, at each $\beta\geq0$ (whose presence as a subscript on $U$ and $S$ is omitted below) 
 \begin{align} \label{eq:inter_phi} \notag
 \boxed{|U_4(x_1,...,x_4)|  
   \leq    2 \!  \! \!   \! \! \sum_{\substack{y \notin \{x_1,..., x_4\}\\ u,v}}   \!  \! \!   \!  
   S_2(x_1,y) \, 
   S_2(x_2,y) \,  
  [\beta J_{y,u}]   S_2(u,x_3) \,    \,    [\beta J_{y,v}]   S_2(v,x_4) }  \notag \\   
   \hspace{-0.6cm} \boxed{ + 2 \left[ \sum_{u\neq x_1}  S_2(x_1,x_2) S_2(x_1,x_3) [\beta   J_{x_1,u}] S_2(u,x_4)   + \mbox{$3$ permutations} \right] }
  \end{align} 
  where $S_2(x,y)= \langle \sigma_{x}\sigma_{y}\rangle_\beta$.   
\end{proposition}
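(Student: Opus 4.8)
The plan is to refine the derivation of the Ising tree bound \eqref{eq:interb}, replacing the crude ``expected number of intersection points'' step by a decomposition in which every summed internal vertex carries at most two correlation legs. The resulting bound is then homogeneous of degree one in each spin variable at each summed site, and so survives the substitution $\sigma_x\Rightarrow\varphi_x$ into the Griffiths--Simon class. I start from the exact intersection estimate \eqref{eq:intera}, which controls $|U_4|$ by $2\langle\sigma_{x_1}\sigma_{x_2}\rangle\langle\sigma_{x_3}\sigma_{x_4}\rangle$ times the probability, under four independent currents with $\partial\n_1=\{x_1,x_2\}$, $\partial\n_2=\{x_3,x_4\}$, $\partial\n_3=\partial\n_4=\emptyset$, that the clusters ${\mathbf C}_{\n_1+\n_3}(x_1)$ and ${\mathbf C}_{\n_2+\n_4}(x_3)$ meet.

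Next I would realize the intersection event through a common vertex $y\in{\mathbf C}_{\n_1+\n_3}(x_1)$ that also lies in ${\mathbf C}_{\n_2+\n_4}(x_3)$, and --- this is the point-splitting --- record the two edges $(u,y)$ and $(v,y)$ of $\n_2+\n_4$ by which the sources $x_3$ and $x_4$ enter $y$. Extracting one unit of current from each of these edges converts the edge weights into factors $[\beta J_{y,u}]$ and $[\beta J_{y,v}]$ (via $(\beta J)^{k}/k!\mapsto(\beta J)^{k-1}/(k-1)!$), while leaving currents in which $x_3$ is connected to $u$ and $x_4$ to $v$. Summing over the entry vertices $y,u,v$ then yields a union/expectation bound for the intersection probability.

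The third step is to evaluate the two surviving connection events by the hitting-probability identity \eqref{xyz}, itself a direct consequence of the switching lemma (Lemma~\ref{lem:switching}). On the $\{x_1,x_2\}$ side, ${\bf P}[y\in{\mathbf C}(x_1)]=S_2(x_1,y)S_2(y,x_2)/S_2(x_1,x_2)$, so the factor $S_2(x_1,x_2)$ cancels the corresponding prefactor; on the $\{x_3,x_4\}$ side the analogous evaluation of the two-legged connection produces $S_2(u,x_3)S_2(v,x_4)/S_2(x_3,x_4)$, again cancelling its prefactor. Assembling these with the two extracted edge weights reproduces exactly the main sum in \eqref{eq:inter_phi}, with the constraint $y\notin\{x_1,\dots,x_4\}$ expressing that $y$ is a genuine interior branch point.

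Finally I would account for the degenerate configurations in which the common vertex coincides with a source, or in which only one entry edge is needed (for instance $y=x_1$, so that the $x_3$-connection reaches $x_1$ directly while only the $x_4$-connection uses an edge); these produce the boundary terms $S_2(x_1,x_2)S_2(x_1,x_3)[\beta J_{x_1,u}]S_2(u,x_4)$ and its three permutations on the second line. The main obstacle I anticipate is the current-weight bookkeeping in the entry-edge extraction: one must verify that conditioning on one cluster and summing over the entry edges of the other reproduces precisely the advertised $\beta J$ weights, and that the coincident cases $u=v$, $y=x_i$ and $u=x_i$ are counted cleanly, so that all multiplicities are absorbed into the constant $2$ and the finite list of permutation terms rather than generating extra contributions.
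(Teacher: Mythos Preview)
Your proposal is correct and follows the same route as the paper's sketch. The paper encapsulates your ``entry-edge extraction'' step in the \emph{box hitting} inequality \eqref{xy_B}: after using \eqref{xyz} on the $\{x_1,x_2\}$ side exactly as you describe, one replaces the second use of \eqref{xyz} (on the $\{x_3,x_4\}$ side) by a flux bound through the edges incident to the intersection vertex, which is precisely the clean formal statement that handles the current-weight bookkeeping you flag as the main obstacle; the degenerate cases $y\in\{x_1,\dots,x_4\}$ then produce the second-line permutation terms just as you anticipate.
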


This version of the tree diagram bound extends verbatim to systems with variables in the Griffiths Simon class, and in particular the lattice field variables with the $\varphi^4$-type a priori measure.  The only change is to switch the notation to $S_2(x,y)= \langle \varphi_{x}\varphi_{y}\rangle_\beta$, 
and make the similar adjustment in $U_4$.   (This will be an obvious consequence of the deconstruction of the $\varphi$ variables into Ising spins which we postponed to section~\ref{sec_GS}.)

 The other essential ingredient for the Ising analysis was the \emph{Infrared Bound} \eqref{IR_bound}.  However already the original derivation (in its different forms~\cite{GliJaf73,FroSimSpe76,Sok82}) has been directly valid for all single site distributions of sub-exponential decay.  
 
By the above means, the proof of the gaussianity of the Ising model's scaling limits extends to the generality in which it is stated in Theorem~\ref{thm:gaussian phi4}.  \\

\section{A decomposition of $\varphi^4$ variables into Ising constituents}  \label{sec_GS}

  \subsection{The Griffiths-Simon class of scalar variables} \label{sec:GS}\mbox{  }\\ 
  
Through some minor adjustments, the results presented above for the ferromagnetic Ising model extend to similar lattice systems with various other single variable distributions, including the $\varphi^4$ fields.   A device which was designed for such a purpose is the Simon-Griffiths representation of $\varphi^4$ variables as the distributional limit of  block averages of ferromagnetically coupled Ising ``elementary'' constituents.   

This technique was used by Griffiths~\cite{Gri69} for an extension of the Lee-Yang theorem, as well as his eponymous inequalities, to a broader range of spin distributions.  Among the well recognized examples in that class are the ``classical'' bounded spins, with uniform distribution in $[-S,S]$,  and their ``quantized'' analogs in which the  spins take values in $[-S,S] \cap \Z$ with equal a-priori weights.  The construction was further extended by B. Simon and R. Griffiths to cover also the $\varphi^4$ distribution~\cite{SimGri73} which is of interest here.   Following is its more explicit formulation. 

\begin{figure}
\begin{center}
\includegraphics[width = 0.40\textwidth]{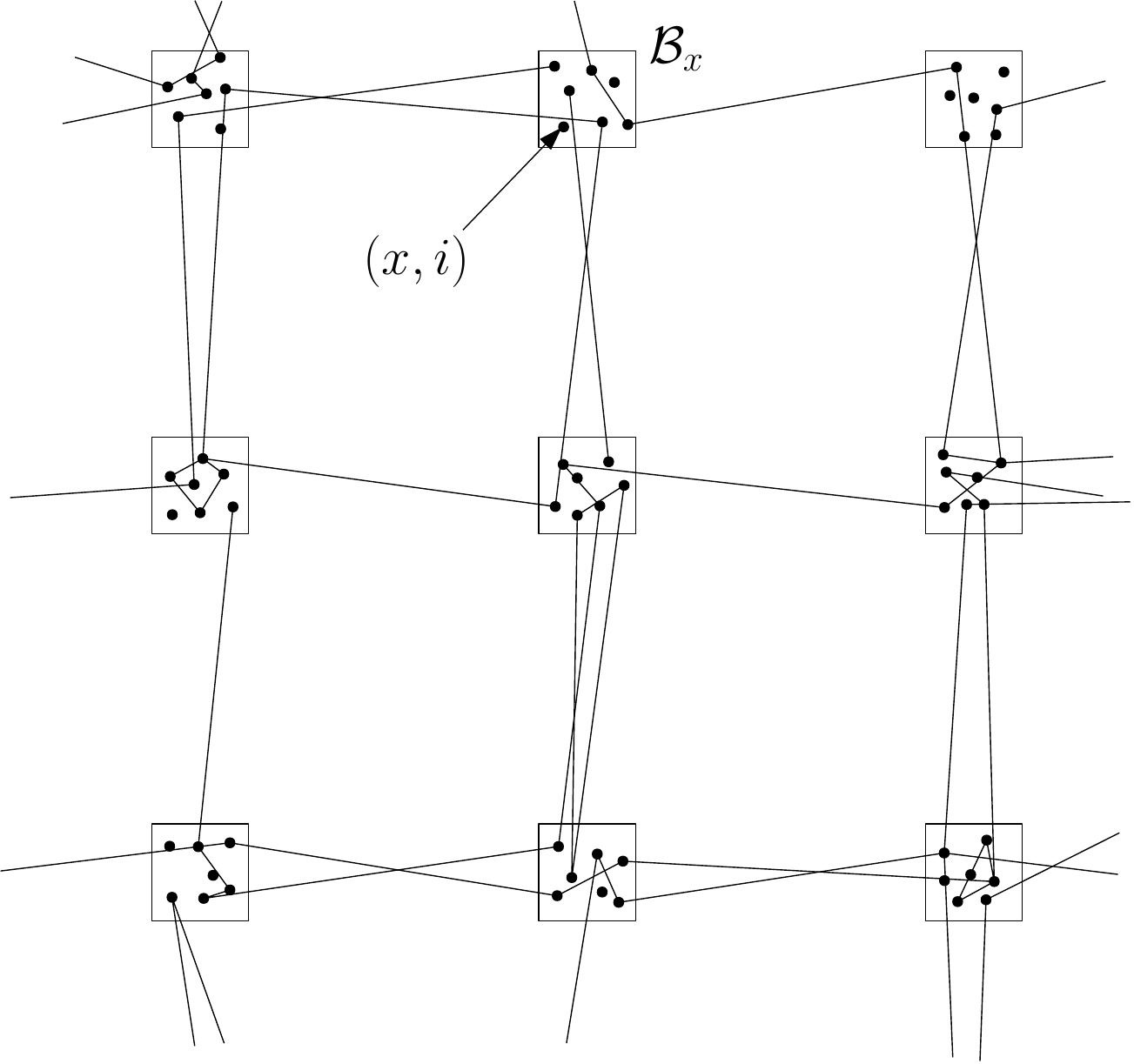}
\caption{The decorated graph, in which the sites $x$  of a graph of interest are replaced by ``blocks'' 
$\calB_x$ of sites indexed as $(x,n)$.   The Ising ``constituent spins'' $\sigma_{x,n}$ are coupled pairwise through intra-block  couplings $ \delta_{x,y} K_{n,m}$ and inter-block couplings $J_{x,y}$.  The  lines indicate a  possible realization of the corresponding random current.}
\label{fig_blocks}
\end{center}
\end{figure}
 
\begin{definition} \label{def:rho}
A probability measure on  $\rho $ on $\R$ is said to belong to the Griffiths-Simon (GS) class if either  of the following conditions is satisfied 
\\  
\indent $1)$   the expectation values with respect to $\rho$ can be presented as 
\be  \label{Griff_cond} 
\int F(\varphi) \, \rho(d\varphi)  =   \frac{1}{\rm{norm}}  \sum_{\underline{\sigma}\in \{-1,1\}^N} F(\sum_{j=1}^N \alpha_j \sigma_j)
\,\, e^{\sum_{i,j=1}^N  K_{i,j} \alpha_i \alpha_j  \sigma_i  \sigma_j}  
\ee 
with some $\{b_j\}\subset \R$, and $K_{i,j}  \geq 0$.   \\  
\indent $2) $   $\rho$ can be presented as a (weak) limit of probability measures of  the above type, and is of sub-gaussian growth: 
\be \label{sub_gauss}
\int e^{|\varphi |^\alpha} \rho(d \varphi ) < \infty \quad \mbox{for some $\alpha> 2$.}
\ee 
A random variable is said to be of  Griffiths-Simon type if its probability distribution is in the GS class.   
\end{definition} 

The Griffiths condition \eqref{Griff_cond}  can be equivalently expressed by  saying that $\phi$ can be presented as  
\be \label{decomposition} 
\phi = \sum_j \alpha_j \sigma_j \,. 
\ee 
The ferromagnetic system of such variables  $\{\varphi_x\}$ with the Hamiltonian  
\be \label{H_phi}
{H}(\varphi) :=  -  \sum_{\{x,y\} \subset \Z^d} J_{x,y} \varphi_x \varphi_y - h   \sum_{x \in \Z^d} \varphi_x 
\ee
 can be presented as an Ising spin system, with the interaction
\begin{eqnarray}  \label{H_I_gen}
\mbox{  } \hspace{1.3cm} \hfill \boxed{ H + K = 
-\!\!\!\!\!\!\sum_{\substack{\{x,y\} \in \E\\ i,j\in \{1,...,N\} }}   \alpha_i \alpha_j  J_{x,y} \sigma_{x,i}  \sigma_{y,j}  - h     \sum_{x,j} \alpha_j  \sigma_{x,j} + K(\sigma)  }
\,. 
\end{eqnarray}
where $K(\sigma)  =- \sum_x \sum_{i,j} K_{i,j} \alpha_i \alpha_j  \sigma_i  \sigma_j $.  \\ 

The infinite volume Gibbs equilibrium states of such systems are then constructible as a double limit, in which one first takes $N\to \infty$, to produce the desired convergence of the distribution of the $\varphi$ variables, followed by  the infinite volume limit $\mathcal L\to \infty$.  \\

Of particular relevance to us is the observation of  \cite{SimGri73} that a variable $\{\varphi_x\}$  with the probability distribution 
$$\rho_{\lambda,b}(d\varphi) = e^{-\lambda \varphi^4 + b\varphi^2} d\varphi/\text{norm} $$   
can be produced as the $N\to \infty$ distributional limit   of the collection of the block averages of elemental Ising spins $\{\sigma_{x,j}\}$ (the dots  in Fig.~\ref{fig_blocks})
\be \label{phi4_Ising}
\varphi_x^{(N)} =   \frac{\alpha(N)}{\sqrt N}\sum_{j=1}^N \sigma_{x,j}
\ee 
which interacted through the ``ultra-local'' on-site coupling
\be
K(\sigma) = -   \frac{g(N)}{N } 
\left[ \sum_{j=1}^N \sigma_{x,j} \right]^2 
\ee 
at  suitably adjusted values for $\{\alpha(N), g(N)\}$.   The explicit values of the required  adjustments in the parameters of $K$ are easy to determine, but are not relevant for the present discussion, beyond the fact of feasibility of this construction for any $\lambda \geq 0$ and $b$ (of arbitrary sign). \\  

Underlying the last assertion is the observation that in this case $K$ is a mean-field interaction of $N$ spins.  In the limit $N\to \infty$ it undergoes a phase transition at $g=1$, which is explicitly analyzable.   Expanding the well known expression for the entropy as function of $\frac1 N \sum_j \sigma_{x,j}$, to   the fourth order  (around its maximum),   one finds that at the mean-field critical point the leading $\varphi^2$ term vanishes, and $\varphi^4$ emerges.   With suitable fine-tuning of $\{\alpha(N), g(N)\}$  the distribution of $\varphi_x^{(N)}$ can be made to converge to any $\rho_{\lambda,b}$ at $\lambda>0$.

To summarize:  through the above construction, any system  
of $\varphi^4$ variables  associated with the sites of a finite graph $\G$, and coupled through the graph's edges, is presentable as the limit ($N\to \infty$) of a system of constituent Ising spins associated withe the  decorated graph produced by replacing each vertex of $\G$ by the the complete graph of $N$ vertices.  In it, the ``microscopic sites'' are indexed by $\{x,j\}$ and two sites are neighbors if they either lie in a common block (i.e. same $x$) or in a pair on neighboring blocks (as depicted in Fig. \ref{fig_blocks}).

\subsection{Dimension balancing through point splitting}  \label{sec:point_splitting}\mbox{  }  \\ 

The linearity of the  decomposition \eqref{decomposition} allows to extend to systems with  variables in the GS  class 
many of the results that were originally derived for ferromagnetic Ising spin models. 
In particular, a suitably weighted version of \eqref{eq:intera}  holds,   allowing to assign a probabilistic interpretation to the ratio  
$$ U_4(x_1,...,x_4)/S_4(x_1,...,x_4)$$ 
also for the field variables.  
However the intersection event needs to be interpreted now in terms of the connected clusters of the elemental constituents of the $\varphi$ variables  (the small dots on Figure~\ref{fig_blocks}).   One may quickly spot 
two problems, which are actually related:
\begin{enumerate} 
\item[i.]  bounding the probability that two clusters intersect by the size of their intersection on the enhanced graph may lead to a more severe over-estimate of $U_4$ than the one which was discussed above for the Ising model,
\item[ii.]   without the restriction to $\sigma^2=1$, the diagrammatic bound \eqref{eq:interb}  is unbalanced and hence cannot be valid  as stated for the more general spins  
\end{enumerate} 
To reduce the over-counting, one may bound the probability of intersection through the number of the sites of the graph $\V$, i.e. the boxes, rather than the ``internal dots'' in the decorated graph.   

Furthermore, for the above purpose, the probability that a current with specified sources reaches a box can be bounded above through the mean  flux into the box through edges connected to the sources.   This results in the following  bound (Proposition 7.1 in~\cite{AizGra83}).   As an added benefit, this provides a dimensionally balanced substitute for the site hitting relation \eqref{xyz}: 

\begin{lemma}[Box hitting probability \cite{AizGra83}]   Under the assumptions of Proposition~\ref{prop:tree_phi}, for any subset $\calB \subset \V$, and pair of sites $\{x,y\}$ in    $\calB^c:=\V\setminus \calB$
\begin{align}\label{xy_B}
\boxed{ \langle \sigma_x \sigma_z\rangle  \, 
{\bf P}^{\{x,z\},\emptyset}[ {\mathbf C}_{\n_1+\n_2}(x) \cap \calB \neq \emptyset] \ \leq \!\!
\sum_{ u\in \calB, \,\,  v\in \calB^c}  \!\!\!
\langle \sigma_x \sigma_u\rangle \ [\beta   J_{u,v}]  \, \langle \sigma_v\sigma_y\rangle }
\end{align}
\end{lemma}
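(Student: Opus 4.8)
The plan is to combine two standard random-current devices: the \emph{mean-flux} (edge-splitting) identity, which extracts a factor $\beta J_{u,v}$ from an occupied edge, and the switching lemma (Lemma~\ref{lem:switching}), which converts a cluster-connectivity constraint into a reassignment of sources. Throughout I read the second marked vertex as a single site $z$ (so the source set is $\{x,z\}$, the prefactor is $\langle\sigma_x\sigma_z\rangle$, and the last factor on the right is $\langle\sigma_v\sigma_z\rangle$). The prototype to keep in mind is the single-site hitting identity \eqref{xyz}, rewritten as $\langle\sigma_x\sigma_z\rangle\,{\bf P}^{\{x,z\},\emptyset}[w\in{\mathbf C}_{\n_1+\n_2}(x)]=\langle\sigma_x\sigma_w\rangle\langle\sigma_w\sigma_z\rangle$; the box estimate is its boundary-edge, dimensionally balanced refinement, in which an explicit coupling $\beta J_{u,v}$ replaces one of the internal two-point factors.

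First I would make the geometric reduction. Since $x\in\calB^c$ while ${\mathbf C}_{\n_1+\n_2}(x)$ meets $\calB$, the cluster of $x$ must contain an occupied boundary edge, i.e.\ a pair $u\in\calB$, $v\in\calB^c$ with $u\sim v$, $(\n_1+\n_2)(u,v)\ge 1$ and $u\in{\mathbf C}_{\n_1+\n_2}(x)$. Writing $M=\n_1+\n_2$, this gives
\[
\mathbf 1[{\mathbf C}_M(x)\cap\calB\neq\emptyset]\ \le\ \sum_{\substack{u\in\calB,\,v\in\calB^c\\ u\sim v}} \mathbf 1[u\in{\mathbf C}_M(x)]\,\mathbf 1[M(u,v)\ge 1].
\]
Multiplying by $\langle\sigma_x\sigma_z\rangle$ and unfolding the normalisation via \eqref{eq:erg} turns the left-hand side into $\big(\sum_{\partial\n=\emptyset}w(\n)\big)^{-2}\sum_{\partial\n_1=\{x,z\},\,\partial\n_2=\emptyset}(\cdots)\,w(\n_1)w(\n_2)$.

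For a fixed boundary edge $\{u,v\}$ I would then run the two tools in sequence. Since $z\in{\mathbf C}_M(x)$ holds automatically and $u\in{\mathbf C}_M(x)\Leftrightarrow M\in\calF_{\{u,z\}}$, the switching lemma reassigns the sources from $(\{x,z\},\emptyset)$ to $(\{x,u\},\{u,z\})$. Next I bound $\mathbf 1[M(u,v)\ge 1]\le M(u,v)=\n_1(u,v)+\n_2(u,v)$ and apply the factorial identity $m\,(\beta J_{u,v})^{m}/m!=\beta J_{u,v}\,(\beta J_{u,v})^{m-1}/(m-1)!$ to remove one unit of current from the edge $\{u,v\}$. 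Removing it from the component with sources $\{u,z\}$ shifts those sources to $\{v,z\}$ (the $u$ cancels), producing exactly $\beta J_{u,v}\,\langle\sigma_x\sigma_u\rangle\,\langle\sigma_v\sigma_z\rangle$ after renormalisation; summing over boundary edges then yields the asserted bound. Every step is an inequality in the correct direction, since bounding an indicator by $1$ or by the current value, and discarding the residual within-$\calB$ connectivity constraints, can only enlarge the weighted sums.

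The hard part is that the naive union bound above is not tight enough to land on the \emph{single} pairing written on the right: removing the edge unit from the \emph{other} component (sources $\{x,u\}$) instead produces the symmetric term $\beta J_{u,v}\,\langle\sigma_x\sigma_v\rangle\,\langle\sigma_u\sigma_z\rangle$, so the crude argument delivers the sum of both pairings and a spurious factor of order $2$. To recover the stated one-term form one must replace the plain union bound by a \emph{first-passage} partition: orienting the boundary crossings toward $z$ and charging each configuration to the unique crossing edge at which the cluster last leaves $\calB$ on its way to $z$, so that the extracted unit is always carried by the component with sources terminating at $z$ and only the pairing $\langle\sigma_x\sigma_u\rangle\langle\sigma_v\sigma_z\rangle$ survives. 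Making this passage decomposition compatible with the source reassignment of Lemma~\ref{lem:switching}, and checking that the dropped within-$\calB$ constraints only relax the bound, is the delicate combinatorial heart of the argument; the remainder is the bookkeeping already present in the derivation of \eqref{xyz}.
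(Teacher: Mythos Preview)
The paper does not actually prove this lemma: it defers to the original reference, stating that ``the proof of this bound is an exercise in geometric combinatorics based on the switching lemma. The argument, which can be found in the provided reference, can be simplified a bit for the $x\Leftrightarrow y$ symmetrized version of \eqref{xy_B}, which for practical purposes is as useful.'' The only substantive hint given is the interpretation of the right-hand side as ``an estimate on the double current's flux into $B$ through edges which in $B$'s complement are $\n_1+\n_2$ connected to $x$.''

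Your outline is consistent with these hints. The ingredients you list --- boundary-edge reduction, the switching lemma to convert connectivity into a source reassignment, and the edge-splitting identity to extract the factor $\beta J_{u,v}$ --- are exactly the ``geometric combinatorics based on the switching lemma'' the paper alludes to. More tellingly, your diagnosis of the hard part (that the naive union bound delivers the symmetrized sum $\langle\sigma_x\sigma_u\rangle\langle\sigma_v\sigma_z\rangle + \langle\sigma_x\sigma_v\rangle\langle\sigma_u\sigma_z\rangle$ rather than the single pairing, and that a first-passage/last-exit decomposition is needed to select just one) is directly corroborated by the paper's remark that the symmetrized version is \emph{simpler} to prove. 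So the paper effectively acknowledges precisely the obstacle you identified, and concedes that removing it requires the more careful argument of [AizGra83] (Proposition~7.1 there).

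In short: your approach matches the paper's sketch, and your identification of the subtle step is confirmed by the paper's own parenthetical comment. There is nothing further to compare against, since the detailed combinatorics are not reproduced in these notes.
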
  

In this inequality the right hand side is an estimate on the double current's flux into $B$ through edges which in $B$'s complement are $\n_1+\n_2$ connected to $x$. 
\footnote{To avoid confusion let us recall that random current's flux is directionless, and void of a meaningful sign.  The current's ``conservation law'' involves only the flux's parity.} 
 The proof of this bound is an exercise in geometric combinatorics based on the switching lemma.  The argument, which can be found in the provided reference, can be simplified a bit for the $x\Leftrightarrow y$ symmetrized version of \eqref{xy_B}, which  for practical purposes is as useful.. 

With  \eqref{xy_B} applied in place of \eqref{xyz} in the derivation of  the diagrammatic bound \eqref{eq:interb},  one gets the dimensionally balanced inequality \eqref{eq:inter_phi}.  As explained above, this allows to extend the proof of the gaussianity of the Ising model's scaling limits in $d>4$ dimensions to variables in the GS class, including the $\varphi^4$ case.

\section{Gaussianity at the marginal dimension $d=4$}

\subsection{A suggestive dimensional analysis} \mbox{  }  \\  

The above sketched proof of the scaling limits'  gaussianity  in $d>4$ dimensions may be recapitulated in a suggestive  dimensional argument, which helps to frame the next question.

The proof's  key input consists of the {\em tree diagram bound} \eqref{eq:interb} 
and the \emph{infrared upper bound} \eqref{crit_decay} 
To focus on the essence,  one may initially proceed under the (reasonable) assumption that: i)  at $\beta_c$ the two-point function is of comparable values for pairs of sites at similar distances, ii) 
on the scales under consideration $S_2(x,y)$ decays by a power law.\footnote{The  behavior assumed in \eqref{eta} does not hold for  $\beta \neq \beta_c$ at distances much larger than the correlation length, but is a common feature of critical models with power law decay of correlations.}   
In the standard notation that would be presented as 
\be\label{eta}
\beta_c J \langle \sigma_x \sigma_y\rangle_{\beta_c}  \approx  \frac{C}{|x-y|^{d-2+\eta}}\, , 
\ee   
for which \eqref{infrared} implies $\boxed{\eta\geq 0}$.
  
Then, for  quadruples of points at mutual distances of order $L$, in lattice units (!), the sum in the tree diagram bound \eqref{eq:interb}  contributes a factor $L^d$ while the summand has two extra correlation function factors $S_2(\cdot,\cdot)$, in comparison to $S_4(x,y,z,t)$.  
each $S_2$ factor being of order $1/L^{d-2+\eta}$.  

This suggests that for any collection of four point which are at mutual distances of order $L$
\be \label{D>4} 
 \boxed{ \frac{|U_4^{(\beta)}(x_1,...,x_4) |} 
                 { S_4^{(\beta)}(x_1,...,x_4) }
         \leq C \, \frac{L^d}{L^{2(d-2+\eta)}} = \frac{C}{L^{d-4+2\eta}  }}
\ee
which for $d>4$   vanishes in the limit $L\to \infty$.   \\  

Up to technical details, the above captures the essence of the argument underlying the gaussianity of the Ising model's scaling limit in \cite{Aiz82}.   
A similar bound for $\varphi^4$ models was derived also in \cite{Fro82}, by other means.     \\  

The above dimensional argument may  appear similar to Wilson's heuristic renormalization group analysis~\cite{Wil71}.    
However it should be stressed that: 
\begin{enumerate} 
\item[i.]   the bounds on $U_4$ of \cite{Aiz82, Fro82} are expressed in terms of relations among the ``dressed'' quantities, not just their initial values  near a gaussian starting point, 
\item[ii.]  in general, the two point function's critical behavior is corrected by the critical exponent $\eta$, and hence the dimensional analysis relies on the non-perturbative fact (implied by a reflection positivity argument) that in this model $\eta\geq 0$.  \\ 
\end{enumerate}        
\mbox{  } 

\subsection{The challenge} \mbox{  } \\ 

The initial proofs of gaussianity in high dimensions ($d>4$)~\cite{Aiz82,Fro82}, as well as the heuristic outline given above,  do not extend to $d=4$ (unless $\eta>0$ which however does not seem to be the case).   More explicitly, if at this marginal dimension $\eta=0$ (as expected),   then  the tree diagram bound on $R_L$ no longer  decays to $0$ for  $L\to \infty$.  

Still, one may note that for $d=4$ the simple diagrammatic  upper bound on $R_L$ at least does diverge in the limit $L\to \infty$.  At fist sight this may not appear as  adding much information since from \eqref{U4_RCR}  one learns  that in any dimension  
\be \label{univ_R_bound}
0\leq R_L(\beta) \leq 2
\ee 
(which is similar to Glimm and Jaffe's universal  upper bound of the renormalized coupling constant~\cite{GliJaf73}).  
Nevertheless, upon closer inspection, the non-divergence of the upper bound actually provides a meaningful hint.  
To explain the point, let us look closer at what it represents.\\ 
 
 In terms of the random current representation, the all important quotient is bounded by the intersection probability, 
 \be   \label{eq:Rineq} 
\frac{|U_4(x_1,...,x_4)|  }{ \langle\sigma_{x_1} \sigma_{x_2} \rangle \langle \sigma_{x_3} \sigma_{x_4} \rangle}
    \  \leq   \   2\, \,  {\bf P}^{\{x_1, x_2\}, \{x_3, x_4\},\emptyset,\emptyset} 
  [ Q \ne \emptyset]  
 \ee  
where, to abbreviate, we denote $Q:=  {\mathbf C}_{\n_1+\n_3}(x_1)\cap{\mathbf C}_{\n_2+\n_4}(x_3)$.  \\ 

In the diagrammatic bound \eqref{eq:intera}  the probability that $Q $ is non-empty  is  estimated by its expected size, i.e., 
\be 
{\bf P}^{\{x_1, x_2\}, \{x_3, x_4\},\emptyset,\emptyset} [Q\neq \emptyset]  \ \leq \ 
{\bf E}^{\{x_1, x_2\}, \{x_3, x_4\},\emptyset,\emptyset} [ |Q| ]   
\ee 
with $| Q| =  \sum_y  \bm{1}[ y \in Q ] $.  However the exact relation between the two quantities  (given that $|Q|$ is an integer) is
\be  \label{ratio}
{\bf P}  [Q\neq \emptyset]  \ =  \ 
\frac{
{\bf E} \big[ |Q| \big] }
{{\bf E} \big[ |Q| \,\big|  |Q | \geq 1 \big] }\,.
\ee 
In the present situation the numerator is easy to express {\bf exactly} in terms of the two point function, and it is that upper bound to which the scaling estimate  \eqref{D>4} applies.  

Hence, at the marginal dimension gaussianity of the scaling limit would follow if
\be \label{eq:4Dlim}
\lim_{L\to \infty}  {\bf E}_L \big[ |Q| \,\big|  |Q | \geq 1 \big]  = \infty 
\ee 
where the subscript  on ${\bf E}_L $  indicates that the limit refers to configurations $ x_1, x_2, x_3,x_4 $ in which all  distances are of sizes comparable to $L$.  

\subsection{A suggestive dichotomy} \label{sec_dichotomy} \mbox{  } \\ 

Continuing from the last point: at the intuitive level one could expect that if $\eta=0$ (with no relevant logarithmic corrections) then the conditional expectation in \eqref{eq:4Dlim} may diverge due to the fact that $|Q|$ is the sum of contributions  from different scales which should be approximately independent and each  bounded below.   (If, to the contrary $\eta>0$, then  the bound \eqref{D>4} suffices to cover also the  case $d=4$.) 

A bit more explicitly:  the switching lemma suggests that the  expected value of the size of the intersection within $[-\ell,\ell]^2$
of  a pair of RCR cluster which share one of their two  sources, and have the other at distances of order $L\gg \ell$ from the common one  is of the order of 
\be
B_\ell  := \sum_{\substack{ x\in \Z^4 \\  \|x\|< \ell}}  \langle \sigma_0 \sigma_x \rangle_{\beta_c}^2 
\ee 
to which we refer as  the \emph{bubble diagram} on scale $\ell$: 

This in turn suggests that if at the critical point (for simplicity considered first)  the value of $S_2(0,x) \|x\|^2$ shows consistent behavior across scales, i.e. in the scaling regime of interest the following limit exists 
\be 
\kappa =: \lim_{L\to \infty} S_2(0,x) \|x\|^2  
\ee 
then the ratio $U_4/S_4$ in \eqref{D>4} vanishes for one of two complementary reasons: 
\begin{itemize} 
\item  If $\kappa =0 $ then the original tree diagram bound on $U_4/S_4$ for roughly equidistant points tends to zero \\ 
\item   If $\kappa   >0 $ then the bubble diagram diverges, \eqref{eq:4Dlim} holds, and the scaling limit is gaussian due to the divergence of the correcting  factor in the denominator of  \eqref{ratio}.  \\ 
\end{itemize}

\subsection{Logarithmically improved bounds} \mbox{}  \\ 

There are some gaps in the argument outlined above.  Among those:   
\begin{enumerate} 
\item[1)]  what if the behavior of $S_2(0,x) \|x\|^2$ shows  variation in scales, being small on many scales and then non-negligible on other, higher scales, thus avoiding each of the above arguments\\ 
\item[2)]   establish  the suggested relation of $B_L$ with ${\bf E}_L \big[ |Q| \,\big|  |Q | \geq 1 \big] $  for $Q$ the intersection size defined below \eqref{eq:Rineq} \\  
\item[3)]  formulate the analysis so it applies to limits in which also $\beta$ is allowed to vary, as long as $L/\xi(\beta) \to 0$.\\ 
\end{enumerate} 

The first question  was handled by proving a relative monotonicity  (i.e., up to a correction by a constant factor) in scales of $S_2(0,x) \|x\|^{d-2}$ (Theorem 5.6 in \cite{AizDum21}) on which a remark was made below \eqref{crit_decay}.     The other points, for which more substantial analysis was required,  were handled in \cite{AizDum21} through estimates proving abundance of scale on which the two point function satisfies a regularity condition, analysis of the conditional expectation, and proof of sufficient  independence in the contribution of different regular scales to 
${\bf E}_L \big[ |Q| \,\big|  |Q | \geq 1 \big] $.  \\ 

Through such means the following improvement of the tree diagram bound \eqref{eq:interb} was established.   

\begin{theorem}[Improved tree diagram bound ~\cite{AizDum21}]\label{thm:improved tree bound simple} For the n.n.f.~Ising model in dimension $ d=4$, there exist $c,C>0$ such that for every $\beta\le \beta_c$, every $L\le \xi(\beta)$ and every $x,y,z,t\in\Z^d$ at a distance larger than $L$ of each other,
\be\label{eq:improved tree bound}
|U_4^\beta(x,y,z,t)|\le \frac C{B_L(\beta)^c}\sum_{u\in\Z^4} \langle \sigma_u\sigma_x\rangle_\beta  
 \langle \sigma_u\sigma_y\rangle_\beta  
  \langle \sigma_u\sigma_z\rangle_\beta     
   \langle \sigma_u\sigma_t\rangle_\beta,
\ee
\end{theorem}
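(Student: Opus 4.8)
The plan is to locate the improving factor $B_L(\beta)^{-c}$ inside the \emph{exact} ratio \eqref{ratio}, rather than in the crude Markov step ${\bf P}[Q\neq\emptyset]\le{\bf E}[|Q|]$ that produced the original tree bound \eqref{eq:interb}. Relabelling $\{x_1,x_2,x_3,x_4\}=\{x,y,z,t\}$ and keeping the four-current setup of \eqref{eq:intera}, with $\partial\n_1=\{x,y\}$, $\partial\n_2=\{z,t\}$, $\partial\n_3=\partial\n_4=\emptyset$ and $Q:={\mathbf C}_{\n_1+\n_3}(x)\cap{\mathbf C}_{\n_2+\n_4}(z)$, the bound \eqref{eq:intera} together with \eqref{ratio} gives
\[
|U_4^\beta(x,y,z,t)|\ \le\ 2\,\langle\sigma_x\sigma_y\rangle\langle\sigma_z\sigma_t\rangle\,\frac{{\bf E}[|Q|]}{{\bf E}[\,|Q|\mid|Q|\ge1\,]}\,.
\]
By the independence of the two current pairs and the hitting identity \eqref{xyz}, the prefactor times the numerator equals \emph{exactly} the tree diagram: $\langle\sigma_x\sigma_y\rangle\langle\sigma_z\sigma_t\rangle\,{\bf E}[|Q|]=\sum_u\langle\sigma_u\sigma_x\rangle\langle\sigma_u\sigma_y\rangle\langle\sigma_u\sigma_z\rangle\langle\sigma_u\sigma_t\rangle$. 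Hence the whole theorem reduces to one $\beta$-uniform lower bound on the conditional intersection size,
\[
{\bf E}\big[\,|Q|\,\big|\,|Q|\ge1\,\big]\ \ge\ c'\,B_L(\beta)^{\,c}\,,
\]
valid for all $\beta\le\beta_c$, $L\le\xi(\beta)$, and all four sources pairwise at distance $>L$.

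To prove this lower bound I would first use the switching lemma to condition on $\{|Q|\ge1\}$ and expose a common point $w$ of the two clusters. By the domain-Markov structure of random currents, once $w$ is revealed the two clusters behave, near $w$, like two clusters issuing from the shared site $w$ and running out toward sources at distance of order $L\gg\|x-w\|$. For such a pair the switching lemma identifies the conditional expected number of further common sites inside the dyadic shell at radius $\ell$ about $w$ with the incremental bubble $\sum_{\|u\|\sim\ell}\langle\sigma_0\sigma_u\rangle_\beta^2$, so that summing over $1\le\ell\le L$ formally recovers $B_L(\beta)$. This is precisely the mechanism anticipated by the dichotomy of Section~\ref{sec_dichotomy}: when $\eta=0$ the bubble diverges (logarithmically in $d=4$) and forces ${\bf E}[|Q|\mid|Q|\ge1]\to\infty$, as in \eqref{eq:4Dlim}.

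Making this rigorous requires closing two gaps. First, the per-scale increment matches the bubble increment only on scales where $S_2$ follows its expected power law, whereas a priori $S_2(0,u)\|u\|^{d-2}$ could dip on many scales. Here I would invoke the relative (up-to-a-constant) monotonicity of $S_2(0,u)\|u\|^{d-2}$ in the scale of $\|u\|$ (the remark below \eqref{crit_decay}, Theorem~5.6 of \cite{AizDum21}) to guarantee an \emph{abundance of regular scales}: a fraction of the dyadic scales in $[1,L]$ on which the increment is genuinely of the expected order, so that the regular increments alone sum to a positive power of $B_L(\beta)$; the constraint $L\le\xi(\beta)$ enters here, ensuring the power-law regime persists up to scale $L$ so that $B_L(\beta)$ is the correct cutoff bubble. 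Second, the contributions of distinct scales must be decoupled, since the global conditioning $\{|Q|\ge1\}$ correlates the two clusters across all scales at once. I would reveal the clusters shell by shell outward from $w$ and use switching and finite-energy estimates to show that each regular scale offers a fresh, nearly independent opportunity to accrue intersection points; combining the regular increments (via a second-moment estimate) then yields the bound with some exponent $c\in(0,1)$ in place of $1$, the loss being exactly the price of imperfect decoupling.

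The main obstacle is this decoupling step: quantifying, uniformly in $\beta\le\beta_c$ with $L\le\xi(\beta)$, that an intersection realized at one regular scale does not preempt intersections at the coarser scales, so that the scale contributions may be treated as approximately independent. It is the global conditioning $\{|Q|\ge1\}$ that makes this delicate, and controlling it is what both necessitates the multi-scale analysis of \cite{AizDum21} and degrades the clean exponent $1$ suggested by the heuristic bubble sum down to the stated $c$.
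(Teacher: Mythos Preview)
Your proposal is correct and follows essentially the same route as the paper's outline: start from the exact ratio \eqref{ratio} so that the tree diagram appears as the numerator, reduce the theorem to a lower bound on the conditional intersection size ${\bf E}[\,|Q|\mid|Q|\ge1\,]$ in terms of $B_L(\beta)^c$, and obtain that lower bound by combining the relative monotonicity of $S_2(0,x)\|x\|^{d-2}$ (to secure an abundance of regular scales) with a multi-scale analysis establishing approximate independence of the per-scale intersection increments. You have also correctly identified the decoupling of scales under the global conditioning $\{|Q|\ge1\}$ as the crux, which is precisely where the technical work of \cite{AizDum21} is concentrated and where the exponent degrades from $1$ to some $c\in(0,1)$.
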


From this, arguing through the dichotomy described in section \ref{sec_dichotomy} one gets the following improved version of 
\eqref{moment_Ising} 

%
\begin{proposition}\label{prop:gaussian b}
There exist $c,C>0$ such that for the n.n. ferromagnetic~Ising model on $\Z^4$, every $\beta\le\beta_c$, every $L\le\xi(\beta)$, and  test function $f \in C_0(\R^4)$,  
\be \label{improved_MGF_Ising}
\boxed{ \Big|  \langle \exp\big\{  z T_{f,L} \big \}   \rangle - \exp  \big\{\frac{z^2}{2} 
\langle T_{f,L}^2 \rangle  \big \}   \Big |\  
\leq  \      2^{-4} \ z^4   \exp  \big\{\frac{z^2}{2} \langle T_{|f|,L}^2 \rangle  \big \}      \  \frac{\widetilde R_{|f|,L}  \  }{(\log L)^c} }
\ee
with $\widetilde  R_{f,L}(\beta)  \leq r^d \|f\|_\infty^4 R_{rL}(\beta)$
\end{proposition}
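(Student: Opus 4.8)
The plan is to reuse the moment propagation of Corollary~\ref{Corro_U4}(ii) verbatim, and to supply the only missing ingredient from the improved tree bound of Theorem~\ref{thm:improved tree bound simple}. The derivation of \eqref{moment_Ising} invoked only the structural inequality \eqref{U4_suffices} of Proposition~\ref{prop:U4_suffices} together with the unit-variance normalization of $T_{f,L}$; it is agnostic to any particular estimate on $U_4$, and it already produces the inequality of the proposition with the factor $\widetilde R_{|f|,L}$, bounded definitionally by $r^d\|f\|_\infty^4 R_{rL}$ (with the Messager--Miracle-Sol\'e control of $\langle T_{f,L}^2\rangle$ as in Corollary~\ref{Corro_U4}). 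Consequently the whole statement reduces to a single quantitative input: that for the n.n.\ Ising model on $\Z^4$ one has
\be
R_L(\beta) \ \leq\ \frac{C}{(\log L)^{c}} \qquad (\beta\leq\beta_c,\ L\leq\xi(\beta)),
\ee
the displayed factor $(\log L)^{-c}$ in \eqref{improved_MGF_Ising} recording exactly this decay of $R_{rL}$, which replaces the merely universal bound $R_{rL}\leq 2$ available before.

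To prove this decay I would run two estimates on $R_L$ side by side and retain the stronger one. Write $g_L$ for the amplitude $\|x\|^{d-2}S_2(0,x)$ evaluated at scale $\|x\|\sim L$; by the infrared bound \eqref{infrared} one has $g_L\leq C$. The original tree-diagram bound \eqref{eq:interb}, summed over $x_1,\dots,x_4\in\Lambda(L)$ and normalized by $(\Sigma_L)^2$ exactly as in the derivation of \eqref{IR_bound}, gives in $d=4$ an estimate of the form $R_L\leq C\,g_L^2$. The improved bound \eqref{eq:improved tree bound} carries the extra prefactor, yielding $R_L\leq C\,g_L^2/B_L^{\,c}$, where the bubble $B_L(\beta)=\sum_{\|x\|<L}\langle\sigma_0\sigma_x\rangle_\beta^2$ satisfies, by the relative monotonicity $\|x\|^{d-2}S_2(0,x)\geq c\,g_L$ for $\|x\|\leq L$ (the remark below \eqref{crit_decay}, Theorem~5.6 of \cite{AizDum21}),
\be
B_L \ \geq\ c\,g_L^2\sum_{\|x\|<L}\frac{1}{\|x\|^{4}}\ \geq\ c'\,g_L^2\,\log L .
\ee

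The two bounds are complementary: when $g_L$ is small the original estimate $R_L\leq Cg_L^2$ already decays, whereas when $g_L$ is bounded below the bubble is of order $\log L$ and the improved estimate $R_L\leq Cg_L^2/(g_L^2\log L)^{c}$ decays. Optimizing $\min\{C g_L^2,\ C g_L^2(g_L^2\log L)^{-c}\}$ over the admissible range $0\leq g_L\leq C$ — this is precisely the dichotomy of Section~\ref{sec_dichotomy} — shows that in every case $R_L\leq C(\log L)^{-c'}$ for some $c'>0$, which is the required decay. Substituting this into the propagation of the first paragraph produces \eqref{improved_MGF_Ising}.

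The main obstacle is the scale-coherence both estimates silently assume: the bound $R_L\leq Cg_L^2$ needs the amplitude $\|x\|^{d-2}S_2$ to be comparable on the inner and outer scales of $\Lambda(L)$, and the lower bound $B_L\geq c\,g_L^2\log L$ needs $\|x\|^{d-2}S_2$ not to collapse on a positive fraction of the $\log L$ dyadic scales; were $\|x\|^{d-2}S_2(0,x)$ large on some scales and small on others, both estimates could degrade at once and the crossover would fail. This is gap (1) of Section~\ref{sec_dichotomy}, and it is exactly what the relative monotonicity of $\|x\|^{d-2}S_2$ is designed to exclude, in tandem with the abundance-of-regular-scales argument and the near-independence of the contributions of distinct regular scales to ${\bf E}_L[\,|Q|\mid |Q|\geq 1]$ that already underlies the proof of Theorem~\ref{thm:improved tree bound simple}. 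Uniformity over all $\beta\leq\beta_c$ with $L\leq\xi(\beta)$ then follows since on that range the two-point function retains its critical-type behavior up to scale $L$.
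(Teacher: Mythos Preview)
Your approach matches the paper's sketch exactly: reduce via the moment-propagation argument of Corollary~\ref{Corro_U4} to a bound $R_L \leq C/(\log L)^c$, and derive the latter from the improved tree bound (Theorem~\ref{thm:improved tree bound simple}) through the dichotomy of Section~\ref{sec_dichotomy}. The paper itself gives no more detail than ``arguing through the dichotomy'', so your outline is at the appropriate level.

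There is one step that is looser than you make it sound: the claim that the \emph{original} tree bound already yields $R_L \leq C g_L^2$. The relative monotonicity of $\|x\|^{d-2}S_2(0,x)$ gives only the one-sided control $S_2(0,x)\geq c\,g_L/\|x\|^2$ for $\|x\|\leq L$; it does not force the amplitude at inner scales down to $g_L$. In the tree sum $\sum_u\prod_j S_2(u,x_j)$ the contributions from $u$ near one of the $x_j$ are then governed only by the infrared bound and can dominate the bulk contribution by a factor $\sim 1/g_L$, so that at equidistant points one gets at best $|U_4|/S_4 \lesssim g_L$ rather than $g_L^2$, and no clean $R_L\leq Cg_L^2$ follows for the summed ratio. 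In the branch ``$g_L$ small'' of your $\min\{\cdot,\cdot\}$ neither estimate is then immediately usable.

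You do flag exactly this as the main obstacle and name the right ingredients --- the regular-scales argument and the near-independence of contributions from distinct scales, which are already internal to the proof of Theorem~\ref{thm:improved tree bound simple}. In effect, the quantitative dichotomy does not close on top of the improved tree bound via a standalone $R_L\leq Cg_L^2$; it is instead absorbed into the same multiscale machinery that produced the $B_L^{-c}$ factor. With that reading your proposal is correct and coincides with the paper's line of argument.
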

Let us recall that the variance of  $T_{f,L}$ is bounded above uniformly in $L$ 
\be 
\langle T_{f,L}^2 \rangle = \frac{\sum_{x,y\in \Lambda(rL)}  \langle \sigma_x   \sigma_y \rangle   \, f(x/L) \,  f(y/L) }
 {\sum_{x,y\in \Lambda(L)}   \langle \sigma_x   \sigma_y \rangle  } 
 \ \leq \   r^d  \|f\|_\infty^2  \, .    
\ee 
If the corresponding scaling limit exists then
\be 
\lim_{L\to \infty}  \langle T_{f,L}^2\rangle = \frac{ \int_{\R^4\times \R^4} f(x) f(y) S_2(x,y) dx \, dy }{ \int_{[-1,1]^4\times [-1,1]^4}  
 S_2(x,y) dx \, dy }
\ee 

The claimed gaussianity of any such scaling limit  follows, since  for any test function $f\in C_0(\R^4)$ the right hand side of \eqref{improved_MGF_Ising} tends to zero as $L\to \infty$, while the variance of $T_{f,L}$ converges to a finite limit,  and for all $z\in \mathbb C$ 
\be
\lim_{L\to \infty}  \langle \exp\big\{  z T_{f,L} \big \}   \rangle = 
 \exp  \big\{\frac{z^2}{2} \lim_{L\to \infty} \langle T_{f,L}^2 \rangle  \big\}  \,. 
\ee   

Through the adjustments discussed in section \ref{sec_extension to GS} the above bounds, and the conclusion, extend also to limits of ferromagnetic lattice $\varphi^4$ systems.\\ 

\bigskip 

\section{Some open questions}  

\noindent {\it 1) Scaling limits in $d=3$  dimensions}  

In closing the above summary let us note the strange coincidence that at present $3D$, which is of obvious relevance for statistical mechanics, is the hardest dimension to reach by rigorous methods.  
The effectiveness of the random current representation in shedding light on the model's behavior   in both $d\geq 4$ and $d=2$ dimensions,  leads one to ask whether the stochastic geometric perspective could be of help also in furthering the understanding of the model's non-trivial behavior in  $3D$. 

Among the approaches to $3D$ which have been explored in physics literature is the expansion in $\eps = 4-d$, starting from the upper critical dimension $d=4$~\cite{WilKad72}.   From the opposite direction, more recently the $3D$ Ising model has also been explored through conformal bootstrap~\cite{ElSk_etal}, which is an extension of an approach that is particularly effective in $2D$.

The constructive field theory program has yielded a non-trivial $\varphi^4$ theory over $\R^3$ (c.f.\cite{BryFroSpe82} and references therein).   The construction shows that, focusing on the ultraviolet regime, for  the weakly coupled $\varphi^4_3$  fields the state of zero renormalized-interaction is unstable.  It would be of interest to supplement this by two sets of results: i) boost the results beyond the perturbative regime to the field's critical manifold,
ii) relate the scaling limits of the critical $\varphi^4$ theories to that of the Ising model. 

This challenge beckons for a merger of the non-perturbative methods deployed in the analysis of the Ising / $\varphi^4$ renormalized coupling constants with some guidance from the perturbative studies of soft  $\lambda \varphi^4_3$ fields.   \\[2ex] 

\noindent{\it 2) Logarithmic corrections} \label{sec_PDI}

As we saw, the existence of logarithmic-type  corrections in $4D$ plays a role in the gaussianity of scaling limit.   It will be of interest to obtain sharper control of such corrections, the like of which are also expected to show up in 
the power law scaling relations in $4D$.  \\

The critical exponent bounds listed in Theorem~\ref{thm:sharpness}  were derived through partial differential inequalities (PDI) on the order parameter $M(\beta,\hat h)$ and its derivative $\chi(\beta,\hat h) = \frac{\partial M(\beta,\hat h)}{\partial \hat h} $ (with $\hat h := \beta h$).  Leading examples are: 
\begin{align} 
 \frac{d}{d \beta}  \chi(\beta) &\, \leq \, \|J\|  \chi(\beta)^2    \notag \\ 
\beta \frac{\partial M}{\partial \beta} & \, \leq\  \beta \| J\| \, \, M \cdot \frac{\partial M}{\partial \widehat h} 
\end{align} 
which hold for both percolation and Ising models and 
\be   \label{M_diag}
M   \leq \widehat h \frac{\partial M}{\partial \widehat h}  \ +\     M^{1+\alpha}    \beta \pder{\beta} M  +  M^{2+\alpha} 
\quad \rm{at} \,\, \alpha=\begin{cases}  1 & \rm{Ising} \\ 
   0 & \rm{percolation}   
\end{cases} 
\ee 
\mbox{ } 

It  was also shown that under certain conditions mildly modified  relations  hold also  in the other direction.    
The conditions allowing such reversal are  the convergence of the bubble diagram $\mathcal{B}(\beta_c)< \infty$ for Ising type models, and the triangle diagram   $\mathcal{T}(\beta_c) < \infty$ for percolation.  These are defined by: 
\begin{align} 
\mathcal{B}  &= (\beta \|J|)^2\sum_x  S_2(0,x) S_2(x,0) = (\beta \|J|)^2 \int_{p\in [-\pi,\pi]^d} \widehat S_2(p)^2 \, dp \notag \\[-2ex]  \\ \notag 
\mathcal{T} &= 
\sum_{x,y}  \tau(0,x) \tau(x,y)  \tau(y,0)  = \int_{p\in [-\pi,\pi]^d} \widehat \tau(p)^3 \, dp\,
\end{align}  
 where $\tau(x,y)$ is the probability of connection in a percolation model and $\widehat{F} $ is the Fourier transform of  $F$. 
 
 By such means it was proven that some of these models' critical exponents assume simple (mean field) values in dimensions $d > d_c$ with 
\be
\boxed{d_c = \begin{cases}  4 & \rm{ Ising} \\ 
 6 & \rm {percolation}\,
 \end{cases}}  \,
\ee 

Can the relation expressed by PDI be sharpened enough to yield the logarithmic corrections at the marginal dimensions?  \\  (If experience is any guide, it could  be of help to consider the two models in parallel.) \\ 

Progress on this question may be closely related to the goal of non-perturbative bounds on the \emph{beta function}  which in the renormalization group analysis guides the flow of the scale-dependent renormalized coupling constant. It would be of interest to link in this manner the current rigorous analysis of the scaling limits with the ``renormalization group'' perspective on the subject.   \\[2ex]

\noindent {\it 3) Can  $P(\varphi)_d$ field theory make sense beyond the realm of convergent functional integrals?} 

It is an intriguing observation  that in the diagrammatic layout of relations such as \eqref{eq:interb} and \eqref{M_diag}    
one typically finds cubic diagrams  for percolation in place of quartic diagrams for Ising models.  
This suggests a relation with
a $ \varphi^3$ versus $ \varphi^4$  field theory, and through that lead one to guess that   
that the upper critical dimension for percolation would be $6$.  
While the euclidean functional integral with  the cubic polynomial in place of \eqref{eq:P4} does not converge in any dimension, 
various aspects of this prediction were proven correct (cf. \cite{HeyHof17} and references therein).

One wonders: could there be another perspective on field theory, in which it makes sense beyond the somewhat timid formulation which is outlined in the introduction?  Indeed different proposals, motivated by other observations, continue being made   cf.~\cite{FanKla21}.  

\bigskip

\noindent {\bf Acknowledgments}  I would like to thank Hugo Duminil-Copin for our  fruitful collaboration.  It was advanced through mutual visits to Princeton and Geneva University,  sponsored by a  Princeton-UniGe  partnership grant.  I am also grateful for the earlier successful collaborations on the subject with R. Graham, D. Barsky, R. Fernandez and S. Warzel, and thank J. Shapiro and S. Warzel for useful comments on the initial draft of this article.
 The author's  work reported here was supported in parts by  NSF grants, and the Weston visiting professorship at the Weizmann Institute.

\providecommand{\bysame}{\leavevmode\hbox to3em{\hrulefill}\thinspace}
\providecommand{\MR}{\relax\ifhmode\unskip\space\fi MR }
\providecommand{\MRhref}[2]{
  \href{http://www.ams.org/mathscinet-getitem?mr=#1}{#2}
}
\providecommand{\href}[2]{#2}

\end{document}